\newcommand{\ifConferenceVersion}{\iffalse}
\newcommand{\ifJournalVersion}{\iftrue}
\newcommand{\comment}[1]{}
\newcommand{\JournalProof}[2]{\newcommand{#1}{#2}}
\newcommand{\InConference}[1]{#1}
\newcommand{\InJournal}[1]{}
\newcommand{\JournalProof}[2]{#2}
\newcommand{\InConference}[1]{}
\newcommand{\InJournal}[1]{#1}
\newcommand{\boldm}[1] {\mathversion{bold}#1\mathversion{normal}}
\newtheorem{proposition}{Proposition}
\newtheorem{theorem}{Theorem}
\newtheorem{lemma}[theorem]{Lemma}
	\author{Seth Gilbert}{National University of Singapore, Singapore}{seth.gilbert@comp.nus.edu.sg}{}{}%mandatory, please use full name; only 1 author per \author macro; first two parameters are mandatory, other parameters can be empty.
	\author{Nancy Lynch}{MIT, Cambridge, MA, USA}{lynch@csail.mit.edu}{}{}
	\author{Calvin Newport}{Georgetown University, Washington, DC, USA}{cnewport@cs.georgetown.edu}{}{}
	\author{Dominik Pajak}{MIT, Cambridge, MA, USA}{pajak@csail.mit.edu}{}{}
	\authorrunning{S. Gilbert, N. Lynch, C. Newport and D. Pajak}%mandatory. First: Use abbreviated first/middle names. Second (only in severe cases): Use first author plus 'et al.'
	\subjclass{\ccsdesc[100]{Theory of computation~Design and analysis of algorithms~Distributed algorithms}, \ccsdesc[100]{Networks~Network types~Ad hoc networks}}% mandatory: Please choose ACM 2012 classifications from https://www.acm.org/publications/class-2012 or https://dl.acm.org/ccs/ccs_flat.cfm . E.g., cite as "General and reference $\rightarrow$ General literature" or \ccsdesc[100]{General and reference~General literature}. 
	\keywords{radio networks; broadcast; unreliable links; distributed algorithm; robustness}%mandatory
\begin{document}
	
%	\title{On~Simple~Back-Off  in Complicated Radio Networks}
	\title{On~Simple~Back-Off  in Unreliable Radio Networks\footnote{Definitions and preliminary results concerning the local broadcast problem appeared in the brief announcement~\cite{ba}, published in the Proceedings of 32nd International Symposium on DIStributed Computing (DISC) 2018.}}
	%\author{, Nancy Lynch, Calvin Newport, Dominik Pajak}
	\InJournal{
	\author[1]{Seth Gilbert}
	\author[2]{Nancy Lynch}
	\author[3]{Calvin Newport}
	\author[2]{Dominik Pajak}
	\affil[1]{National University of Singapore, E-mail: seth.gilbert@comp.nus.edu.sg}
	\affil[2]{Massachusetts Institute of Technology, E-mails: \{lynch, pajak\}@csail.mit.edu}
	\affil[3]{Georgetown University, E-mail: cnewport@cs.georgetown.edu} }

	\maketitle
	\newcommand{\Decay}{\textit{Decay}\xspace}
	\newcommand{\Pro}[1]{\mathbf{Pr}\!\left[\,#1\,\right]}
	\newcommand{\Ex}[1]{\mathbf{E}\!\left[\,#1\,\right]}
	\newcommand{\Single}{\ensuremath{\mathsf{Single}}\xspace}   
	\newcommand{\Silence}{\ensuremath{\mathsf{Silence}}\xspace}
	\newcommand{\procname}[0]{\texttt{Uniform}\xspace}
	\newcommand{\globalalgname}[0]{\texttt{RGB}}
	\newcommand{\DDelta}{\dot{\Delta}}
	\newcommand{\etal}{{\it et~al.}}
	\newcommand{\algnameone}[0]{\texttt{RLB}}	
	\newcommand{\algnametwo}[0]{\texttt{FRLB}}	
	\newcommand{\algnamethree}[0]{\texttt{RLBC}}	
	
	\SetKwBlock{Repeat}{repeat}{}
	\SetEndCharOfAlgoLine{}
	
	\begin{abstract}
		In this paper, we study local and global broadcast in the dual graph model, which describes communication in a radio network with both reliable and unreliable links.
		Existing work proved that efficient solutions to these problems are impossible in the dual graph model under standard assumptions.
		In real networks, however, simple back-off strategies tend to perform well for solving these basic communication tasks.
		We address this apparent paradox by introducing a new set of constraints to the dual graph model that better generalize the slow/fast fading
		behavior common in real networks.
		We prove that in the context of these new constraints,
		simple back-off strategies now provide efficient solutions to local and global broadcast in the dual graph model.
		We also precisely characterize how this efficiency degrades as the new constraints are reduced down to non-existent,
		and prove new lower bounds that establish this degradation as near optimal for a large class of natural algorithms. 
		We conclude with an analysis of a more general model where we propose an enhanced back-off algorithm. 
%		We conclude with an investigation of the performance of these strategies when we include additional generality
%		to the model.
		These results provide theoretical foundations for the practical observation that simple back-off algorithms tend to work well
		even amid the complicated link dynamics of real radio networks. 
	\end{abstract}
	
	% dual graph lower bound citations
	% offline adaptive is slow: KLNOR
	% online adaptive is slow; oblivious global is fast with shared random bits packed in message: GhaffariLynchNewport-sub
	% oblicious local is fast if geographic constraints: lynch:2015%
%	\InConference{\vspace*{-4mm}}
	\section{Introduction}
%	\InConference{\vspace*{-3mm}}
	In this paper, we study upper and lower bounds for efficient broadcast in the 
	dual graph radio network 
	model~\cite{ClementiMS04,KLN-podc09,KLNOR,CGKLN-jour,GHLN-disc12,Ghaffari-MS,GhaffariLynchNewport-sub,GKLN14,lynch:2015,ghaffari:2016},
	a dynamic network model that describes wireless communication over both reliable and unreliable links.
	As argued in previous studies of this setting,
	including unpredictable link behavior in theoretical wireless network models is important because
         in real world deployments radio links are often quite dynamic.

	%The goal of the dual graph research program is two-fold: to identify algorithm strategies that enable efficient network operations in the presence of unreliability, and to prove lower bounds for cases where such efficiency is impossible to achieve.
	
	%This paper reexamines some of the more pessimistic lower bounds produced by earlier work in this model.
	\InConference{\vspace*{-2mm}}
	\subparagraph{The Back-Off Paradox.}
	Existing papers~\cite{KLNOR,GhaffariLynchNewport-sub,lynch:2015} proved that it is impossible to solve standard broadcast
	problems efficiently in the dual graph model without the addition of strong extra assumptions (see related work).
	% such as a source of shared randomness
	%or geographic constraints on the network topology.
	In real radio networks, however, which suffer from the type of link dynamics abstracted by the dual graph model, 
	simple back-off strategies tend to perform quite well.
	%
	%In the standard radio network model, by contrast, which assumes only reliable links, 
	%simple strategies that rotate through a set of fixed broadcast probabilities guarantee efficient communication (e.g., the {\em decay}
	%strategy from~\cite{Bar-YehudaGI91}). 
	%Inte
	%
	These dueling realities seem to imply a dispiriting gap between theory and practice:
	basic communication tasks that are easily solved in real networks are impossible when studied
	in abstract models of these networks. 
	%
	%Basic communication tasks that are easily solved in the standard models become impossible once we model
	%the type of unreliability common in real networks. At the same time, however, the back-off strategies used in these
	%real networks function quite similarly to simple ideas that work in the standard models, and work quite well in practice. 
	
	{\em What explains this paradox?} This paper tackles this fundamental question.
	
	As detailed below, we focus our attention on the {\em adversary} entity that decides which unreliable links to include in the network
	topology in each round of an execution in the dual graph model.
	We introduce a new type of adversary with constraints that better generalize the dynamic behavior of real radio links. 
	We then reexamine simple back-off strategies originally introduced in the standard radio network model~\cite{Bar-YehudaGI92} (which has only reliable links),
	and prove that for reasonable parameters,
	these simple strategies {\em now do} guarantee 
	efficient communication in the dual graph model combined with our new, more realistic adversary.
	
	We also detail how this performance degrades toward the existing dual graph lower bounds as the new constraints are
	reduced toward non-existent, and prove lower bounds that establish these bounds to be near tight for a large and natural
	class of back-off strategies.
	Finally, we perform investigations of even more general (and therefore more difficult) variations of
	this new style of adversary that continue to underscore the versatility of simple back-off strategies.
	
	We argue that these results help resolve the back-off paradox described above. 
	When unpredictable link behavior is modeled properly,
	predictable algorithms prove to work surprisingly well.
	\InConference{\vspace*{-2mm}}	
	\subparagraph{The Dual Graph Model.}
	The dual graph model describes a radio network topology with two graphs, $G=(V,E)$ and $G'=(V,E')$,
	where
	$E \subseteq E'$,  $V$ corresponds to the wireless devices, $E$ corresponds to reliable (high quality)
	links, and $E' \setminus E$ corresponds to unreliable (quality varies over time) links.
	In each round, all edges from $E$ are included in the network topology.
	Also included is an additional subset of edges from $E' \setminus E$, chosen by an {\em adversary}.
	This subset can change from round to round.
	Once the topology is set for the round, the model implements the standard communication rules from the classical
	radio network model: a node $u$ receives a message broadcast by its neighbor $v$ in the topology if and only
	if $u$ decides to receive and $v$ is its only neighbor broadcasting in the round.
	
	We emphasize that the abstract models used in the sizable literature studying distributed algorithms in wireless settings 
	do not claim to provide high fidelity representations of real world radio signal communication.
	They instead each capture core dynamics of this setting, enabling the investigation of fundamental algorithmic questions.
	The well-studied radio network model, for example, provides a simple but instructive abstraction of message loss due to collision.
	The dual graph model generalizes this abstraction to also include network topology dynamics. Studying the gaps between
	these two models provides insight into the hardness induced by the types of link quality changes common in real wireless networks.
		
	\InConference{\vspace*{-2mm}}	
	\subparagraph{The Fading Adversary.}
	Existing studies of the dual graph model focused mainly on the information about the algorithm known to the model adversary
	when it makes its edge choices.
	In this paper, we place additional constraints on how these choices are generated.
	
	In more detail, 
	in each round, the adversary independently draws the set of edges from $E' \setminus E$ to add to the topology
	from some probability distribution defined over this set.
	We do not constrain the properties of the distributions selected by the adversary. Indeed, it is perfectly valid for the 
	adversary in a given round to use a point distribution that puts the full probability mass on a single subset,
	giving it full control over its selection for the round. We also assume the algorithm executing in the model has no
	advance knowledge of the distributions used by the adversary. 
	
	We do, however, constrain how often the adversary can change the distribution from which it selects these edge subsets.
	In more detail, we parameterize the model with a {\em stability factor}, $\tau \geq 1$,
	and restrict the adversary to changing the distribution it uses at most once every $\tau$ rounds.
	For $\tau =1$, the adversary can change the distribution in every round,
	and is therefore effectively unconstrained and behaves the same as in the existing dual graph studies.
	On the other extreme, for $\tau=\infty$, the adversary is now quite constrained in that it must draw edges independently from
	the same distribution for the entire execution. 
	As detailed below, we find $\tau \approx \log{\Delta}$, for local neighborhood size $\Delta$,
	to be a key threshold after which efficient communication becomes tractable.
	
	Notice, these constraints do not prevent the adversary from inducing large amounts of changes to the network topology
	from round to round. For non-trivial $\tau$ values, however, they do require changes that are nearby in time to share
	some underlying stochastic structure. This property is inspired by the general way wireless network engineers think about unreliability
	in radio links. 
	In their analytical models of link behavior (used, for example, to analyze modulation or rate selection schemes, or to model signal propagation in simulation),
	engineers often assume that in the short term, 
	changes to link quality come from sources like noise and multi-path effects, which can be approximated by independent draws from
	an underlying distribution (Gaussian distributions are common choices for this purpose).
	Long term changes, by contrast, can come
          from modifications to the network environment itself,
	such as devices moving, which do not necessarily have an obvious stochastic structure,
	but unfold at a slower rate than short term fluctuations.
	
	In our model, the distribution used in a given round captures short term changes,
	while the adversary's arbitrary (but rate-limited) changes to these distributions over time capture  long term changes.
	Because these general types of changes are sometimes labeled {\em short/fast fading} in the systems literature (e.g.,~\cite{fading}),
	we call our new adversary a {\em fading adversary}.
	
	% alon:1991
	% km

	\begin{table}\centering
%		\hspace*{-2mm}
		\begin{tabular}{lllll}\toprule[1pt]
			\textbf{Problem}& \textbf{Time} & \textbf{Prob.}  & \textbf{Remarks} & \textbf{Ref.} \\\midrule[1pt]
			\multirow{4}{*}{Local broadcast} & $O\left(\frac{\Delta^{1/\bar{\tau}} \cdot \bar{\tau}^2}{\log\Delta}\cdot  \log{(1/\epsilon)}\right)$ & $1-\epsilon$ & $\bar{\tau} = \min\{\tau,\log\Delta\}$ & Thm~\ref{thm:mult_receivers}\\ 
			& $\Omega\left( \frac{\Delta^{1/\tau} \tau}{\log{\Delta}}\right)$ & $\frac12$ & $\tau \in O(\log\Delta)$ & Thm~\ref{thm:lower} \\[1mm]
			& $\Omega\left( \frac{\Delta^{1/\tau} \tau^2}{\log{\Delta}}\right)$ & $\frac12$ & $\tau \in O(\log\Delta/\log\log\Delta)$ & Thm~\ref{thm:lower_2} 
			\\\midrule[1pt]
			\multirow{4}{*}{Global broadcast} & $O\left((D+ \log (n/\epsilon))\cdot  \frac{\Delta^{1/\bar{\tau}} \bar{\tau}^2}{\log\Delta}\right)$ & $1-\epsilon$ & $\bar{\tau} = \min\{\tau,\log\Delta\}$  & Thm~\ref{thm:global_upper}\\
			& $\Omega\left(D\cdot  \frac{\Delta^{1/\tau} \tau}{\log\Delta}\right)$ & $\frac12$ & $\tau \in O(\log\Delta)$ & Thm~\ref{thm:global_lower}\\		
			& $\Omega\left(D\cdot  \frac{\Delta^{1/\tau} \tau^2}{\log\Delta}\right)$ & $\frac12$  & $\tau \in O(\log\Delta/\log\log\Delta)$ & Thm~\ref{thm:global_lower}\\
			\bottomrule[1pt]
		\end{tabular}
		\caption{A summary of the upper and lower bounds proved in this paper, along with pointers to the corresponding theorems.
			In the following, $n$ is the network size, $\Delta \leq n$
			is an upper bound on local neighborhood size,  $D$ is the (reliable link) network diameter, 
			and $\tau$ is the stability factor constraining the adversary.}
		\label{fig:results}
	\end{table}

%	\InConference{\vspace*{-2mm}}
	\subparagraph{Our Results and Related Work.}
	In this paper, we study both {\em local} and {\em global} broadcast.
	The local version of this problems assumes some subset of devices in a dual graph network are provided broadcast messages.
	The problem is solved once each receiver that neighbors a broadcaster in $E$ receives at least one message.
	The global version assumes a single broadcaster starts with a message that it must disseminate to the entire network.
	Below we summarize the relevant related work on these problems, and the new bounds proved in this paper. We conclude
	with a discussion of the key ideas behind these new results.
	
	{\em Related Work.} In the standard radio network model, which is equivalent to the dual graph model with $E=E'$, 
	Bar-Yehuda et~al.~\cite{Bar-YehudaGI92} demonstrate that a simple randomized back-off strategy called \Decay
	solves  local broadcast in $O(\log^2{n})$ rounds and global broadcast in $O(D\log{n} + \log^2{n})$ rounds,
	where $n=|V|$ is the network size and $D$ is the diameter of $G$.
	Both results hold with high probability in $n$, and were subsequently proved 
	to be optimal or near optimal\footnote{The broadcast algorithm from~\cite{Bar-YehudaGI92} requires
		$O(D\log{n}+\log^2{n})$ rounds, whereas the corresponding lower bound is $\Omega(D\log{(n/D)} + \log^2{n})$.
		This gap was subsequently closed by a tighter analysis of a natural variation of the simple \Decay
		strategy used in~\cite{Bar-YehudaGI92} }~\cite{alon:1991,km, newport:2014b}.
	
	In~\cite{KLN-podc09,KLNOR}, it is proved that global broadcast (with constant diameter),
	and local broadcast require $\Omega(n)$ rounds to solve with reasonable probability
	in the dual graph model with an offline adaptive adversary controlling the unreliable edge selection,
	while~\cite{GhaffariLynchNewport-sub} proves that $\Omega(n/\log{n})$ rounds are necessary for both problems with an online adaptive adversary.
	As also proved in~\cite{GhaffariLynchNewport-sub}:
	even with the weaker oblivious adversary,
	local broadcast requires $\Omega(\sqrt{n}/\log{n})$ rounds,
	whereas global broadcast {\em can} be solved in an efficient $O(D\log{(n/D)} + \log^2{n})$ rounds,
	but only if the broadcast message is sufficiently large to contain enough shared random bits for all nodes to use throughout the execution.
	In~\cite{lynch:2015},
	an efficient algorithm for local broadcast with an oblivious adversary is provided given the assumption of geographic constraints on the dual graphs, enabling complicated clustering strategies that allow nearby devices to coordinate randomness.
	
	{\em New Results.}
	In this paper, we turn our attention to local and global broadcast in the dual graph model with a fading adversary constrained by some
	stability factor $\tau$ (unknown to the algorithm). 
	We start by considering upper bounds for a simple back-off style strategy inspired by the {\em decay} routine from~\cite{Bar-YehudaGI92}.
	This routine has broadcasters simply cycle through a fixed set of broadcast probabilities in a synchronized manner (all broadcasters use the same probability in the same round). 
	We prove that this strategy solves local broadcast with probability at least $1-\epsilon$,
	in $O\left(\frac{\Delta^{1/\bar{\tau}} \cdot \bar{\tau}^2}{\log\Delta}\cdot  \log{(1/\epsilon)}\right)$ rounds,
	where $\Delta$ is an upper bound on local neighborhood size, and $\bar{\tau} = \min\{\tau,\log\Delta\}$.
	
	Notice, for $\tau \geq \log{\Delta}$ this bound simplifies to $O(\log{\Delta}\log{(1/\epsilon)})$, matching the optimal results from the standard radio network 
	model.\footnote{To make it match exactly, set $\Delta=n$ and $\epsilon = 1/n$, as is often assumed in this prior work.}
	This performance, however, degrades toward the polynomial lower bounds from the existing dual graph literature
	as $\tau$ reduces from
	$\log{\Delta}$ toward a minimum value of $1$. 
	We show this degradation to be near optimal by proving that {\em any}
	local broadcast algorithm that uses a fixed sequence of broadcast probabilities
	requires $\Omega(\Delta^{1/\tau} \tau/\log{\Delta})$ rounds to solve the problem with probability $1/2$ for a given $\tau$.
	For $\tau \in O(\log\Delta/\log\log\Delta)$ , we refine this bound further to $\Omega(\Delta^{1/\tau} \tau^2 /\log{\Delta})$, matching
	our upper bound within constant factors. 
	
	We next turn our attention to global broadcast. We consider a straightforward global broadcast algorithm that uses our local broadcast strategy as a subroutine. We prove that this algorithm solves global broadcast with probability at least $1-\epsilon$, in 
	$O(D+ \log (n/\epsilon))\cdot  \Delta^{1/\bar{\tau}} \bar{\tau}^2/\log\Delta)$ rounds,
	where $D$ is the diameter of $G$, and  $\bar{\tau} = \min\{\tau,\log\Delta\}$.
	Notice, for $\tau \geq \log{\Delta}$ this bound reduces to $O(D\log{\Delta} + \log{\Delta}\log{(1/\epsilon)})$, matching
	the near optimal result from the standard radio network model.
	As with local broadcast, we also prove the degradation of this performance as $\tau$ shrinks to be near optimal.
	(See Table~\ref{fig:results} for a summary of these results and pointers to where they are proved in this paper.)
	
	%We conclude with a preliminary investigation of the performance of these strategies when we allow some bounded
	%amount of correlation between the distributions selected by the adversary within a given stable period of $\tau$ rounds.
	
	Finally we consider the generalized model when we allow correlation between the distributions selected by the adversary within a given stable period of $\tau$ rounds. It turns out that in the case of arbitrary correlations any simple algorithm needs time $\Omega(\sqrt{\Delta}/l)$ if it uses only cycles of length $l$. In particular any our previous algorithms would require time $\Omega(\sqrt{\Delta}/\log \Delta)$ in the model with arbitrary correlations. The adversary construction in this lower bound requires large changes in the degree of a node in successive steps. Such changes are unlikely in real networks thus we propose a restricted version of the adversary. We assume that the expected change in the degree of any node can be at most $\Delta^{1/(\bar{\tau}(1- o(1))}$. With such restriction it is again possible to propose a simple, but slightly enhanced, back-off strategy (with a short cycle of probabilities) that works efficiently in time $O\left(\Delta^{1/\bar{\tau}} \cdot \bar{\tau}\cdot \log{(1/\epsilon)}\right)$. 
	
	{\em Technique Discussion.} Simple back-off strategies can be understood as experimenting 
	with different {\em guesses} at the amount of contention afflicting a given receiver.
	If the network topology is static, this contention is fixed, therefore so is the {\em right} guess.
	A simple strategy cycling through a reasonable set of guesses will soon arrive at this right guess---giving the
	message a good chance of propagating.
	
	The existing lower bounds in the dual graph setting deploy
	an adversary that changes the topology in each round to specifically thwart that round's guess.
	In this way, the algorithm never has the right guess for the current round so its probability of progress is diminished. 
	The fading adversary, by contrast, is prevented from adopting this degenerate behavior because it is required to stick with the same
	distribution for $\tau$ consecutive rounds. An important analysis at the core of our upper bounds
	reveals that any fixed distribution 
	will be associated with a right guess defined with respect to the details of that distribution.
	If $\tau$ is sufficiently large, our algorithms are able to experiment with enough guesses to hit on this right guess before the adversary 
	is able to change the distribution.
	
	More generally speaking, the difficulty of broadcast in the previous dual graph studies was {\em not} due to the ability
	of the topology to change dramatically from round to round (which can happen in practice),
	but instead due to the model's ability to precisely tune these changes to thwart the algorithm (a behavior that is hard to motivate).
	The dual graph model with the fading adversary preserves the former (realistic) behavior while minimizing the latter (unrealistic) behavior.

\InConference{\vspace*{-2mm}}
\section{Model and Problem}
\InConference{\vspace*{-1mm}}
We study the dual graph model of unreliable radio networks.
This model describes the network topology with two graphs $G=(V,E)$ and $G' = (V,E')$, where $E \subseteq E'$.
The $n=|V|$ vertices in $V$ correspond to the wireless devices in the network, which we call {\em nodes} in the following.
The edge in $E$ describe reliable links (which maintain a consistently high quality), while the edges in $E' \setminus E$
describe unreliable links (which have quality that can vary over time). 
For a given dual graph, we use $\Delta$ to describe the maximum degree in $G'$,
and $D$ to describe the diameter of $G$.

Time proceeds in synchronous rounds that we label $1,2,3...$
For each round $r\geq 1$,
the network topology is described by $G_r = (V,E_r)$,
where $E_r$ contains all edges in $E$ plus a subset of the edges in $E' \setminus E$.
The subset of edges from $E' \setminus E$ are selected by an {\em adversary}.
The graph $G_r$ can be interpreted as describing the high quality links during round $r$.
That is, if $\{u,v\} \in E_r$,
this mean the link between $u$ and $v$ is strong enough
that $u$ could deliver a message to $v$,
or garble another message being sent to $v$ at the same time.

With the topology $G_r$ established for the round,
behavior proceeds as in the standard radio network model.
That is, each node $u\in V$ can decide to transmit or receive.
If $u$ transmits, it learns nothing about other messages transmitted in the round (i.e., the radios are half-duplex).
If $u$ receives and exactly one neighbor $v$ of $u$ in $E_r$ transmits,
then $u$ receives $v$'s message.
If $u$ receives and two or more neighbors in $E_r$ transmit, $u$ receives nothing as the messages are lost due to collision.
If $u$ receives and no neighbor transmits, $u$ also receives nothing.
We assume $u$ does not have collision detection, meaning it cannot distinguish between these last two cases.
\InConference{\vspace*{-2mm}}
\subparagraph{The Fading Adversary.}
A key assumption in studying the dual graph model are the constraints placed on the adversary that selects the unreliable
edges to include in the network topology in each round.
In this paper, we study a new set of constraints inspired by real network behavior.
In more detail, we parameterize the adversary with a {\em stability factor} that we
represent with an integer $\tau \geq 1$.
In each round, the adversary must draw the subset of edges (if any) from $E' \setminus E$ to include
in the topology from a distribution defined over these edges.
The adversary selects which distributions it uses.
Indeed, we assume it is {\em adaptive} in the sense that it can wait until the beginning of a given round
before deciding the distribution it will use in that round, basing its decision on the history of the nodes' transmit/receive behavior
up to this point, including the previous messages they send, but not including knowledge of the nodes' private random bits.

The adversary is constrained, however, in that it can change this distribution at most once every $\tau$ rounds.
On one extreme, if $\tau=1$, it can change the distribution in every round and is effectively unconstrained in its choices.
On the other other extreme, if $\tau=\infty$, it must stick with the same distribution for every round.
For most of this paper, we assume the draws from these distributions are independent in each round. 
Toward the end, however, we briefly discuss what happens when we generalize the model to allow more correlations.

As detailed in the introduction, because these constraints roughly approximate the fast/slow fading behavior
common in the study of real wireless networks, we call a dual graph adversary constrained in this manner
a {\em fading adversary}.

% odl way of talking abotu these distributions -- presving here in case there's something key I'm missing in my new way -- Cal
%In each round, a subset of the edges are ``activated". In any given round, two nodes are neighbors only if they are connected by an active edge. Every edge in $E$ is always active. In our model, the adversary defines a probability distribution according to which the edges from $E'\setminus E$ are activated. It can change the distribution at most once every $\tau$ steps, where $\tau$ is a parameter of the model. If it choses to define a new distribution in step $T$, it defines a probability distribution $\mathcal{E}$ over all sequences of subsets of $E'$ with the assumption that each element of the sequence contains all the elements from $E$ and some (potentially empty) subset of $E' \setminus E$. Then a sequence $(E_{T},E_{T+1},\dots)$ is chosen according to $\mathcal{E}$ and in step $i$ the communication graph is $G_i = (V,E_i)$, for $i = T,T+1,\dots$. The adversary is allowed to use the current state of the algorithm to define the next distribution. More precisely in each round the adversary has access to the execution history, local states of the stations, but does not know the nodes' random choices for the round. Such an adversary is sometimes referred to as \emph{online adaptive adversary} or  \emph{weakly adaptive adversary}.

\InConference{\vspace*{-2mm}}
\subparagraph{Problem.}
In this paper, we study both the {\em local} and {\em global} broadcast problems.
The local broadcast problem assumes a set $B \subseteq V$ of nodes are provided with a message to broadcast. Each node can receive a unique message.
Let $R\subseteq V$ be the set of nodes in $V$ that neighbor at least one node in $B$ in $E$.
The problem is solved once every node in $R$ has received at least one message from a node in $B$.
We assume all nodes in $B$ start the execution during round $1$, but do not require that $B$ and $R$ are disjoint (i.e., broadcasters can also be receivers).
The global broadcast problem, by contrast, assumes a single source node in $V$ is provided a broadcast message during round $1$.
The problem is solved once all nodes have received this message. 
Notice, local broadcast solutions are often used as subroutines to help solve global broadcast.
\InConference{\vspace*{-2mm}}
\subparagraph{Uniform Algorithms.}
The broadcast upper and lower bounds we study in this paper focus on {\em uniform algorithms}, which require nodes to make their probabilistic 
transmission decisions according to a predetermined sequence of broadcast probabilities
that we express as a repeating cycle, $(p_1,p_2, ..., p_k)$ of $k$ probabilities in synchrony.
 In studying global broadcast, we assume that on first receiving a message,
 a node can wait to start making probabilistic transmission decisions until the cycle resets.
We assume these probabilities can depend on $n$, $\Delta$ and $\tau$ (or worst-case bounds on these values).

In uniform algorithms in the model with fading adversary an important parameter of any node $v$ is its {\em effective degree} in step $t$ denoted by $d_t(v)$ and defined as the number of nodes $w$ such that $(v,w) \in E_t$ and $w$ has a message to transmit (i.e., will participate in step $t$). 

As mentioned in the introduction,
uniform algorithms, such as the {\em decay} strategy from~\cite{Bar-YehudaGI92},
solve local and global broadcast with optimal efficiency in the standard radio network model.
A major focus of this paper is to prove that they work well in the dual graph model as well,
if we assume a fading adversary with a reasonable stability factor.

The fact that our lower bounds assume the algorithms are uniform technically weaken the results,
as there might be non-uniform strategies that work better. 
In the standard radio network model, however, this does not prove to be the case:
uniform algorithms for local and global broadcast match lower bounds that hold for all algorithms (c.f., discussion in~\cite{newport:2014b}).

\InConference{\vspace*{-3mm}}
\section{Local broadcast}
\label{sec:local}
\InConference{\vspace*{-1mm}}
We begin by studying upper and lower bounds for the local broadcast problem.
Our upper bound performs efficiently once the stability factor $\tau$ reaches a threshold of $\log{\Delta}$.
As $\tau$ decreases toward a minimum value of $1$, this efficiency degrades rapidly.
Our lower bounds capture that this degradation for small $\tau$ is unavoidable for uniform algorithms. In the following we use the notation $\bar{\tau} =  \min\{\tau,\lceil\log\Delta\rceil\}$. By $\log n$ we will always denote logarithm at base $2$ and by $\ln n$ the natural logarithm.
\InConference{\vspace*{-3mm}}
\subsection{Upper Bound}	
\InConference{\vspace*{-1mm}}
All uniform local broadcast algorithms behave in the same manner: the nodes in $B$ repeatedly broadcast according to 
some fixed cycle of $k$ broadcast probabilities. We formalize this strategy with algorithm $\algnameone$ (Robust Local Broadcast) described below (we break out \procname into its own procedure as we later
use it in our improved $\algnametwo$ local broadcast algorithm as well):
%Consider the following modification of \Decay for any sequence of probabilities $p_1,\dots,p_k$:
%Any uniform local broadcast algorithm repeats the cycle over the predefined list of probabilities. Such algorithm can be defined by specifying the length of the cycle $k$, number of repeats $r$ and the probabilities $(p_1,p_2,\dots,p_k)$.
%\begin{algorithm}[h]
%	\TitleOfAlgo{\algname($r,k,p_1,p_2,\dots,p_k$)}
%%	Wait until receiving a message\\
%	\Repeat($r$ times){
%		\procname$(k,p_1,p_2,\dots,p_k)$ }
%\end{algorithm}
\InJournal{\hspace*{-3mm}}
\InConference{\hspace*{-32mm}}
\begin{minipage}[t]{0.645\textwidth}
	\vspace{0pt}  
\begin{procedure}[H]
	\TitleOfAlgo{$\procname(k,p_1,p_2,\dots,p_k)$}	
	\For{$i = 1,2,\dots,k$}{
		\uIf{has message}{
			with probability $p_i$ \texttt{Transmit} otherwise \texttt{Listen} 
			%	$rand \gets \texttt{random(0,1)}$ \\
			%	\uIf{$rand \leq p_i$}{\texttt{Transmit} \tcp*{transmit with probability $p_i$}}
			%	\Else{\texttt{Listen} \tcp*{listen with probability $1-p_i$}}		
		}\lElse{\texttt{Listen} \tcp*[h]{without a message always listen}}}
	\label{alg:procedure}
\end{procedure}
\end{minipage}%
\hspace*{0mm}
\begin{minipage}[t]{0.38\textwidth}
%	\vspace{2mm}
	\begin{algorithm}[H]
	\TitleOfAlgo{\algnameone$(r,\bar{\tau})$}
	\lFor{$i\leftarrow 1$ \KwTo $\bar{\tau}$}{$p_i \gets \Delta^{-i/\bar{\tau}}$}
	%	$\algname(r,\bar{\tau},p_1,\dots,p_{\bar{\tau}})$
	\Repeat($r$ times){
		\procname$(\bar{\tau},p_1,p_2,\dots,p_{\bar{\tau}})$ }
\end{algorithm}
\end{minipage}

Before we prove the complexity of $\algnameone$ we will show two useful properties of any uniform algorithm. Let $R_t^{(v)}$ denote the event  that node $v$ receives a message from some neighbor in step $t$.

\begin{lemma}
	\label{lem:prosing}
	For any uniform algorithm and any node $v$ and step $t$ if $d_t(v) > 0$ and the algorithm uses in step $t$ probability $p \leq 1/2$, then $\Pro{R_{t}^{(v)}} \geq \frac{p \cdot d_t(v)}{(2 e)^{p \cdot d_t(v)}}$.
%	\begin{enumerate}
%		\item 
%		\item $\Pro{R_{t}^{(v)}} \leq \frac{p \cdot d_t(v)}{(1-p)e^{p \cdot d_t(v)}}$
%	\end{enumerate}
\end{lemma}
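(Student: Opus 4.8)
The plan is to decompose the receive event $R_t^{(v)}$ into the two independent conditions that the radio model requires for a successful reception, and then bound each factor separately. Writing $d = d_t(v)$ for brevity, node $v$ hears a message in step $t$ exactly when (i) $v$ is listening rather than transmitting, and (ii) exactly one of the $d$ active neighbors of $v$ in $E_t$ transmits while the other $d-1$ stay silent. Since each node flips its own coin and condition (i) depends only on $v$'s coin (or is deterministic if $v$ has no message) while condition (ii) depends only on the neighbors' coins, the two events are independent, so $\Pro{R_t^{(v)}}$ factors as the product of their probabilities; this gives a lower bound because the intersection of (i) and (ii) is contained in $R_t^{(v)}$.

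For the first factor I would observe that, whether or not $v$ itself carries a message, the probability that $v$ listens is at least $1-p$ (it equals $1$ when $v$ has no message and $1-p$ when $v$ broadcasts with probability $p$). For the second factor, each of the $d$ active neighbors transmits independently with probability $p$, so the probability that exactly one does is $d\,p\,(1-p)^{d-1}$. Multiplying the two yields the uniform lower bound
\[
\Pro{R_t^{(v)}} \ \geq\ (1-p)\cdot d\,p\,(1-p)^{d-1} \ =\ d\,p\,(1-p)^{d}.
\]

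The remaining work is purely analytic: I must show $(1-p)^d \geq (2e)^{-pd}$ for $p \leq 1/2$. The key inequality is $1-p \geq 4^{-p}$ on $[0,\tfrac12]$, which follows from convexity of $p \mapsto 4^{-p}$ — the convex curve lies below its chord through the endpoints $(0,1)$ and $(\tfrac12,\tfrac12)$, and that chord is exactly the line $1-p$. Raising to the $d$-th power gives $(1-p)^d \geq 4^{-pd}$, and since $4^{-pd} = (2e)^{-pd}\,(e/2)^{pd}$ with $e/2 > 1$, we get $4^{-pd} \geq (2e)^{-pd}$. Substituting back yields $\Pro{R_t^{(v)}} \geq d\,p\,(2e)^{-pd} = \frac{p\, d_t(v)}{(2e)^{p\, d_t(v)}}$, as required.

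The only place that requires care — and the main, if minor, obstacle — is pinning down the exponential inequality with the right constant: a cruder bound such as $1-p \geq e^{-p/(1-p)}$ or $(1-p)^{d-1} \geq e^{-p(d-1)}$ loses too much and does not reach base $2e$. The chord/convexity estimate $1-p \geq 4^{-p}$ is exactly what makes the base $2e$ come out cleanly, since $\ln 4 = 2\ln 2 \leq 1+\ln 2 = \ln(2e)$. I would also double-check the bookkeeping of the listening factor to make sure the bound $d\,p\,(1-p)^d$ holds uniformly regardless of whether $v$ is itself a broadcaster.
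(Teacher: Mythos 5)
Your proof is correct and follows essentially the same route as the paper: both lower-bound $\Pro{R_t^{(v)}}$ by $p\,d_t(v)\,(1-p)^{d_t(v)}$ (accounting for $v$ staying silent and exactly one neighbor transmitting) and then reduce to the elementary fact that $1-p \geq (2e)^{-p}$ for $p \leq 1/2$. The only cosmetic difference is the final analytic step — you use the chord/convexity bound $1-p \geq 4^{-p} \geq (2e)^{-p}$, whereas the paper writes $(1-p)^{d} = \bigl((1-p)^{1/p-1}(1-p)\bigr)^{pd}$ and applies $(1-1/x)^{x-1} \geq e^{-1}$ together with $1-p \geq 1/2$ — and both yield the same constant.
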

\begin{proof}
	For this to happen exactly one among $d_t(v)$ neighbors of $v$ has to transmit and $v$ must not transmit. Node $v$ does not transmit with probability $1-p$ if it has the message and clearly with probability $1$ if it has the message. Denote by $\alpha = p \cdot d_t(v)$. We have
	\begin{align*}
	\Pro{R_{t}^{(v)}} &\geq  pd_t(v) \cdot (1-p)^{d_t(v)} = \alpha \cdot \left(1-\frac{\alpha}{d_t(v)}\right)^{d_t(v)} \\&  = \alpha \left(\left(1-\frac{\alpha}{d_i(v)}\right)^{d_t(v)/\alpha - 1} \cdot (1-p) \right)^{\alpha}  \geq \alpha(e^{-1} (1-p))^{\alpha} \geq  \frac{\alpha}{(2e)^{\alpha}}.
	\end{align*}
%	which shows $(1)$. To obtain $(2)$ we observe that:
%	
%	\begin{align*}
%	\Pro{R_{t}^{(v)}} &\leq  pd_t(v) \cdot (1-p)^{d_t(v)-1} = \frac{\alpha}{1-p} \cdot \left(1-\frac{\alpha}{d_t(v)}\right)^{d_t(v)} \leq \frac{\alpha}{(1-p)e^{\alpha}}.
%	\end{align*}
%	\[
%	\alpha \cdot \left(1-\frac{\alpha}{d_t(v)}\right)^{d_t(v)} \leq \frac{\alpha}{e^{\alpha}}.
%	\]
\end{proof}

\begin{lemma}
	\label{lem:interval}
	For any uniform algorithm, node $v$ and step $t$ if $d_t(v) > 0$:
	\[
	\Pro{R_{t}^{(v)} \mid d_t(v) \in [d_1,d_2]} \geq \min\left\{\Pro{R_{t}^{(v)} \mid d_t(v) = d_1},\Pro{R_{t}^{(v)} \mid d_t(v) = d_2}\right\}.
	\]
\end{lemma}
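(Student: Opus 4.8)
The plan is to collapse the whole statement onto a single one-variable function $f(d) := \Pro{R_{t}^{(v)} \mid d_t(v) = d}$ and to prove that this function is unimodal in $d$. First I would write $f(d)$ explicitly. In a uniform algorithm, step $t$ uses one fixed probability $p$, and reception at $v$ requires that exactly one of its $d$ active neighbors transmits while $v$ itself stays silent. Because $v$'s own coin flip is independent of its neighbors' flips and of the value $d_t(v)$, the probability that $v$ stays silent is a constant $q \in \{\,1-p,\,1\,\}$ (depending only on whether $v$ holds a message, not on $d$), so
\[
f(d) = q \cdot d\, p\,(1-p)^{d-1}.
\]

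Next I would note that conditioning on the event $d_t(v) \in [d_1,d_2]$ produces nothing more than a convex combination of the per-value reception probabilities. Writing $w_d = \Pro{d_t(v) = d \mid d_t(v)\in[d_1,d_2]}$ for the integer values $d$ in $[d_1,d_2]$, these weights are nonnegative and sum to one, and
\[
\Pro{R_{t}^{(v)} \mid d_t(v)\in[d_1,d_2]} = \sum_{d=d_1}^{d_2} w_d\, f(d) \;\ge\; \min_{d_1 \le d \le d_2} f(d).
\]
Hence it suffices to show that the minimum of $f$ over $[d_1,d_2]$ is attained at one of the two endpoints.

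The core step is a ratio test for unimodality. I would compute
\[
\frac{f(d+1)}{f(d)} = (1-p)\Bigl(1 + \tfrac1d\Bigr),
\]
which is strictly decreasing in $d$ and equals $1$ at $d = (1-p)/p$. Thus $f$ first increases and then decreases, i.e.\ it is unimodal on the positive integers. For a unimodal sequence the minimum over any contiguous block $[d_1,d_2]$ is necessarily attained at one of the endpoints, so $\min_{d_1\le d\le d_2} f(d) = \min\{\,f(d_1),\,f(d_2)\,\}$, which is exactly the right-hand side of the claim.

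I expect the only subtle point to be making the convex-combination reduction airtight: one must argue that $v$'s transmission decision is independent of $d_t(v)$, so that the silence factor $q$ is unaffected by the conditioning, and that conditioning on an interval of the effective degree genuinely yields a mixture of the $f(d)$'s. Once that reduction is in place, the unimodality follows immediately from the monotone ratio, and the endpoint-minimization is a standard fact about unimodal sequences.
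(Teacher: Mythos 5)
Your proof is correct and follows essentially the same route as the paper's: both reduce the conditional probability to a (convex combination of the) per-degree reception probability and show that $d \mapsto d\,p\,(1-p)^{d-1}$ is unimodal in $d$, so its minimum over an interval is attained at an endpoint. The only difference is that you establish unimodality via a discrete ratio test whereas the paper differentiates the expression as a function of a real variable; your write-up also makes the convex-combination step explicit where the paper leaves it implicit.
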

\begin{proof}
	If the algorithm uses probability $p$ in step $t$ then $\Pro{R_{t}^{(v)}} = p d_t(v) (1-p)^{d_t(v)}$. Seeing this expression as a function of $d_t(v)$ we can compute the derivative and obtain that this function has a single maximum in $d_t(v) = 1/(\ln(1/(1-p)))$. Hence if we restrict $d_t(v)$ to be within a certain interval, then value of the function is lower bounded by the minimum at the endpoints of the interval. 
\end{proof}

Our upper bound analysis leverages the following useful lemma which can be shown by induction on $n$ (the left side is also known as the Weierstrass Product Inequality):
%\InConference{\vspace*{-2mm}}
\begin{lemma}
	\label{lem:wpi}
	For any $x_1,x_2,\dots,x_n$ such that $0\leq x_i \leq 1$:
	\[
	1 - \sum_{i=1}^n x_i \leq \prod_{i=1}^n\left(1- x_i\right) \leq 1- \sum_{i=1}^{n}x_i + \sum_{1\leq i<j\leq n}x_i x_j.
	\]
\end{lemma}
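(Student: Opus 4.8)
The plan is to prove both inequalities simultaneously by induction on $n$, exactly as the statement suggests. Write $P_n = \prod_{i=1}^n (1-x_i)$ for the product, $L_n = 1 - \sum_{i=1}^n x_i$ for the claimed lower bound, and $U_n = 1 - \sum_{i=1}^n x_i + \sum_{1\le i<j\le n} x_i x_j$ for the claimed upper bound, so the goal is $L_n \le P_n \le U_n$. The base case $n=1$ is immediate: the pairwise sum in $U_1$ is empty, so $L_1 = U_1 = 1-x_1 = P_1$ and all three quantities coincide.

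For the inductive step, the single structural fact I would exploit is that $P_{n+1} = P_n\,(1-x_{n+1})$ and that the factor $1-x_{n+1}$ is nonnegative because $x_{n+1}\le 1$. Multiplying an inequality through by a nonnegative quantity preserves its direction, so the inductive hypothesis $L_n \le P_n$ gives $P_{n+1} \ge L_n(1-x_{n+1})$. A short expansion shows
\[
L_n(1-x_{n+1}) = 1 - \sum_{i=1}^{n+1} x_i + x_{n+1}\sum_{i=1}^n x_i \ge L_{n+1},
\]
where the last step uses that the extra term $x_{n+1}\sum_{i=1}^n x_i$ is nonnegative (all $x_i \ge 0$). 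This establishes the lower bound $L_{n+1}\le P_{n+1}$.

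The upper bound is handled the same way. From $P_n \le U_n$ and $1-x_{n+1}\ge 0$ we obtain $P_{n+1}\le U_n(1-x_{n+1})$, so it suffices to check $U_n(1-x_{n+1}) \le U_{n+1}$. The key bookkeeping observation is that passing from $U_n$ to $U_{n+1}$ subtracts $x_{n+1}$ from the linear part and adds exactly the cross terms $x_{n+1}\sum_{i=1}^n x_i$ to the quadratic part. Expanding $U_n(1-x_{n+1})$ and comparing term by term, the difference works out to
\[
U_{n+1} - U_n(1-x_{n+1}) = x_{n+1}\sum_{1\le i<j\le n} x_i x_j \ge 0,
\]
again nonnegative since every $x_i\ge 0$. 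Hence $P_{n+1}\le U_{n+1}$, closing the induction.

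There is essentially no analytic difficulty here; the whole argument is driven by the product recursion $P_{n+1}=P_n(1-x_{n+1})$ together with the two sign conditions $x_i\ge 0$ and $x_i\le 1$. The step I would flag as the main obstacle is simply making sure the inequality directions are preserved when multiplying by $1-x_{n+1}$: this is precisely where the hypothesis $x_{n+1}\le 1$ is used, and it is also why the correction terms ($x_{n+1}\sum_i x_i$ for the lower bound and $x_{n+1}\sum_{i<j}x_i x_j$ for the upper bound) come out with the right sign. The only other thing to watch is keeping the combinatorial accounting straight of which pairwise products are newly introduced in passing from $U_n$ to $U_{n+1}$.
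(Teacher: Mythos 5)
Your proof is correct, and it follows exactly the route the paper indicates: the paper gives no written proof for this lemma, stating only that it "can be shown by induction on $n$," which is precisely the induction you carry out (with both bounds closed via the product recursion $P_{n+1}=P_n(1-x_{n+1})$ and the sign conditions $0\le x_i\le 1$).
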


To begin our analysis, we focus on the behavior of our algorithm with respect to a single receiver
when we use the transmit probability sequence $p_1, p_2, ..., p_{\bar{\tau}}$,
where $\bar{\tau} = \min\{\tau,\lceil\log\Delta\rceil\}$, and $p_i = \Delta^{-i/\bar{\tau}}$.

%\InConference{\vspace*{-2mm}}
\begin{lemma}
	\label{lem:upper_1}
	Fix any receiver $u \in R$ and error bound $\epsilon > 0$. It follows:
	\algnameone$(2 \lceil \ln (1/\epsilon) \rceil\cdot \lceil 4e\cdot \Delta^{1/\bar{\tau}} \rceil,\bar{\tau})$ delivers a message to $u$ with probability at least $1-\epsilon$ in time $O(\Delta^{1/\bar{\tau}} \bar{\tau} \log(1/\epsilon))$.
%	For any $\tau \leq \log\Delta$, algorithm  with $p_i = \Delta^{-i/\tau}$ completes local broadcast with a single %receiver (i.e., $R = \{u\}$) in time $O\left(\Delta^{1/\tau} \cdot \tau \cdot  \log{(1/\epsilon)}\right)$ with %probability at least $1-\epsilon$, for any $\epsilon > 0$.
\end{lemma}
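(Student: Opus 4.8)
The plan is to reduce the entire execution to a collection of essentially independent ``virtual cycles'' and to show that each one delivers the message to $u$ with probability $\Omega(\Delta^{-1/\bar\tau})$. First I would fix attention on a single window $W$ of $\bar\tau$ consecutive rounds that lies entirely inside one stable period of the fading adversary. Inside such a window the adversary uses a single distribution $\mathcal{D}$, so the effective degrees $d_t(u)$ in the rounds of $W$ are i.i.d.\ draws from $\mathcal{D}$, and because the probability schedule has period $\bar\tau$ these $\bar\tau$ rounds use all of $p_1,\dots,p_{\bar\tau}$ (in some cyclic order). Since $u\in R$ has at least one reliable neighbor in $B$, every degree lies in $[1,\Delta]$; this is exactly the range the schedule is designed to cover.

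The core estimate is a lower bound of $\Omega(\Delta^{-1/\bar\tau})$ on the probability that $u$ receives during $W$. I would use that the probabilities $p_i=\Delta^{-i/\bar\tau}$ are geometrically spaced with ratio $\Delta^{1/\bar\tau}$, so for every fixed degree value $d\in[1,\Delta]$ there is a ``right guess'' index $i^\ast$ with $p_{i^\ast}d\in[\Delta^{-1/\bar\tau},1]$. Feeding this into Lemma~\ref{lem:prosing} (whose hypothesis $p_{i^\ast}\le 1/2$ holds because $p_{i^\ast}\le p_1=\Delta^{-1/\bar\tau}\le 1/2$ whenever $\bar\tau\le\log\Delta$, the sole boundary case being treated at the end), the per-round reception probability at degree $d$ under $p_{i^\ast}$ is at least $\Delta^{-1/\bar\tau}/(2e)$. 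As this holds pointwise for every realized $d\ge1$, taking expectations over $d\sim\mathcal{D}$ and summing over the rounds of $W$ gives $\sum_i\Pro{R^{(u)}_{t_i}}\ge \Delta^{-1/\bar\tau}/(2e)$. Reception events in distinct rounds depend on disjoint coin flips and edge draws, hence are independent, so by Lemma~\ref{lem:wpi} the window-failure probability $\prod_i(1-\Pro{R^{(u)}_{t_i}})$ is at most $1-\sum_i\Pro{R^{(u)}_{t_i}}+\sum_{i<j}\Pro{R^{(u)}_{t_i}}\Pro{R^{(u)}_{t_j}}$; since the linear term dominates the quadratic one, the window delivers to $u$ with probability $q\ge \tfrac12\sum_i\Pro{R^{(u)}_{t_i}}\ge \Delta^{-1/\bar\tau}/(4e)$.

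The main obstacle is that the adversary may change its distribution in the middle of an aligned cycle, so I cannot simply treat the $r$ cycles of $\algnameone$ as clean windows; indeed, when $\tau=\bar\tau$ a change at a fixed offset can straddle every aligned cycle. I would resolve this by a sliding-window count. Because $\bar\tau\le\tau$, every maximal stable block has length at least $\bar\tau$ (apart from the first and last blocks, truncated by the execution), and a block of length $L$ contains $\lfloor L/\bar\tau\rfloor\ge L/(2\bar\tau)$ disjoint windows of $\bar\tau$ consecutive rounds, each a virtual cycle of the type analyzed above. Summing over all blocks, whose lengths total $r\bar\tau$, yields at least $r/2-O(1)$ pairwise disjoint virtual cycles, regardless of how the adversary schedules its changes.

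To finish, I would order these virtual cycles in time and apply the standard conditional argument: conditioned on the history preceding a given virtual cycle — which fixes the adversary's distribution for that cycle but not the nodes' private coins — the probability that $u$ fails to receive during it is at most $1-q$ for \emph{every} possible distribution. Hence $u$ never receiving has probability at most $(1-q)^{r/2-O(1)}\le\exp(-qr/2+O(q))$, and substituting $q=\Delta^{-1/\bar\tau}/(4e)$ together with $r=2\lceil\ln(1/\epsilon)\rceil\lceil 4e\,\Delta^{1/\bar\tau}\rceil$ drives this below $\epsilon$; the running time is $r\bar\tau=O(\Delta^{1/\bar\tau}\bar\tau\log(1/\epsilon))$ as claimed. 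The only delicate points beyond the window count are routine: the boundary case where $p_1$ is marginally above $1/2$ when $\log\Delta$ is non-integral (handled by evaluating $p_1(1-p_1)$ directly, which is still a constant), and the $O(1)$ loss from the truncated first and last blocks, both absorbed by the constant-factor and ceiling slack built into $r$.
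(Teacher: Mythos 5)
Your argument is essentially the paper's: the ``right guess'' observation combined with averaging over $d\sim\mathcal{D}$ by linearity of expectation is exactly the paper's computation with the band probabilities $q_i=\Pro{\Delta^{(i-1)/\bar\tau}<X_t\le\Delta^{i/\bar\tau}}$, which sum to $1$ because $u$ has a reliable broadcasting neighbor; and your maximal-stable-block window count plays the role of the paper's observation that every two consecutive cycles ($2\bar\tau$ rounds) contain $\bar\tau$ consecutive stable rounds that try all of $p_1,\dots,p_{\bar\tau}$. The independence-across-rounds and independence-across-windows steps, and the final amplification, also match.

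One step needs repair. You apply Lemma~\ref{lem:wpi} to the \emph{true} probabilities $b_i=\Pro{R^{(u)}_{t_i}}$, of which you only control the sum from below, and assert that ``the linear term dominates the quadratic one.'' That domination, $\sum_{i<j}b_ib_j\le\frac12\sum_i b_i$, essentially requires $\sum_i b_i\le 1$, which you have not established and which can fail (e.g.\ a point distribution on $d=\sqrt{\Delta}$ makes several rounds each succeed with constant probability, pushing $\sum_i b_i$ above $1$), so the intermediate claim $q\ge\frac12\sum_i b_i$ is not justified as written. The conclusion $q\ge\Delta^{-1/\bar\tau}/(4e)$ is still true and the fix is one line: either use $\prod_i(1-b_i)\le e^{-\sum_i b_i}\le e^{-s}\le 1-s/2$ for $s=\Delta^{-1/\bar\tau}/(2e)\le 1$, or do what the paper does and apply the product inequality to the per-round \emph{lower bounds} $q_i/(2e\Delta^{1/\bar\tau})$, which sum to exactly $\frac{1}{2e}\Delta^{-1/\bar\tau}<1$ so that the quadratic term is genuinely dominated.
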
	
\InConference{
\begin{proof}
	It is sufficient to prove the claim for $\tau \leq \log\Delta$. For $\tau > \log\Delta$ we use the algorithm for $\tau = \log\Delta$.  Note that any algorithm that is correct for some $\tau$ must also work for any larger $\tau$ because the adversary may not choose to change the distribution as frequently as it is permitted to. In the case where $\tau \leq \log\Delta$ we get that $\Delta^{1/\tau} \geq 2$.
	
	We want to show that if the nodes from $N_{u}\cap B$ execute procedure $\procname(\tau,p_1,\dots,p_{\tau})$ twice, then $u$ receives some message with probability at least $\log\Delta/(2e \Delta^{1/\tau}\tau)$. Every time we execute $\procname$ twice, we have a total of $2\tau$ consecutive time slots out of which, by the definition of our model, at least $\tau$ consecutive slots have the same distribution of the additional edges and moreover stations try all the probabilities $p_1,p_2,\dots,p_{\tau}$ (not necessarily in this order). Let $T$ denote the set of these $\tau$ time slots and for $i= 1,2,\dots,\tau$ let $t_i \in T$ be the step in which probability $p_i$ is used. We also denote the distribution used in steps from set $T$ by  $\mathcal{E^{(T)}}$. Hence we can denote the edges between $u$ and its neighbors that have some message by $E_{part} = \{(u,b): b\in B\}\cap E'$. We know that the edge sets are chosen independently from the same distribution: $E_t \sim \mathcal{E^{(T)}}$ for $t \in T$. Let us denote by $X_t = |E_t\cap E_{part}|$ the random variable being the number of neighbors that are connected to $u$ in step $t$ and belong to $B$. For each $i$ from $1$ to $\tau$ we define
	$
	q_i = \Pro{\Delta^{(i-1)/\tau} < X_t \leq \Delta^{i/\tau}},
	$
	for any $t\in T$. Observe that probabilities $q_i$ do not depend on $t$ during the considered $\tau$ rounds. Moreover since $u \in R$ then $u$ is connected via a reliable edge to at least one node in $B$, thus $E\cap E_{part} \neq \emptyset$, hence $\Pro{X_t = 0} = 0$ thus:
%	\InConference{\vspace*{-2mm}}
	\begin{equation}
	\label{eqn:sum_one}
	\sum_{i = 1}^{\tau} q_i = 1,
	\end{equation}
%		\InConference{\vspace*{-1mm}}
	Let $S_i$ denote the indicator random variable being $1$ if in $t_i$-th round if exactly one neighbor of $u$ transmits and $u$ is not transmitting in round $t$ and $0$ otherwise. Clearly if $S_i = 1$ in some round $t$, then $u$ receives some message in round $t$. Then we would like to show for each $i = 1,2,\dots,\tau$ that:
%		\InConference{\vspace*{-1mm}}
	\begin{equation}
	\label{eqn:s_i}
	\Pro{S_i = 1}\geq  \frac{q_i}{2e\Delta^{1/\tau}}.
	\end{equation}
%		\InConference{\vspace*{-1mm}}
	In $t_i$-th slot the transmission probability is $p_i = \Delta^{-i/\tau}$ and the transmission choices done by the stations are independent from the choice of edges $E_{t_i}$ active in round $t_i$. Note that $u$ might also belong $R$ and try to transmit. But since $p_i \leq 1/2$ then $u$ is not transmitting with probability at least $1/2$. If $Q_{i}$ denotes the event that $\Delta^{(i-1)/\tau} < X_{t_i} \leq \Delta^{i/\tau}$ then:
%	\InConference{\vspace*{-1mm}}
	\begin{align*}
	\Pro{S_{i} = 1} &\geq \Pro{S_i = 1| Q_{i}}\cdot \Pro{Q_{i}} \\
	&\geq  p_{i} (\Delta^{(i-1)/\tau} + 1) \cdot \left(1-p_i\right)^{\Delta^{(i-1)/\tau}} \cdot \frac12 \cdot q_{i}\\
	&\geq  p_{i} \Delta^{(i-1)/\tau} \cdot \left(1-p_i\right)^{\Delta^{i/\tau} - 1} \cdot \frac12 \cdot q_{i}\\	
	& \geq \Delta^{-1/\tau}\cdot \left(1-\frac{1}{\Delta^{i/\tau}}\right)^{\Delta^{i/\tau} - 1} \cdot \frac{q_i}{2} \geq \frac{q_i}{2e \Delta^{1/\tau}},
	\end{align*}
	because inequality $(1-1/x)^{x-1} \geq e^{-1}$ holds for all $x > 0$. Since the edge sets are chosen independently in each step and the random choices of the stations whether to transmit or not are also independent from each other we have:
%	\InConference{\vspace*{-1mm}}
	\begin{alignat*}{3}
	\Pro{\bigwedge_{i=1}^{\tau} (S_i = 0)} &= \prod_{i = 1}^{\tau}\Pro{S_i = 0} \leq \prod_{i = 1}^{\tau} \left(1- \frac{q_i}{2e \Delta^{1/\tau}}\right)& \quad \text{by Equation~\eqref{eqn:s_i}}\\
	%	&\leq \prod_{i = 1}^{\tau} \left(1- \frac{q_i}{2e \Delta^{1/\tau}}\right) &\text{by Equation~\eqref{eqn:s_i}}\\
	&\leq 1 - \sum_{i = 1}^{\tau}\frac{q_i}{2e \Delta^{1/\tau}} + \sum_{1\leq i < j\leq \tau} \frac{q_i q_{j}}{4e^2 \Delta^{2/\tau} } & \text{by Lemma~\ref{lem:wpi}}\\
	& \leq 1 - \frac{\sum_{i = 1}^{\tau}q_i}{2e \Delta^{1/\tau}} + \frac{ \left(\sum_{i =1}^{\tau} q_i\right)^2}{4e^2 \Delta^{2/\tau} } & \\
	&\leq  1 - \frac{1}{2e \Delta^{1/\tau}} + \frac{1}{4e^2 \Delta^{2/\tau}} \leq 1 - \frac{1}{4e \Delta^{1/\tau}} &\text{by Equation~\eqref{eqn:sum_one}} 
	\end{alignat*}
	%	\textcolor{red}{
	%		We need independence here. If we try to use the union bound we would get something like this:
	%		\begin{align*}
	%		\Pro{\bigwedge_{i=1}^{\tau} (S_i = 0)} &= 1 - \Pro{\bigvee_{i=1}^{\tau} (S_i = 1)} \\
	%		& \geq 1 - \sum_{t=1}^{\tau} \Pro{S_i = 1}
	%		\end{align*}	
	%		and the inequality is in the wrong direction.
	%	}
	%	\\
	Hence if we execute the procedure for $ 2\tau\lceil \ln (1/\epsilon) \rceil\cdot \lceil 4e\cdot \Delta^{1/\tau} \rceil$ time steps, we have at least $\lceil \ln (1/\epsilon) \rceil\cdot \lceil 4e\cdot \Delta^{1/\tau} \rceil$ sequences of $\tau$ consecutive time steps in which the distribution over the unreliable edges is the same and the algorithm tries all the probabilities $\{p_1,p_2,\dots,p_{\tau}\}$. Each of these procedures fails independently with probability at most $1- 1/(4e \Delta^{1/\tau})$ hence the probability that all the procedures fail is at most:
	$
	\left(1-\frac{1}{4e \Delta^{1/\tau}}\right)^{\lceil \ln (1/\epsilon) \rceil\cdot \lceil 4e\Delta^{1/\tau} \rceil} \leq e^{-\lceil \ln (1/\epsilon)\rceil} < \epsilon
	$
	%This means that we have a procedure that succeeds with probability at least $\frac12$ -- exactly as procedure DECAY in~\cite[Theorem 1(ii)]{Bar-YehudaGI92}. Hence, since we repeat the procedure $2\lceil \log{n}\rceil$ times, we can apply~\cite[Lemma 2]{Bar-YehudaGI92} and obtain that the success probability is at least $1-\frac{1}{n}$. Similarly by~\cite[Lemma 3,Theorem 4]{Bar-YehudaGI92} we obtain that the time complexity of the algorithm is $O((D + \log{n}) \cdot \tau \cdot 2^{f(n)})$.
\end{proof}}

\InJournal{
\begin{proof}
	It is sufficient to prove the claim for $\tau \leq \log\Delta$. For $\tau > \log\Delta$ we use the algorithm for $\tau = \log\Delta$.  Note that any algorithm that is correct for some $\tau$ must also work for any larger $\tau$ because the adversary may not choose to change the distribution as frequently as it is permitted to. In the case where $\tau \leq \log\Delta$ we get that $\Delta^{1/\tau} \geq 2$.
	
	 We want to show that if the nodes from $N_{u}\cap B$ execute the procedure $\procname(\tau,p_1,\dots,p_{\tau})$ twice, then $u$ receives some message with probability at least $\log\Delta/(2e \Delta^{1/\tau}\tau)$. Every time we execute $\procname$ twice, we have a total of $2\tau$ consecutive time slots out of which, by the definition of our model, at least $\tau$ consecutive slots have the same distribution of the additional edges and moreover stations try all the probabilities $p_1,p_2,\dots,p_{\tau}$ (not necessarily in this order). Let $T$ denote the set of these $\tau$ time slots and for $i= 1,2,\dots,\tau$ let $t_i \in T$ be the step in which probability $p_i$ is used. We also denote the distribution used in steps from set $T$ by  $\mathcal{E^{(T)}}$. Hence we can denote the edges between $u$ and its neighbors that have some message by $E_{part} = \{(u,b): b\in B\}\cap E'$. We know that the edge sets are chosen independently from the same distribution: $E_t \sim \mathcal{E^{(T)}}$ for $t \in T$. Let us denote by $X_t = |E_t\cap E_{part}|$ the random variable being the number of neighbors that are connected to $u$ in step $t$ and belong to $B$. For each $i$ form $1$ to $\tau$ we define
	$
	q_i = \Pro{\Delta^{(i-1)/\tau} < X_t \leq \Delta^{i/\tau}},
	$
	for any $t\in T$. Observe that probabilities $q_i$ do not depend on $t$ during the considered $\tau$ rounds. Moreover since $u \in R$ then $u$ is connected via a reliable edge to at least one node in $B$, thus $E_t \cap E_{part} \neq \emptyset$, hence $\Pro{X_t = 0} = 0$ thus:
%		\InConference{\vspace*{-1mm}}
	\begin{equation}
	\label{eqn:sum_one}
	\sum_{i = 1}^{\tau} q_i = 1.
	\end{equation}
%	Let $S_i$ denote the indicator random variable being $1$ if in $t_i$-th round if exactly one neighbor of $u$ transmits and $u$ is not transmitting in round $t$ and $0$ otherwise. Clearly if $S_i = 1$ in some round $t$, then $u$ receives some message in round $t$. Then we would like to show for each $i = 1,2,\dots,\tau$ that:
	We would like to lower bound the probability that $v$ receives a message in step $t_i$ for $i = 1,2\dots,\tau$:
	\begin{equation}
	\label{eqn:s_i}
	\Pro{R_{t_i}^{(v)}}\geq  \frac{q_i}{2e\Delta^{1/\tau}}.
	\end{equation}
	In $t_i$-th slot the transmission probability is $p_i = \Delta^{-i/\tau}$ and the transmission choices done by the stations are independent from the choice of edges $E_{t_i}$ active in round $t_i$. Let $Q_i$ denote the event that $\Delta^{(i-1)/\tau} < X_{t_i} \leq \Delta^{i/\tau}$. We have $p_i \leq 1/2$ hence we can use Lemma~\ref{lem:prosing} and~\ref{lem:interval}:
%	 Note that $u$ might also belong $R$ and try to transmit. But since $p_i \leq 1/2$ then $u$ is not transmitting with probability at least $1/2$. If $Q_{i}$ denotes the event that $\Delta^{(i-1)/\tau} < X_{t_i} \leq \Delta^{i/\tau}$ then:
%	\InConference{\vspace*{-1mm}}
	\begin{align}\nonumber
	\Pro{R_{t_i}^{(v)}} &\geq \Pro{R_{t_i}^{(v)} = 1 \mid Q_{i}}\cdot \Pro{Q_{i}} \\\label{eqn:minimum}
	& \geq q_i\min\left\{\Pro{R_{t_i}^{(v)} = 1\mid d_{t_i}(v) = \Delta^{(i-1)/\tau} +1}, \Pro{R_{t_i}^{(v)} = 1\mid d_{t_i}(v) = \Delta^{i/\tau}}\right\}\\\nonumber
	& \geq q_i\min\left\{\frac{(\Delta^{(i-1)/\tau} +1)\Delta^{-i/\tau}}{(2e)^{(\Delta^{(i-1)/\tau} +1)\Delta^{-i/\tau}}}, \frac{1}{2e}\right\} \geq  \frac{q_i}{2e\Delta^{1/\tau}},
%	&\geq  p_{i} (\Delta^{(i-1)/\tau} + 1) \cdot \left(1-p_i\right)^{\Delta^{(i-1)/\tau}} \cdot \frac12 \cdot q_{i}\\
%	&\geq  p_{i} \Delta^{(i-1)/\tau} \cdot \left(1-p_i\right)^{\Delta^{i/\tau} - 1} \cdot \frac12 \cdot q_{i}\\	
%	& \geq \Delta^{-1/\tau}\cdot \left(1-\frac{1}{\Delta^{i/\tau}}\right)^{\Delta^{i/\tau} - 1} \cdot \frac{q_i}{2} \geq \frac{q_i}{2e \Delta^{1/\tau}},
	\end{align}
	because $(\Delta^{(i-1)/\tau} +1)\Delta^{-i/\tau} \leq 1$. Since the edge sets are chosen independently in each step and the random choices of the stations whether to transmit or not are also independent from each other we have:
%	\InConference{\vspace*{-1mm}}
	\begin{alignat*}{3}
	\Pro{\bigwedge_{i=1}^{\tau} \left(\neg R_{t_i}^{(v)}\right)} &= \prod_{i = 1}^{\tau}\Pro{\neg R_{t_i}^{(v)}} \leq \prod_{i = 1}^{\tau} \left(1- \frac{q_i}{2e \Delta^{1/\tau}}\right)& \quad\text{by independence and Equation~\eqref{eqn:s_i}}\\
%	&\leq \prod_{i = 1}^{\tau} \left(1- \frac{q_i}{2e \Delta^{1/\tau}}\right) &\text{by Equation~\eqref{eqn:s_i}}\\
	&\leq 1 - \sum_{i = 1}^{\tau}\frac{q_i}{2e \Delta^{1/\tau}} + \sum_{1\leq i < j\leq \tau} \frac{q_i q_{j}}{4e^2 \Delta^{2/\tau} } & \text{by Lemma~\ref{lem:wpi}}\\
	& \leq 1 - \frac{\sum_{i = 1}^{\tau}q_i}{2e \Delta^{1/\tau}} + \frac{ \left(\sum_{i =1}^{\tau} q_i\right)^2}{4e^2 \Delta^{2/\tau} } & \\
	&\leq  1 - \frac{1}{2e \Delta^{1/\tau}} + \frac{1}{4e^2 \Delta^{2/\tau}} \leq 1 - \frac{1}{4e \Delta^{1/\tau}} &\text{by Equation~\eqref{eqn:sum_one}}.	
	\end{alignat*}
%	\textcolor{red}{
%		We need independence here. If we try to use the union bound we would get something like this:
%		\begin{align*}
%		\Pro{\bigwedge_{i=1}^{\tau} (S_i = 0)} &= 1 - \Pro{\bigvee_{i=1}^{\tau} (S_i = 1)} \\
%		& \geq 1 - \sum_{t=1}^{\tau} \Pro{S_i = 1}
%		\end{align*}	
%		and the inequality is in the wrong direction.
%	}
%	\\
	Hence if we execute the procedure for $ 2\tau\lceil \ln (1/\epsilon) \rceil\cdot \lceil 4e\cdot \Delta^{1/\tau} \rceil$ time steps, we have at least $\lceil \ln (1/\epsilon) \rceil\cdot \lceil 4e\cdot \Delta^{1/\tau} \rceil$ sequences of $\tau$ consecutive time steps in which the distribution over the unreliable edges is the same and the algorithm tries all the probabilities $\{p_1,p_2,\dots,p_{\tau}\}$. Each of these procedures fails independently with probability at most $1- 1/(4e \Delta^{1/\tau})$ hence the probability that all the procedures fail is at most:
	$
	\left(1-\frac{1}{4e \Delta^{1/\tau}}\right)^{\lceil \ln (1/\epsilon) \rceil\cdot \lceil 4e\Delta^{1/\tau} \rceil} \leq e^{-\lceil \ln (1/\epsilon)\rceil} < \epsilon
	$
	%This means that we have a procedure that succeeds with probability at least $\frac12$ -- exactly as procedure DECAY in~\cite[Theorem 1(ii)]{Bar-YehudaGI92}. Hence, since we repeat the procedure $2\lceil \log{n}\rceil$ times, we can apply~\cite[Lemma 2]{Bar-YehudaGI92} and obtain that the success probability is at least $1-\frac{1}{n}$. Similarly by~\cite[Lemma 3,Theorem 4]{Bar-YehudaGI92} we obtain that the time complexity of the algorithm is $O((D + \log{n}) \cdot \tau \cdot 2^{f(n)})$.
\end{proof}}
%\noindent\textcolor{red}{By a small modification of the probabilities the complexity of the algorithm can be improved to $O\left(\frac{\Delta^{1/\tau} \cdot \tau^2}{\log\Delta}\cdot  \log{(1/\epsilon)}\right)$. The following proof is a copy of the proof of Theorem 1 with a small modification that slightly improves the complexity. }

On closer inspection of the analysis of Lemma~\ref{lem:upper_1},
it becomes clear that if we tweak slightly the probabilities used in our algorithm, we require fewer iterations.
In more detail, the probability of a successful transmission in the case where each of the $x$ transmitters broadcasts independently with probability $\alpha/x$ is approximately $\alpha / (2e)^{\alpha}$. In the previous algorithm we were transmitting in successive steps with probabilities $\Delta^{-1/\tau}, \Delta^{-2/\tau},\dots$. Thus if $x = 1$ we would get in $i$-th step $\alpha = \Delta^{-i/\tau}$ and approximately the sum of probabilities of success in $\tau$ consecutive steps would be $\Delta^{-1/\tau}$. The formula $\alpha / (2e)^{-\alpha}$ shows that the success probability depends on $\alpha$ linearly if $\alpha < 1$ (``too small" probability) and depends exponentially on $\alpha$ if $\alpha > 1$ (``too large" probability). In the previous theorem we intuitively only use the linear term. In the next one we would like to also use, to some extent, the exponential term.  If we shift all the probabilities by multiplying them by a factor of $\beta > 1$,  the total success probability would be approximately $\beta \Delta^{-1/\tau}$ if $x = 1$ and $\beta (2e)^{-\beta}$ if $x =\Delta$. Thus by setting $\beta = \log_{2e}\Delta/\tau$ we maximize both these values. 
\begin{algorithm}%[h]
	\TitleOfAlgo{\algnametwo$(r,\bar{\tau})$}
	\lFor{$i\leftarrow 1$ \KwTo $\bar{\tau}$}{$p_i \gets \Delta^{-i/\bar{\tau}} \cdot \log_{2e}{\Delta}/\bar{\tau}$}
	%	$\algname(r,\bar{\tau},p_1,\dots,p_{\bar{\tau}})$
	\Repeat($r$ times){
		\procname$(\bar{\tau},p_1,p_2,\dots,p_{\bar{\tau}})$ }
\end{algorithm}

The following lemma makes this above intuition precise and gains a log-factor in performance in algorithm $\algnametwo$ (Fast Robust Local Broadcast) compared to $\algnameone$. 
As part of this analysis, we add a second statement to our lemma that will prove useful during our subsequent analysis of global broadcast. The correctness of this second lemma is a straightforward consequence of the analysis. \\
%\InConference{(Most of the proofs are deferred to Appendix. We provide short ideas of all the missing proofs.)}
\InConference{\vspace*{-1mm}}
\begin{lemma}
	\label{lem:upper_2}
%	For any $\tau \leq\log\Delta$ and for $p_i = \Delta^{-i/\tau} \cdot \ln\Delta/\tau$
	Fix any receiver $u \in R$ and error bound $\epsilon > 0$. It follows:
%\InConference{\vspace*{-1mm}}
	\begin{enumerate}
		\item\label{statement1} \algnametwo$(2\lceil \ln (1/\epsilon) \rceil\cdot \lceil 4 \Delta^{1/\bar{\tau}}\bar{\tau}/\log_{2e}\Delta \rceil,\bar{\tau})$ completes local broadcast with a single receiver in time \\ $O\left(\frac{\Delta^{1/\bar{\tau}} \cdot \bar{\tau}^2}{\log\Delta}\cdot  \log{(1/\epsilon)}\right)$ with probability at least $1-\epsilon$, for any $\epsilon > 0$,
		%\InConference{\vspace*{-1mm}}
	\item\label{statement2}	\algnametwo$(2,\bar{\tau})$ completes local broadcast with a single receiver with probability at least $\frac{\log_{2e}\Delta}{4 \Delta^{1/\bar{\tau}}\bar{\tau}}$.
	\end{enumerate}
\end{lemma}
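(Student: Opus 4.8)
The plan is to establish the second statement first, since it is the core probabilistic estimate, and then obtain the first statement from it by a routine independent-repetition (amplification) argument, mirroring the structure used for Lemma~\ref{lem:upper_1}. The second statement asserts that one pair of consecutive \procname\ calls (the meaning of $\algnametwo(2,\bar{\tau})$) delivers the message to $u$ with probability at least $\frac{\log_{2e}\Delta}{4\Delta^{1/\bar{\tau}}\bar{\tau}}$. Once this per-pair bound $P$ is in hand, I would run $\Theta\!\left(\frac{\Delta^{1/\bar{\tau}}\bar{\tau}}{\log_{2e}\Delta}\cdot\ln(1/\epsilon)\right)$ independent pairs; since distinct pairs use independent random bits and each fails with probability at most $1-P$, the probability that all fail is at most $(1-P)^{\lceil\ln(1/\epsilon)\rceil/P}\le e^{-\lceil\ln(1/\epsilon)\rceil}\le\epsilon$. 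As each pair costs $2\bar{\tau}$ rounds, the total running time is $O\!\left(\frac{\Delta^{1/\bar{\tau}}\bar{\tau}^2}{\log\Delta}\cdot\log(1/\epsilon)\right)$, matching the first statement; this is exactly the sense in which the second statement is ``a straightforward consequence of the analysis.''

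To prove the per-pair bound I would reuse the skeleton of Lemma~\ref{lem:upper_1}. First reduce to $\tau\le\log\Delta$ (so $\bar{\tau}=\tau$), handling larger $\tau$ by the monotonicity observation that an algorithm correct for some $\tau$ stays correct for any larger value. Two consecutive \procname\ calls span $2\bar{\tau}$ rounds, among which some block of $\bar{\tau}$ consecutive rounds shares a single edge distribution and tries all of $p_1,\dots,p_{\bar{\tau}}$; let $t_i$ be the round in this block using $p_i=\Delta^{-i/\bar{\tau}}\log_{2e}\Delta/\bar{\tau}$. Partition the possible values of the effective degree $X_{t}$ of $u$ into the ranges $(\Delta^{(i-1)/\bar{\tau}},\Delta^{i/\bar{\tau}}]$, let $q_i$ be the probability that $X_{t}$ falls in range $i$, and note $\sum_i q_i=1$ since the reliable edge forces $X_{t}\ge 1$. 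Conditioning on the event $Q_i$ that $X_{t_i}$ lands in range $i$, I would write $\Pro{R_{t_i}^{(v)}}\ge q_i\cdot\Pro{R_{t_i}^{(v)}\mid Q_i}$ and then combine the $\bar{\tau}$ independent steps through Lemma~\ref{lem:wpi} exactly as in Lemma~\ref{lem:upper_1}, so that the per-pair success probability comes out to essentially the per-step conditional bound (the $q_i$ summing to one, the quadratic correction term being absorbed).

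The main obstacle is the per-step conditional estimate, which is where the scaling factor $\beta=\log_{2e}\Delta/\bar{\tau}$ earns its keep. By Lemma~\ref{lem:interval} it suffices to evaluate the reception probability at the two endpoints of range $i$, and Lemma~\ref{lem:prosing} reduces each endpoint to estimating $\frac{\alpha}{(2e)^{\alpha}}$ with $\alpha=p_i d$. At the upper endpoint $d=\Delta^{i/\bar{\tau}}$ one gets $\alpha=\beta$ exactly, so the exact identity $(2e)^{\beta}=\Delta^{1/\bar{\tau}}$ yields the clean value $\frac{\beta}{\Delta^{1/\bar{\tau}}}=\frac{\log_{2e}\Delta}{\bar{\tau}\Delta^{1/\bar{\tau}}}$; at the lower endpoint $d\approx\Delta^{(i-1)/\bar{\tau}}$ one has $\alpha\approx\beta\Delta^{-1/\bar{\tau}}$, which lies in the linear regime, giving again $\approx\frac{\log_{2e}\Delta}{\bar{\tau}\Delta^{1/\bar{\tau}}}$ up to a constant. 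The delicate part is constant control: $\beta$ need not be below $1$ (it is large when $\bar{\tau}$ is small), so one cannot casually bound $(2e)^{\alpha}$ at the upper endpoint, and it is precisely the identity $(2e)^{\beta}=\Delta^{1/\bar{\tau}}$ that rescues it. At the lower endpoint one must verify $\alpha\le 1$ (indeed $\alpha\le \log_{2e}(y)/y$ for $y=\Delta^{1/\bar{\tau}}\ge 2$, which is bounded well below $1$) before replacing $(2e)^{\alpha}$ by a small constant. Pinning the stated constant $4$, rather than something like $4e$, is the one genuinely fiddly computation: it requires using that $\alpha$ at the lower endpoint is bounded away from $1$ (not merely $\alpha\le 1$), together with a non-crude handling of the quadratic term from Lemma~\ref{lem:wpi} so that the combining step does not shed a full factor of two.
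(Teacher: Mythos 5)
Your proposal is correct and follows essentially the same route as the paper's proof: reduce to $\bar{\tau}=\tau\le\log_{2e}\Delta$, isolate a block of $\tau$ consecutive rounds with a fixed distribution, decompose over the degree ranges $(\Delta^{(i-1)/\tau},\Delta^{i/\tau}]$ with $\sum_i q_i=1$, bound each step via Lemmas~\ref{lem:prosing} and~\ref{lem:interval} at the interval endpoints (where the identity $(2e)^{\beta}=\Delta^{1/\bar{\tau}}$ does exactly the work you describe), combine with Lemma~\ref{lem:wpi}, and amplify for statement~\ref{statement1}. The only cosmetic difference is in how the constant $4$ is apportioned: the paper proves a per-step constant of $2$ (using that $\alpha\le 1/e$ at the lower endpoint, so $(2e)^{\alpha}<2$) and then \emph{does} give up a further factor of $2$ in the Weierstrass combination, via $\Delta^{1/\tau}\tau\ge\log_{2e}\Delta$ making the quadratic term at most half the linear term.
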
	
%\InConference{\vspace*{-1mm}}
\InConference{
\begin{proof}[Proof Idea]
The proof is similar to the one of 	Lemma~\ref{lem:upper_1}. We define the probabilities $q_i$ and events $Q_i$ in the same way. The key difference is in the evaluation of the probability of success in round $t_i$ conditioned on $Q_i$ ($\Pro{S_i = 1\mid Q_i}$).  Event $Q_i$ restricts the number of neighbors connected to $u$ to some interval. We prove that the success probability $\Pro{S_i = 1\mid Q_i}$ is lower bounded by the minimum of the values at the endpoints of this interval. This is true because when $x$ stations transmit with probability $p$ to a common neighbor then the probability of a successful transmission seen as a function of $x$ has a single maximum at $x = 1/p$ hence its value at any point of some fixed interval is lower bounded by the minimum of the values at the endpoints.
\end{proof}}
\JournalProof{\ProofThree}{
\begin{proof}
		It is sufficient to prove the claim for $\tau \leq \log_{2e}\Delta$. For $\tau > \log_{2e}\Delta$ we use the algorithm for $\tau = \log_{2e}\Delta$.  Note that any algorithm that is correct for some $\tau$ must also work for any larger $\tau$ because the adversary may not choose to change the distribution as frequently as it is permitted to. In the case where $\tau \leq \log_{2e}\Delta$ we get that $\Delta^{1/\tau} \geq 2e$. Moreover $\Delta^{-1/\tau}\log_{2e}\Delta / \tau  = (2e)^{-\log_{2e}\Delta/\tau} \log_{2e}\Delta/\tau \leq 1/(2e)$ because $\log_{2e}\Delta/\tau \geq 1$ hence $p_i \leq 1/2$.
		
	 We want to show that if the nodes from $N_{u}\cap B$ execute the procedure $\procname(\tau,p_1,\dots,p_{\tau})$ twice, then $u$ receives some message with probability at least $\log\Delta/(4 \Delta^{1/\tau}\tau)$. Since we execute $\procname$ twice, we have a total of $2\tau$ consecutive time slots out of which, by the definition of our model, at least $\tau$ consecutive slots have the same distribution of the edges in $E'\setminus E$ and moreover stations try all the probabilities $p_1,p_2,\dots,p_{\tau}$.  (not necessarily in this order). Let $T$ denote the set of these $\tau$ time slots and for $i= 1,2,\dots,\tau$ let $t_i \in T$ be the step in which probability $p_i$ is used. We also denote the distribution used in steps from set $T$ by $\mathcal{E^{(T)}}$. Observe from the definition of the algorithm that during these slots the number of participating stations does not change. Hence we can denote the edges between $u$ and its neighbors that have some message by $E_{part} = \{(u,b): b\in B\}\cap E'$. We know that the edge sets are chosen independently from the same distributions: $E_t \sim \mathcal{E^{(T)}}$ for $t \in T$. Let us denote by $X_t = |E_t\cap E_{part}|$ the random variable being the number of neighbors that are connected to $u$ in step $t$ and belong to $B$. For each $i$ form $1$ to $\tau$, we define $q_i$:	
	\[
	q_i = \Pro{\Delta^{(i-1)/\tau} < X_t \leq \Delta^{i/\tau}}.
	\]
	for any $t\in T$. Observe that probabilities $q_i$ do not depend on $t$ during the considered $\tau$ rounds. Moreover $E_t\cap E_{part} \neq \emptyset$, hence $\Pro{X_t = 0} = 0$ thus:
	\begin{equation}
	\label{eqn:sum_one_2}
	\sum_{i = 1}^{\tau} q_i = 1.
	\end{equation}
	 We would like to lower bound the probability that $v$ receives a message in step $t_i$ for $i = 1,2,\dots,\tau$:
	\begin{equation}
	\label{eqn:s_i_2}
	\Pro{R_{t_i}^{(v)}}\geq  \frac{q_i \log_{2e}\Delta}{2\Delta^{1/\tau}\tau}.
	\end{equation}
	In $t_i$-th slot each station with a message transmits independently with probability is $p_{i} = \Delta^{-i/\tau}\cdot \log_{2e}\Delta/\tau$ and the transmission choices done by the stations are independent from the choice of edges $E_{t_i}$ active in round $t_i$. Let $Q_i$ denote the event that $\Delta^{(i-1)/\tau} < X_{t_i} \leq \Delta^{i/\tau}$. We have $p_i \leq 1/2$ hence we can use Lemma~\ref{lem:prosing} and~\ref{lem:interval}:
%	 Station $u$ is not transmitting with probability at least $1-p_i \geq 1/2$.  Thus,
%	the probability that exactly one neighbor of $u$ is transmitting in step $t_i$ is exactly $p_{i} X_{t_i} (1-p_i)^{X_{t_i}-1}$. Thus $\Pro{S_i = 1} \geq p_{i} X_{t_i} (1-p_i)^{X_{t_i}-1}/2$.
%	Seeing expression $ p_{i} X_{t_i} (1-p_i)^{X_{t_i}-1}/2$ as a function of $X_{t_i}$ then this function has a single maximum in $X_{t_i} = 1/p_{i}$. Hence if we assume that $X_{t_i}$ belongs to some interval then $\Pro{S_i = 1}$ is lower bounded by the minimum value at the endpoints of this interval. Hence $\Pro{S_{i} = 1\mid  Q_i} \geq \min\{(\Delta^{(i-1)/\tau} +1) p_i (1-p_i)^{\Delta^{(i-1)/\tau}}, \Delta^{i/\tau} p_i (1-p_i)^{\Delta^{i/\tau} - 1}\} / 2 \geq \min\{\Delta^{(i-1)/\tau} p_i (1-p_i)^{\Delta^{(i-1)/\tau}}, \Delta^{i/\tau} p_i (1-p_i)^{\Delta^{i/\tau}}\} / 2$, which gives us:
%	\begin{align*}
%	\Pro{S_{i} = 1} &\geq \Pro{S_i = 1\mid  Q_i}\cdot \Pro{Q_i} \\&\geq \frac{q_i}{2} \cdot \min \left\{\frac{\Delta^{(i-1)/\tau}\ln\Delta}{\Delta^{i/\tau}\tau} \left(1-\frac{\ln\Delta}{\Delta^{i/\tau}\tau}\right)^{\Delta^{(i-1)/\tau}},  \frac{\Delta^{i/\tau}\ln\Delta}{\Delta^{i/\tau}\tau} \left(1-\frac{\ln\Delta}{\Delta^{i/\tau}\tau}\right)^{\Delta^{i/\tau}}\right\}
%	&\geq \frac{q_{i}}{2 e \Delta^{1/\tau}}.
%	\end{align*}
	\begin{align*}
		\Pro{R_{t_i}^{(v)}} &\geq \Pro{R_{t_i}^{(v)}\mid  Q_i}\cdot \Pro{Q_i} \\&\geq q_i \cdot \min \left\{\frac{(\Delta^{(i-1)/\tau} +1) \Delta^{-i/\tau}\log_{2e}\Delta/\tau}{(2e)^{(\Delta^{(i-1)/\tau} +1) \Delta^{-i/\tau}\log_{2e}\Delta/\tau}}, \frac{\log_{2e} \Delta /\tau}{ (2e)^{\log_{2e}\Delta/\tau}}\right\}.
	\end{align*}
	Note that $(\Delta^{(i-1)/\tau} +1) \Delta^{-i/\tau}\log_{2e}\Delta/\tau = \Delta^{-1/\tau} \cdot \log_{2e}\Delta/\tau  +  \Delta^{-i/\tau} \cdot \log_{2e}\Delta/\tau  \leq 2(2e)^{-\log_{2e}\Delta/\tau}\cdot \log_{2e}\Delta/\tau \geq 1/e$, because $\log_{2e}\Delta/\tau \geq 1$,  hence:
		\begin{align*}
	\Pro{R_{t_i}^{(v)}} &\geq q_i\min \left\{ \frac{\Delta^{-1/\tau} \log_{2e}\Delta/\tau}{(2e)^{1/e}}, \frac{\log_{2e}\Delta}{\Delta^{1/\tau}\tau} \right\} \geq \frac{q_i \log_{2e}\Delta}{2\Delta^{1/\tau}\tau}.
		\end{align*}
	
%	\noindent Since $\tau \leq \log\Delta$ and $\Delta^{1/\tau} \geq 2$ we have $\frac{\ln\Delta}{\tau \Delta^{1/\tau}}  \leq \frac{1}{2}$. Thus by Bernoulli inequality:
%	\[
%	\left(1-\frac{\ln\Delta}{\Delta^{i/\tau}\tau}\right)^{\Delta^{(i-1)/\tau}} \geq 1- \frac{\ln \Delta}{\tau \Delta^{1/\tau}} > %\frac12.
%	\]
%	 Moreover since inequality $(1 + x/n)^n \geq e^x (1-x^2/n)$ holds for all $n > 1$ and $|x| \leq n$ and the minimum value of function $1-\ln^2 x/x$ is $1-4/e^2 > 2/5$ (for $x>0$):
%	\[\left(1 - \frac{\ln\Delta}{\Delta^{i/\tau}\tau}\right)^{\Delta^{i/\tau}} \geq e^{-\ln\Delta/\tau}\left(1- \frac{(\ln\Delta/\tau)^2}{\Delta^{i/\tau}}\right) \geq \frac{2}{5\Delta^{1/\tau}}.\]
%	Thus:
%	\begin{align*}
%	\Pro{S_i = 1} &\geq \frac{q_i}{2}\min\left\{\frac{\ln\Delta}{2 \Delta^{1/\tau} \tau} , \frac{2\ln\Delta}{5\Delta^{1/\tau}\tau}\right\} \geq \frac{q_i\ln\Delta}{5\Delta^{1/\tau}\tau}.
%	\end{align*}
	%	Now by~\eqref{eqn:sum_one} and~\eqref{eqn:s_i} we have that 
	%	\[
	%	\Ex{\sum_{i=1}^{\log\Delta +1} S_i} = \sum_{i=1}^{\log\Delta + 1}\Pro{S_i = 1} \geq \frac{2^{-f(n)}}{2e}
	%	\]
	%	
	Since the edge sets are chosen independently in each step and the choices of the stations are also independent we have:
	\begin{alignat*}{3}
	\Pro{\bigwedge_{i=1}^{\tau} \left(\neg R_{t_i}^{(v)}\right)} &= \prod_{i = 1}^{\tau}\Pro{\neg R_{t_i}^{(v)}} & \qquad\text{by independence}\\
	&\leq \prod_{i = 1}^{\tau} \left(1- \frac{q_i\log_{2e}\Delta}{2 \Delta^{1/\tau}\tau}\right) &\text{by Equation~\eqref{eqn:s_i_2}}\\
	&\leq 1 - \sum_{i = 1}^{\tau}\frac{q_i \log_{2e}\Delta}{2 \Delta^{1/\tau}\tau} + \sum_{1\leq i < j \leq n}\frac{q_i q_j \log_{2e}^2\Delta}{4 \Delta^{2/\tau}\tau^2} & \text{by Lemma~\ref{lem:wpi}}\\
	&\leq 1 - \frac{\log_{2e}\Delta}{2 \Delta^{1/\tau}\tau}\sum_{i = 1}^{\tau}q_i  + \frac{\log_{2e}^2\Delta}{4 \Delta^{2/\tau}\tau^2}\left(\sum_{i=1}^n q_i\right)^2 & \\	
	&= 1 - \frac{\log_{2e}\Delta}{2 \Delta^{1/\tau}\tau} +\frac{\log_{2e}^2\Delta}{4 \Delta^{2/\tau}\tau^2}, &\text{by Equation~\eqref{eqn:sum_one_2}}\\
	&\leq 1 - \frac{\log_{2e}\Delta}{4 \Delta^{1/\tau}\tau}, 
	\end{alignat*}
\\
where the last inequality is true since if we denote $\tau = (\log_{2e} \Delta)/\alpha$ (for $\alpha\geq 1$) then we have $\Delta^{1/\tau} \tau = (2e)^{\alpha}\log \Delta/(2\alpha) \geq \log\Delta $ hence $\frac{\log_{2e}^2\Delta}{4 \Delta^{2/\tau}\tau^2} \leq \frac{\log_{2e}\Delta}{4 \Delta^{1/\tau}\tau}$.
This completes proof of~\ref{statement2}. To prove~\ref{statement1} we observe that if we execute the procedure for $ 2\tau\lceil \ln (1/\epsilon) \rceil\cdot \lceil 4 \cdot \Delta^{1/\tau}\tau/\log_{2e}\Delta \rceil$ time steps we have at least $\lceil \ln (1/\epsilon) \rceil\cdot \lceil 4 \cdot \Delta^{1/\tau}\tau/\log_{2e}\Delta \rceil$ sequences of $\tau$ consecutive time steps in which the distribution over the unreliable edges is the same and the algorithm tries all the probabilities $\{p_1,p_2,\dots,p_{\tau}\}$. Each of these procedures fails independently with probability at most $1- \log_{2e}\Delta/(4 \Delta^{1/\tau}\tau)$ hence the probability that all the procedures fail is at most:
	 \[
	 \left(1-\frac{\tau\log_{2e}\Delta}{4  \cdot \Delta^{1/\tau}\tau}\right)^{\lceil \ln (1/\epsilon) \rceil\cdot \lceil 4 \cdot \Delta^{1/\tau}\tau/\log_{2e}\Delta \rceil} \leq e^{-\lceil \ln (1/\epsilon)\rceil} < \epsilon
	 \]
\end{proof}
}
%\InConference{\vspace*{-1mm}}
In Lemmas~\ref{lem:upper_1} and~\ref{lem:upper_2} we studied the fate of a single receiver in $R$ during an execution of algorithms $\algnameone$ and $\algnametwo$. Here we apply this result to bound the time for all nodes in $R$ to receive a message, therefore solving the local broadcast problem. In particular, 
for a desired error bound $\epsilon$,
if we apply these lemmas with error bound $\epsilon'=\epsilon/n$,
then we end up solving the single node problem with a failure probability upper bounded by $\epsilon/n$.
Applying a union bound, it follows that the probability that any node from $R$ fails to receive a message is less than $\epsilon$.
Formally:

%if we replace the $\ln(1/\epsilon)$ term used in the definition of $r$ in these previous theorems with $\ln(n/\epsilon)$, it follows that each node in $R$ receives a message with probability greater than $1 - n/\epsilon$. Applying a union bound, it follows that all $|R| \leq n$ nodes receive a message with probability at least $1 - \epsilon$. Formally:

%Fix any receiver $u \in R$ and error bound $\epsilon > 0$. 
%	 Let $\bar{\tau} = \min\{\tau,\lceil\log\Delta\rceil\}$, and let $p_i = \Delta^{-i/\bar{\tau}}\cdot \ln{\Delta}/\bar{\tau}$, for $1 \leq i \leq \bar{\tau}$
%	 It follows:
	
%\InConference{\vspace*{-1mm}}
\begin{theorem}
	\label{thm:mult_receivers}
	Fix an error bound $\epsilon>0$. It follows that algorithm
	$ \algnametwo(2\lceil \ln (n/\epsilon) \rceil\cdot \lceil4 \Delta^{1/\bar{\tau}}\bar{\tau}/\log\Delta \rceil)  $
	solves local broadcast in  $O\left(\frac{\Delta^{1/\bar{\tau}} \cdot \bar{\tau}^2}{\log_{2e}\Delta}\cdot  \log{(n/\epsilon)}\right)$  rounds, with probability at least $1-\epsilon$.
\end{theorem}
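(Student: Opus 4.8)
The plan is to lift the single-receiver guarantee of Lemma~\ref{lem:upper_2} to a guarantee over the entire set $R$ of receivers by a union bound, with the error parameter rescaled to absorb the union over up to $n$ nodes. First I would instantiate statement~\ref{statement1} of Lemma~\ref{lem:upper_2} with the error bound set to $\epsilon' = \epsilon/n$ rather than $\epsilon$. Since $R \subseteq V$ and $|V| = n$, this is exactly the rescaling needed so that the sum of the per-receiver failure probabilities does not exceed $\epsilon$.

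Concretely, fix an arbitrary receiver $u \in R$. By Lemma~\ref{lem:upper_2}, running $\algnametwo$ for $r = 2\lceil \ln(n/\epsilon)\rceil\cdot \lceil 4\Delta^{1/\bar{\tau}}\bar{\tau}/\log\Delta\rceil$ iterations delivers a message to $u$ except with probability at most $\epsilon/n$. The point I would stress is that this is a genuine \emph{marginal} bound that holds for each fixed $u$ uniformly over all strategies of the adaptive fading adversary: the single-receiver analysis only inspects the edges incident to $u$ and the distribution $\mathcal{E^{(T)}}$ the adversary commits to within each stable window of $\tau$ rounds, and it lower-bounds the per-window success probability by $\frac{\log_{2e}\Delta}{4\Delta^{1/\bar{\tau}}\bar{\tau}}$ regardless of which distribution the adversary picks. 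Because the adversary fixes each window's distribution from the message history alone (and not the nodes' private random bits), the transmission choices are independent of the edge draws inside the window, and successive windows draw independently; hence $\Pro{u \text{ receives no message}} \leq \epsilon/n$ no matter what occurs at the other receivers.

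I would then take a union bound over the at most $n$ nodes of $R$, concluding that the probability that \emph{some} node of $R$ fails to receive a message is at most $n\cdot(\epsilon/n) = \epsilon$, so local broadcast is solved with probability at least $1-\epsilon$. It is worth noting explicitly that this step needs no independence among receivers—in fact the receivers all share the synchronized broadcast schedule and are affected simultaneously by the adversary's edge choices—since a union bound consumes only the marginal tail bounds supplied by the lemma. Finally, the running time is the number of iterations times the cycle length, $r\cdot\bar{\tau}$; substituting the chosen $r$ yields $O\!\left(\frac{\Delta^{1/\bar{\tau}}\bar{\tau}^2}{\log\Delta}\cdot\log(n/\epsilon)\right)$, which matches the claimed bound up to the constant-factor difference between $\log\Delta$ and $\log_{2e}\Delta$.

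The one place that requires care, and the main conceptual obstacle, is justifying that the per-receiver estimate of Lemma~\ref{lem:upper_2} is a true marginal bound valid against the \emph{adaptive} adversary—i.e., that conditioning on the adversary's (possibly history-dependent) behavior elsewhere in the network cannot degrade the success probability at $u$ below the stated value. Once this is in place, the remainder is routine: the union bound is immediate, and the only bookkeeping left is checking that replacing $\epsilon$ by $\epsilon/n$ changes the time bound solely through the substitution of $\log(1/\epsilon)$ by $\log(n/\epsilon)$.
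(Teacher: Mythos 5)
Your proposal is correct and follows essentially the same route as the paper: instantiate the single-receiver bound of Lemma~\ref{lem:upper_2} with error parameter $\epsilon/n$ and take a union bound over the at most $n$ receivers in $R$, with the time bound following from the substitution of $\log(1/\epsilon)$ by $\log(n/\epsilon)$. The additional care you take in justifying that the per-receiver estimate is a valid marginal bound against the adaptive adversary is a reasonable elaboration of a point the paper leaves implicit.
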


\InConference{\vspace*{-3mm}}
\subsection{Lower bound}
\InConference{\vspace*{-1mm}}

Observe that for $\tau = \Omega(\log\Delta)$,  \algnametwo\xspace  has a time complexity of $O(\log\Delta \log n)$  rounds for $\epsilon = 1/n$,
which matches the performance of the optimal algorithms for this problem in the standard radio model.
This emphasizes the perhaps surprising result that even large amounts of topology changes do not impede simple uniform broadcast strategies,
so long as there is independence between nearby changes.

Once $\tau$ drops below $\log{\Delta}$, however,
a significant gap opens between our model and the standard radio network model.
Here we prove that gap is fundamental for any uniform algorithm in our model.

	In  the local broadcast problem, a receiver from set $R$ can have between $1$ and $\Delta$ neighbors in set $B$. The neighbors should optimally use probabilities close to the inverse of their number. But since the number of neighbors is unknown, the algorithm has to check all the values. If we look at the logarithm of the inverse of the probabilities (call them \emph{log-estimates}) used in Lemma~\ref{lem:upper_1} we get $i \log\Delta/\tau$, for $i = 1,2,\dots,\tau$---which are spaced equidistantly on the interval $[0,\log\Delta]$.  The goal of the algorithm is to minimize the maximum gap between two adjacent log-estimates placed on this interval since this maximizes the success probability in the worst case.
With this in mind, in the proof of the following lower bound, we look at the dual problem.
Given a predetermined sequence of probabilities
used by an arbitrary uniform algorithm,
we seek the largest gap between adjacent log-estimates, and then select edge distributions that take advantage of this weakness.

%\InConference{\vspace*{-1.5mm}}
\begin{theorem}
	\label{thm:lower}
	Fix a maximum degree $\Delta \geq 10$, stability factor $\tau \leq \log(\Delta-1)/16$, and uniform local broadcast algorithm $\mathcal{A}$. Assume that $\mathcal{A}$ guarantees with probability at least $1/2$ to solve local broadcast in $f(\Delta, \tau)$ rounds when executed in any dual graph network with maximum degree $\Delta$ and fading adversary with stability $\tau$. It follows that $f(\Delta, \tau) \in \Omega( \Delta^{1/\tau} \tau/\log{\Delta})$. 
	
	%For any $\Delta \geq 10$, if $\log (\Delta-1) \geq 16\tau$ then there exists a pair of graphs $G$, $G'$ where $\Delta$ is the maximum degree in graph $D'$, such that for any uniform algorithm solving local broadcast there exists a sequence of distributions over the edges of $G'$ such that if the algorithm solves local broadcast under this sequence of distributions with probability at least $1/2$ then it requires time $\Omega(\Delta^{1/\tau} \tau /\log{\Delta})$.
\end{theorem}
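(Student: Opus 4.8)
The plan is to exhibit, for an \emph{arbitrary} uniform algorithm $\mathcal{A}$, a single hard instance together with an (even oblivious) fading adversary that keeps one receiver uninformed for $\Omega(\Delta^{1/\tau}\tau/\log\Delta)$ rounds with probability exceeding $1/2$. I would use the simplest possible topology: one pure receiver $u\notin B$ joined to a single broadcaster by a reliable edge and to $\Delta-1$ further broadcasters by unreliable edges, so $u$ has degree $\Delta$ in $G'$. This lets the adversary fix the effective degree $d_t(u)$ to be \emph{any} value in $\{1,\dots,\Delta\}$ simply by choosing how many unreliable edges to include. Since $\mathcal{A}$ is uniform, its transmission probabilities are a fixed predetermined cyclic schedule; I would partition the $f$ rounds into $W=f/\tau$ consecutive windows of length $\tau$, aligned with the adversary's change points, so that in each window $\mathcal{A}$ uses some fixed multiset of $\tau$ probabilities $p_1,\dots,p_\tau$ and the adversary may commit to a fresh degree for that window.

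Within one window the adversary uses a point distribution, fixing the degree at the value $d^\star$ minimizing that window's total success probability. Because $u$ hears only when exactly one of its $d^\star$ connected broadcasters transmits, a union bound over the $\tau$ rounds gives
\[
\Pro{u\text{ hears in this window}}\;\le\;\sum_{j=1}^{\tau} d^\star p_j(1-p_j)^{d^\star-1}\;=:\;g(d^\star).
\]
Everything thus reduces to the combinatorial claim that for \emph{any} $\tau$ probabilities there is a degree $d^\star\in\{1,\dots,\Delta\}$ with $g(d^\star)=O\!\big(\log\Delta\cdot\Delta^{-1/\tau}\big)$.

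To prove this claim I would argue by covering, which is the rigorous form of the ``largest-gap'' intuition. Writing $h_j(d)=d\,p_j(1-p_j)^{d-1}$, each $h_j$ is unimodal with peak near $d\approx 1/p_j$, and the set $\{d:h_j(d)\ge\delta\}$ is a single interval whose \emph{multiplicative} width is $\Theta(\ln(1/\delta)/\delta)$, \emph{independent of} $p_j$ (both endpoints scale like $1/p_j$: the lower from the linear regime $h_j\approx dp_j$, the upper from the exponential tail $h_j\approx dp_j e^{-dp_j}$). Suppose for contradiction that $g(d)>V$ for every $d$. Then for each $d$ some term exceeds $V/\tau$, so the $\tau$ intervals at level $\delta=V/\tau$ must cover the whole multiplicative range $[1,\Delta]$; hence $\big(\ln(1/\delta)/\delta\big)^{\tau}\ge\Delta$, i.e. $\ln(1/\delta)/\delta\ge\Delta^{1/\tau}$. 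When $\delta\ge\Delta^{-1/\tau}$ this forces $\ln(1/\delta)\le(\ln\Delta)/\tau$, hence $\delta\le(\ln\Delta/\tau)\,\Delta^{-1/\tau}$ and $V=\tau\delta\le\log\Delta\cdot\Delta^{-1/\tau}$ (and $\delta<\Delta^{-1/\tau}$ gives an even smaller $V$), a contradiction once $V$ is a large enough multiple of $\log\Delta\cdot\Delta^{-1/\tau}$. The hypotheses $\Delta\ge 10$ and $\tau\le\log(\Delta-1)/16$ are exactly what keep $\Delta^{1/\tau}$ large enough that the linear/exponential approximations to $h_j$ and the rounding to integer degrees are harmless.

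Finally I would assemble the pieces: using $d^\star_w$ in each of the $W$ windows, a union bound over windows yields $\Pro{\mathcal{A}\text{ succeeds}}\le\sum_{w} g(d^\star_w)\le W\cdot O(\log\Delta\cdot\Delta^{-1/\tau})$, so reaching probability $1/2$ requires $W=\Omega(\Delta^{1/\tau}/\log\Delta)$ and therefore $f=W\tau=\Omega(\Delta^{1/\tau}\tau/\log\Delta)$. I expect the covering lemma to be the main obstacle: the delicate point is that a degree left ``in the gap'' must have small \emph{total} success, so the contributions of probabilities on both sides must be controlled at once --- those with $d^\star p_j\gg1$ (doubly-exponentially small) and those with $d^\star p_j\ll1$ (each $\approx d^\star p_j$, but summed over up to $\tau$ of them). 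It is precisely this need to keep the ``too-large'' side negligible while summing the ``too-small'' side that yields the extra $\log\Delta$ factor separating this lower bound from the $\Delta^{1/\tau}\tau$ upper bound of $\algnameone$.
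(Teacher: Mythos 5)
Your proposal is correct, and it shares the paper's skeleton: the same single-receiver star (one reliable edge to a broadcaster, $\Delta-1$ unreliable ones), the same partition into $\tau$-round phases aligned with the adversary's change points, a point distribution fixing one effective degree per phase, and a closing union bound. Where you genuinely diverge is in how the central combinatorial claim is established. The paper works constructively: it drops the values $\lceil\log(1/p_j)\rceil$ and $\lfloor\log(1/p_j)\rfloor$ into $\lfloor\log(\Delta-1)\rfloor$ circularly arranged bins, extracts a run of empty bins of length about $\log\Delta/\tau$, places the degree \emph{asymmetrically} inside that gap (a distance $y\approx\log(\ln\Delta/\tau)$ above its lower end and $x$ below its upper end), and then case-analyzes the per-round success probability according to whether $\log(1/p_t)$ sits above or below the chosen point. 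You instead argue by contradiction through a covering lemma: if every degree had window-success exceeding $V$, the superlevel sets $\{d: h_j(d)\ge V/\tau\}$ --- each a single interval by log-concavity of $h_j$, of multiplicative width $O\left(\ln(\tau/V)\cdot\tau/V\right)$ by the linear and exponential asymptotics on its two sides --- would have to cover all of $\{1,\dots,\Delta\}$, forcing that width to be at least about $\Delta^{1/\tau}$ and hence $V=O(\log\Delta\cdot\Delta^{-1/\tau})$. The two arguments encode the same dichotomy (the paper's asymmetric $x$/$y$ split corresponds exactly to the asymmetric endpoints $\delta/p_j$ and $\ln(1/\delta)/p_j$ of your superlevel interval), but your formulation packages it more cleanly: it dispenses with the circular bins and the delicate placement of $a_k$, bounds the whole window's success probability at once rather than each step's, and makes transparent where the $\log\Delta$ loss relative to the $\algnameone$ upper bound originates. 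The only details left to nail down are the ones you already flag --- the integer-degree correction to the covering count and the constants in the width bound --- and both cost only constant factors under $\Delta\ge 10$ and $\tau\le\log(\Delta-1)/16$.
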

\InConference{\vspace*{-4mm}}
\InConference{
\begin{proof}[Proof Idea]
	In this proof we use a star with $\Delta$ arms out of which only one is reliable -- all other arms are controlled by the adversary. The single receiver $u$ is the center of the star. For any uniform algorithm we divide the probabilities $p_i$ into sequences of length $\tau$ and find a distribution in which the degree of $u$ is ``hard'' for each sequence. The algorithm places $\tau$ log-estimates on interval $[0,\log\Delta]$ we, as an adversary, can clearly find a largest gap between adjacent log-estimates of length approximately $\log\Delta/\tau$. We choose the degree $d$ of $u$ such that its logarithm is inside this gap (in correct distances from both its endpoints). With this choice we can upper bound the probability of a successful transmission in any step during these $\tau$ steps, because the distance between the log-estimate and the logarithm of the degree of $u$ gives us lower bound on $dp_i$ if $p_i > 1/d$ or of $1/(dp_i)$ if $p_i < 1/d$ which in turn upper bounds the probability of a successful transmission.
\end{proof}}
\JournalProof{\ProofFive}{
\begin{proof}
	 Consider the dual graph $G = (V,E)$ and $G' = (V,E')$, defined as follows: $V = \{v,u,v_1,\dots,v_{n-2}\}$  and $E = \{(u,v_i), i\in \{1,2,\dots,\Delta-1\}\} \cup \{(v_1,v), (v,v_{\Delta})\} \cup \{(v_i,v_{i+1}), i \in \{\Delta,\dots,n-3\}\}$ and $E' = E\cup\{(v_i,v), i\in\{2,3,\dots,\Delta-1\}\}$ (see Figure~\ref{fig:lower}). We will study local boadcast in this dual graph with $B  = \{u,v_1,v_2,\dots,v_{\Delta-1}\}$ and $R = \{v\}$.
	\begin{figure}
	\centering
	\includegraphics[width=0.6\linewidth]{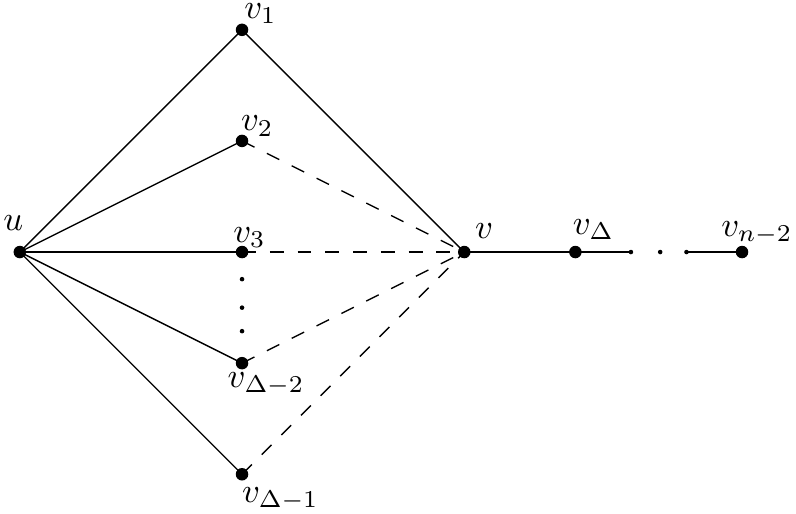}
	\caption{A graph used in proofs of Theorems~\ref{thm:lower} and~\ref{thm:lower_2}. Solid lines correspond to edges in $E$ and dashed lines correspond to edges in $E'\setminus E$ (unreliable edges).}
	\label{fig:lower}
\end{figure}
Observe that the maximum degree of any node is indeed $\Delta$ and the number of nodes is $n$. Nodes $v_{\Delta},v_{\Delta+1},\dots,v_{n-2}$ do not belong to $B\cup R$ hence are not relevant in our analysis. 

	Using the sequence of probabilities $p_1,p_2,\dots$ used by algorithm $\mathcal{A}$ we will define a sequence of distributions over the edges that will cause a long delay until node $v$ will receive a message. The adversary we define is allowed to change the distribution every $\tau$ steps. Accordingly, we partition the rounds into {\em phases} of length $\tau$, which we label $1,2,3,\dots$. Phase $k$ consists of time steps $I_k =\{1+k\cdot \tau,2+k\cdot\tau,\dots,(k+1)\cdot\tau \}$. For each phase $k \geq 1$, the adversary will use a distribution $\mathcal{D}_k$ that's defined with respect to the probabilities used by $\mathcal{A}$ during the rounds in phase $k$. In particular, let sequence $P_k = \{p_{i}\}_ {i\in I_k}$ be the $\tau$ probabilities used by $\mathcal{A}$ during phase $k$. 
	
	We use $P_k$ to define the distribution $\mathcal{D}_k$ as follows. Define $\DDelta = \Delta - 1$ and let $N$ represent $\lfloor\log{\DDelta}\rfloor$ urns labeled with numbers from $1$ to $\lfloor\log{\DDelta}\rfloor$. Into these urns we place balls with numbers $\lceil\log(1/p_j)\rceil$ and $\lfloor\log(1/p_j)\rfloor$ for all $ j\in I_k$. Ball with number $i$ is placed into the bin with the same number. With this procedure for each $j$, we place two balls in adjacent bins if $\lceil\log(1/p_j)\rceil \neq \lfloor\log(1/p_j)\rfloor$ and a single ball in the opposite case. We arrange the bins in a circular fashion i.e., bins $\lfloor \log{\DDelta} \rfloor$ and $1$ are consecutive and we want to find the longest sequence of consecutive empty bins. Observe that since for each $j$ we put either a single ball or two balls into adjacent bins we have at most $\tau$ sequences of consecutive empty bins. Moreover, since at most $2\tau$ bins contain a ball then there exists a sequence of consecutive empty bins of length at least $\frac{\lfloor\log\DDelta\rfloor - 2\tau}{\tau}$. Knowing that $\tau$ is an integer and that $\tau \leq \log\DDelta / 16$ we can represent $\log\DDelta = a \tau + b + \{\log\DDelta\}$, where $\{\log\DDelta\}$ is the fractional part of $\log\DDelta$ and $b + \{\log\DDelta\} < \tau$. We can then show that:
	 \[
	 \frac{\lfloor\log\DDelta\rfloor - 2\tau}{\tau}  = a + \frac{b}{\tau} - 2 \geq  \left\lfloor \frac{\log\DDelta}{\tau} \right\rfloor - 2.
	  \]
%	Let $T_i \in $$A_i = \{\lceil\log(1/p_j)\rceil,\lfloor\log(1/p_j)\rfloor, j\in\{1+i\cdot \tau,2+i\cdot\tau,\dots,(i+1)\cdot\tau \}\}$. This set contains logarithms of all the estimates ``tried" by the algorithm in $i$-th phase.  
    We define:
	\begin{align*}
	x &= \lfloor \log\DDelta /\tau\rfloor - 3 - \lfloor\log(\lfloor \ln\DDelta/\tau\rfloor)\rfloor, \\
	y &= \lfloor\log{(\lfloor \ln\DDelta/\tau\rfloor)}\rfloor + 1.
	\end{align*}
	We observe that for $\tau \leq \log{\DDelta}/16$ we have $\log{(\lfloor \ln\DDelta/\tau\rfloor)} \geq 4$ hence $x$ and $y$ are both positive integers and moreover $x+y = \lfloor \log\DDelta /\tau \rfloor - 2$. Hence we already showed that there exists a sequence of consecutive empty bins of length at least $x+y$. Now, we pick the label of $(y+1)$-st bin in this sequence (the order of bins is according to the circular arrangement i.e., $1$ comes after $\lfloor\log\DDelta \rfloor$) and call it $a_k$. Let $A_k = \{\log(1/p_j) : j\in I_k\}$. This set contains logarithms of all the estimates ``tried" by the algorithm in $k$-th phase.  Now we split $A_k$ into elements that are larger and that are smaller than $a_k$: $A_k = A^{(\geq)}_{k} \cup A^{(<)}_{k}$, $A^{(\geq)}_{k} = \{a \in A_k : a \geq a_k\}$, $A^{(<)}_k = \{a \in A_k : a < a_k\}$. We observe that if $a \in A^{(<)}_k$ then $a \leq a_k - y$ because there are $y$ empty bins between bin $a_k$ and the bin containing ball $\lceil a \rceil$. Symmetrically if $a \in A^{(\geq)}_k$ then $a \geq a_k + x-1$ because there are $x-1$ empty bins between bin $a_k$ and the bin containing ball $\lfloor a \rfloor$. 
	
	%We want to find an element in $N$ that is at least $y$ larger than the closest smaller element from $A_i$ and $x$ smaller than the closest larger element, for some $x$ and $y$. In the worst case $A_i$ contains $\tau$ pairs of adjacent elements hence if we remove those elements together with $x-1$ smaller and $y-1$ larger elements, we remove altogether at most $\tau \cdot (x + y)$ elements. If $x+y < \lfloor \log\DDelta\rfloor/\tau$ then some $a_i\in N$ satisfies the condition. Hence for any $i$ we can find $a_i \in N$ that is sufficiently far from all the elements of $A_i$.
	
	 In our distribution $\mathcal{D}_k$ in phase $k$, we include all edges from $E$, plus a subset of size $2^{a_k} -1$ selected uniformly from $E' \setminus E$. This is possible since the adversary can choose to activate any subset of links among the set $\{(v_i,v), i\in\{2,\dots,\DDelta\}\}$. With this choice, the degree of $v$ is $2^{a_k}$ in phase $k$ hence we can bound the probability that a successful transmission occurs in phase $k$.
	 
	 Having chosen the distribution of the edges between $v$ and $\{v_1,v_2,\dots,v_{\DDelta}\}$ we can now bound the probability of a successful transmission in any step $t$ in the considered phase. Let the event of a successful transmission in step $t$ be denoted by $S_t$. For this event to happen exactly one of the $2^{a_k}$ nodes among $\{v_1,v_2,\dots,v_{\DDelta}\}$ that are connected to $v$ need to transmit. We have:
	  \[\Pro{S_t} = 2^{a_k} p_t \cdot (1-p_t)^{2^{a_k}-1}.\]
	 Take any step $t$ and the corresponding probability $p_t$ used by the algorithm. We know that $a_k$ is chosen so that $a_k \geq \log(1/p_t) + y$ or $a_k \leq \log(1/p_t) - x$. We consider these cases separately:

	 \begin{description}
	 	\item [Case 1: \boldm{$a_k \leq \log(1/p_t) - x + 1$}]
	   \begin{align*}
	 	\Pro{S_t} &= 2^{a_k}\cdot p_t\cdot\left(1-p_t\right)^{2^{a_k}-1} \leq 2^{a_k - \log(1/p_t)} \leq 2^{-x+ 1} \\& = 2^{-\lfloor \log\DDelta /\tau\rfloor + 4 + \lfloor\log(\lfloor \ln\DDelta/\tau\rfloor)\rfloor}  \leq 2^{-\log\DDelta /\tau + \log(\lfloor \ln\DDelta/\tau\rfloor) + 5 }
	 	\\&\leq \frac{32 \lfloor \ln\DDelta/\tau\rfloor}{\DDelta^{1/\tau}}  \\&\leq \frac{32\ln{\DDelta}}{\DDelta^{1/\tau}\tau}.
	 	\end{align*}
	    \item [Case 2: \boldm{$a_k \geq \log(1/p_t) + y$}]
	    	 \begin{align*}
	    \Pro{S_t} &= 2^{a_k}\cdot p_t\cdot\left(1-p_t\right)^{2^{a_k}-1} =  \frac{2^{a_k}p_t}{1 - p_t } (1-p_t)^{2^{a_k}} \leq \frac{2^{a_k}p_t}{1 - p_t } e^{-2^{a_k}p_t}.
	    \end{align*}
	    We know that $a_k\geq \log(1/p_t) + y$ and $a_k \leq \log\DDelta$ thus $p_t \leq 2^{y}/\DDelta$ hence since $\DDelta \geq 9$ we get $1/(1-p_t)\leq 1/2$. Moreover since $2^{a_i} p_t \geq 2^y \geq 4$ we have $e^{-2^{a_i}p_t/2} < 1/(2^{a_i}p_t) $ (because $e^{x/2} > x$ for all $x$). Which gives in this case:
	    \begin{align*}
	    \Pro{S_t} & <  2 e^{-2^{a_k}p_t/2} \leq e^{-2^{y-1}} \leq 2e^{-\lfloor \ln\DDelta/\tau\rfloor} \leq
	    2\DDelta^{-1/\tau} \leq \frac{32 \ln\DDelta}{\DDelta^{1/\tau}\tau}.
	    \end{align*}
	 \end{description}

	 We have just shown that the probability that $v$ receives a message in our fixed phase $k$ is at most $\frac{32 \ln\DDelta}{\DDelta^{1/\tau}\tau}$. To conclude the proof, we apply a union bound to show that probability $v$ receives a message in at least one of $\DDelta^{1/\tau} \tau /(64 \ln \DDelta) - 1$ phases, which require $\DDelta^{1/\tau} \tau^2 /(64 \ln \DDelta) - \tau$ total rounds, is strictly less than 1/2:
	  \[
	 \Pro{\bigcup_{t=1}^{\DDelta^{1/\tau} \tau /(64 \ln \DDelta) - 1} S_t} \leq \sum_{t = 1}^{\DDelta^{1/\tau} \tau/(64 \ln \DDelta) - 1} \Pro{S_t} < \frac{1}{2}.
	 \]
	 %In your union bound, you need to replace the <= with < to get a success probability that is strictly less than 1/2, as is required…
	 
	% Hence with constant probability $v$ does not receive any message within $\DDelta^{1/\tau} \tau/(64\ln{\DDelta})  = \Theta(\Delta^{1/\tau}\tau/\log\Delta)$ steps.
\end{proof}}
%\InConference{\vspace*{-1.5mm}}
In our next theorem, we refine the argument used in Theorem~\ref{thm:lower} for the case where $\tau$ is a non-trivial amount smaller than the $\log{\Delta}$ threshold. 
We will argue that for smaller $\tau$, the complexity is $\Omega(\Delta^{1/\tau} \tau^2 /\log{\Delta})$, which more exactly matches our best upper bound. 
We are able to trade this small amount of extra wiggle room in $\tau$ for a stronger lower bound because it simplifies certain probabilistic obstacles in our argument.
Combined with our previous theorem,
the below result shows our upper bound performance is asymptotically optimal for uniform algorithms for all but a narrow range of stability factors,
for which it is near tight.

%It will follow that our upper bound is tight for all but a small gap of $\tau$ values, in which it is near tight.

 %
%InConference{\vspace*{-2mm}}
\begin{theorem}
	\label{thm:lower_2}
	Fix a maximum degree $\Delta \geq 10$, stability factor $\tau \leq \ln(\Delta-1)/(12\log\log (\Delta-1))$, and uniform local broadcast algorithm $\mathcal{A}$. Assume that $\mathcal{A}$ guarantees with probability at least $1/2$ to solve local broadcast in $f(\Delta, \tau)$ rounds when executed in any dual graph network with maximum degree $\Delta$ and fading adversary with stability $\tau$. It follows that $f(\Delta, \tau) \in \Omega(\Delta^{1/\tau} \tau^2 /\log{\Delta})$.	
\end{theorem}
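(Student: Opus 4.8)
My plan is to keep the entire combinatorial gadget and bookkeeping of Theorem~\ref{thm:lower} and sharpen only the per-phase estimate. Concretely, I reuse the star of Figure~\ref{fig:lower} with single receiver $v$, one reliable arm, and $\DDelta=\Delta-1$ arms the adversary may switch on, and I partition the execution into phases of $\tau$ rounds in which the adversary commits to a single distribution over the active degree of $v$. In a phase the algorithm uses $\tau$ transmit probabilities $p_t$, equivalently it drops $\tau$ \emph{log-estimates} $\ell_t=\log(1/p_t)$ onto $[0,\log\DDelta]$; as in Theorem~\ref{thm:lower} I answer with a degree $d=2^{a_k}$ placed in a gap of these estimates (a point mass, redrawn independently each of the $\tau$ rounds). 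If $s_t$ denotes the probability that $v$ receives in the round using $p_t$, then by independence the reception probability in the phase is $1-\prod_t(1-s_t)\le\sum_t s_t$, and a union bound over phases turns any per-phase bound of the form $O\!\left(\log\DDelta/(\DDelta^{1/\tau}\tau)\right)$ into the desired $\Omega(\DDelta^{1/\tau}\tau^2/\log\DDelta)$ lower bound. The point is that this target is a factor $\tau$ below what Theorem~\ref{thm:lower} proves: there each $s_t$ is bounded separately by $O(\log\DDelta/(\DDelta^{1/\tau}\tau))$ and one simply writes $\sum_t s_t\le\tau\max_t s_t$, losing the factor $\tau$ I now want to recover.

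Hence the crux is to bound the \emph{whole sum} $\sum_t s_t$, not merely its largest term. Using $s_t=dp_t(1-p_t)^{d-1}$, I split the estimates at $a_k$: on the ``small probability'' side ($\ell_t>a_k$, i.e.\ $dp_t<1$) I use $s_t\le dp_t=2^{a_k-\ell_t}$, which decays \emph{geometrically} in the distance $\ell_t-a_k$; on the ``large probability'' side ($\ell_t<a_k$, i.e.\ $dp_t>1$) I use $s_t\le dp_t e^{-dp_t}\le e^{-2^{\,a_k-\ell_t}/2}$, which decays \emph{super-exponentially}. The large-probability side is benign: a margin of about $\log\log\DDelta$ between $a_k$ and the nearest estimate below it already forces $2^{\,a_k-\ell_t}$ to be at least $\ln\DDelta/\tau$, so that even summed over all (possibly coincident) estimates this side contributes only $O(\log\DDelta/(\DDelta^{1/\tau}\tau))$, since an additive increase of the exponent absorbs any multiplicity. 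The small-probability side is then the only real contributor, and there $\sum_{\ell_t>a_k}2^{a_k-\ell_t}$ is a geometric series whose value is governed by the estimates closest above $a_k$.

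The main obstacle---and the reason for the stronger hypothesis on $\tau$---is that the algorithm may \emph{cluster} many of its $\tau$ probabilities at a single scale just above the gap: with multiplicity up to $\tau$ at distance $m$, the small-side sum is about $\tau\,2^{-m}$, a factor $\tau$ larger than a single term. To still reach $O(\log\DDelta/(\DDelta^{1/\tau}\tau))$ I need the small-side margin to be roughly $\log\tau$ larger than in Theorem~\ref{thm:lower}, i.e.\ a gap of size $\tfrac{\log\DDelta}{\tau}+\Theta(\log\tau)$ rather than $\tfrac{\log\DDelta}{\tau}-2$. The balls-in-bins count of Theorem~\ref{thm:lower} ($\le 2\tau$ balls in $\lfloor\log\DDelta\rfloor$ bins) does not by itself deliver this surplus, so I would obtain the extra $\Theta(\log\tau)=O(\log\log\DDelta)$ of room from the tightened stability bound $\tau\le\ln(\Delta-1)/(12\log\log(\Delta-1))$, which guarantees $\log\DDelta/\tau\ge 12\log\log\DDelta$ and hence dominates the correction with room to spare; when the probabilities are instead clustered at a \emph{large} scale I would fall back to the boundary degree $d=1$, for which $\sum_t s_t=\sum_t p_t$ is directly controlled. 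Making this dichotomy (clustered-high versus spread) quantitative, and verifying that the refined gap choice and margins simultaneously keep both decay sides below $O(\log\DDelta/(\DDelta^{1/\tau}\tau))$, is the technical heart of the argument; the final union bound over the $\Theta(\DDelta^{1/\tau}\tau/\log\DDelta)$ phases then yields $f(\Delta,\tau)\in\Omega(\DDelta^{1/\tau}\tau^2/\log\DDelta)$.
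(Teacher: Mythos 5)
Your setup---the star gadget of Figure~\ref{fig:lower}, the phase decomposition, bounding the whole phase sum $\sum_t s_t$ rather than $\tau\max_t s_t$, and the split into a geometrically decaying small-probability side and a super-exponentially decaying large-probability side---matches the paper's strategy, and your margin calculations (an additive $O(1)$ enlargement on the large side, an extra $\Theta(\log\tau)$ on the small side) correspond closely to the quantities $x,y,x',y'$ in the paper's proof. The genuine gap is in how you obtain the enlarged gap between log-estimates. You propose to find a single gap of size $\frac{\log\Delta}{\tau}+\Theta(\log\tau)$ and to justify its existence by the hypothesis $\tau\le\ln(\Delta-1)/(12\log\log(\Delta-1))$. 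That hypothesis only guarantees $\log\Delta/\tau\ge 12\log\log\Delta\gg\log\tau$, i.e.\ that the correction term is lower order than the main term; it does not produce a gap exceeding the average. The balls-in-bins count guarantees a maximum gap of only about $\log\Delta/\tau$, and an algorithm that spreads its $\tau$ log-estimates nearly equidistantly on $[0,\log\Delta]$ leaves \emph{no} gap of the enlarged size at all, so the step ``there exists a gap of size $\log\Delta/\tau+\Theta(\log\tau)$'' is false in general. Your fallback dichotomy (everything clustered at large scale $\Rightarrow$ take $d=1$) covers only one extreme and does not handle arrangements that are mostly spread out but carry moderate local multiplicities near each gap.

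The missing ingredient is the iterative counting argument the paper uses, which never needs an oversized gap. One places $l^*$ in a gap of the guaranteed size $\approx\log\Delta/\tau$ with the weaker margins $(x-3,y)$, so that each individual $s_i$ whose log-estimate lies in the proximity window $[l^*-y',\,l^*+x']$ is at most $O(\ln\Delta/(\Delta^{1/\tau}\tau))$, while every $s_i$ outside that window contributes only $O(\ln\Delta/(\Delta^{1/\tau}\tau^2))$ and hence $O(\ln\Delta/(\Delta^{1/\tau}\tau))$ in total. If this $l^*$ fails to make the phase sum small, the only possible cause is that at least $100$ log-estimates are clustered inside the window. Excising the (bounded-length) window removes at least $98$ estimates but only $O(\log\Delta/\tau+\log\tau)$ of length, so the average gap among the surviving estimates is still at least $\log\Delta/\tau$; repeating locates a new disjoint window trapping another $98$ estimates, and after $\tau/98$ iterations the $\tau$ estimates are exhausted---a contradiction. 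Without this global accounting of gaps versus multiplicities (or an equivalent), the ``technical heart'' you explicitly defer is exactly the part of the theorem that remains unproved.
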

\InConference{
%	\vspace*{-4.5mm}
\begin{proof}[Proof Idea]
	The proof is similar to proof of Theorem~\ref{thm:lower}. Here we also find a gap of length $\log\Delta/\tau$ and then we argue that in a ``proximity'' of each such a large gap there has to exist a large number of log-estimates. The proximity is defined so that all log-estimates outside of it are (almost) irrelevant, give a very small probability of success, if we choose the logarithm of the degree of $u$ to be inside the considered gap. This in turn implies that in the remaining part of the interval the ``density" of log-estimates is lower hence there must exist another large gap. By repeating this argument we can derive a contradiction with the assumed time complexity. 
	The reason why we need to restrict $\tau$ is that our defined proximity must be of the same order as $\log\Delta/\tau$ which is no longer true for $\tau$ being close to $\log\Delta$.
\end{proof}}
\JournalProof{\ProofSix}{
\begin{proof}
	In this proof we will use the same graph as in Theorem~\ref{thm:lower}. Let $G = (V,E)$ and $G' = (V,E')$. Let $V = \{v,u,v_1,\dots,v_{n-2}\}$ and let $R = \{u,v_1,v_2,\dots,v_{\Delta-1}\}$ and $E = \{(u,v_i), i\in \{1,2,\dots,\Delta\}\} \cup \{(v_1,v), (v,v_{\Delta})\} \cup \{(v_i,v_{i+1}), i \in \{\Delta,\dots,n-3\}\}$ and $E' = E\cup\{(v_i,v), i\in\{2,3,\dots,\Delta\}\}$ (see Figure~\ref{fig:lower}).

	Let $p_1,p_2,\dots$ be the fixed sequence of broadcast probabilities used by nodes in $B$ running $\mathcal{A}$. Using this sequence we will define a sequence of distributions over the edges that will cause a long time for this algorithm until node $v$ will receive a message.  
	
	The adversary is allowed to change the distribution once every $\tau$ steps. Therefore we will define the $k$-th distribution $\mathcal{D}_k$ based on sequence $P_k = (p_{(k-1)\tau  +  1}, p_{(k-1)\tau  +  2},\dots, p_{k\tau })$ of probabilities and distribution $\mathcal{D}_k$ will be used in rounds $(k-1)\tau  +  1, (k-1)\tau  +  2,\dots, k\tau$.  Consider intervals of $\tau$ time steps (call such interval a \textit{phase}) and the corresponding probabilities $p_{j + i\cdot\tau}$ ($j<\tau$). Let us fix any phase $k$ and consider values $l_i = \log(1/p_{i+ (k-1)\tau})$, for $i=1,2,\dots,\tau$. We denote $\DDelta = \Delta - 1$. As an adversary we are allowed to define an integer value $l^* \in [1,2,\dots,\lfloor\log \DDelta \rfloor]$ based on the $l$-values and define a distribution for phase $k$ in which there are always $2^{l^*}$ active links between nodes $v_1,v_2,\dots,v_{n}$ and $v$. The success probability in $i$-th step of the considered phase is then
\[
s_i = p_{i + (k-1)\tau} \cdot 2^{l^*} \cdot (1-p_{i + (k-1)\tau})^{2^{l^*}-1}.
\]
Our goal as an adversary is to find such $l^*$ that minimizes $\sum_{i=1}^{\tau} s_i$. We will show that it is always possible to find such $l^*$ that $\sum_{i=1}^{\tau} s_i = O(\DDelta^{-1/\tau} \log \DDelta/\tau) = \Theta(\Delta^{-1/\tau} \log \Delta/\tau)$. This will give us that $\Omega(\DDelta^{1/\tau}\tau/ \log \DDelta)$ phases of $\tau$ steps hence in total $\Omega(\Delta^{1/\tau}\tau^2/ \log \Delta)$ steps are needed to complete local broadcast with constant probability. 

Assume by contradiction that there exists a choice of $l_1,l_2,\dots,l_{\tau}$ such that for any choice of $l^*$ we have that $\sum_{i=1}^{\tau} s_i \geq c\frac{\log \DDelta}{\DDelta^{1/\tau}\tau}$, where $c = 2409$. We fix this choice of $l$-values $l_1,l_2,\dots,l_{\tau}$ and we denote:
\begin{align*}
x &= \log \DDelta / \tau + \log \tau - \log\ln \DDelta, \\
y &= \log(\ln \DDelta/\tau), \\
x' &= \log \DDelta/\tau + 2\log\tau -\log\ln \DDelta, \\
y' &= \log(\ln \DDelta/\tau + 2\ln \tau).
\end{align*}
Since $\tau < \ln \DDelta/(12\log\log \DDelta)$ we have that $y \geq 3$. Observe that $x + y = \log \DDelta/\tau$ and $x' + y' \leq \log \DDelta/\tau + 2\log\tau -\log\ln \DDelta+ \log(\ln \DDelta/\tau + 2 \ln \tau) = \log \DDelta/\tau + 2\log\tau + \log(1/\tau + 2\ln\tau/\ln \DDelta) \leq \log \DDelta/\tau + 2\log\tau + \log 3$. Let $\Delta^*$ denote the number of active links between $v$ and $v_1,v_2,\dots,v_{\DDelta}$ in the considered phase and $l^* = \log \Delta^*$. In any step if $\Delta^*$ is such that $l^* \geq y'$ then if we also have $p_i \geq \frac23$ then we get:
\begin{equation}
\label{eqn:bigProb}
s_i  = p_i 2^{l^*}(1-p_i)^{2^{l^*}-1 }\leq \frac{2^{l^*}}{3^{2^{l^*}- 1}} \leq \frac{3(\ln\DDelta/\tau + 2\ln\tau) }{e^{\ln\DDelta/\tau + 2\ln\tau}} \leq \frac{9\ln \DDelta}{\DDelta^{1/\tau}\tau^2}, 
\end{equation}
where the last inequality is true because $\tau \geq 1$ and $\ln \tau \leq \ln\DDelta$. This shows that the sum of all such $s_i$ is at most $\frac{9\ln \DDelta}{\DDelta^{1/\tau}\tau}$. Consider now only steps with $p_i < 2/3$. Then:
\[
s_i =  2^{l^* - l_i} \left(1-\frac{1}{2^{l_i}}\right)^{2^{l^*} - 1}  = 2^{l^* - l_i} \cdot \frac{\left(1 - \frac{1}{2^{l_i}}\right)^{2^{l^*}}}{1-p_i} \leq 3 \cdot 2^{l^* - l_i} \cdot e^{-2^{l^* - l_i}}
\]
\begin{align}
\label{eqn:x} &\Pro{\Single \mid l_i \geq l^* + x - 3} \leq 3 \cdot 2^{-x + 3}\leq \frac{24 \ln \DDelta}{\DDelta^{1/\tau}\tau} \\
\label{eqn:xprime} &\Pro{\Single \mid  l_i \geq l^* + x'-1}     \leq 3 \cdot 2^{-x'+1} \leq  \frac{6 \ln \DDelta}{\DDelta^{1/\tau}\tau^2} \\
\label{eqn:y} & \Pro{\Single \mid  l_i \leq l^* - y } \leq \frac{3 \cdot 2^{y}}{e^{2^y}} = \frac{3\ln \DDelta}{\DDelta^{1/\tau}\tau} \\
\label{eqn:yprime} & \Pro{\Single \mid  l_i \leq l^* - y'} \leq \frac{3 \cdot 2^{y'}}{e^{2^{y'}}}  = \frac{3 (\ln \DDelta/\tau + 2\log\tau)}{\DDelta^{1/\tau} \tau^2} \leq \frac{9 \ln \DDelta}{\DDelta^{1/\tau}\tau^2} 
\end{align}
Observe that for a fixed value of $l^*$, for any $i$ such that $l_i \notin [l^* - y', l^* + x'-1]$  we have $s_i \leq \frac{9\ln \DDelta}{\DDelta^{1/\tau} \tau^2}$ (by Equations~\eqref{eqn:xprime},~\eqref{eqn:yprime}). Hence the sum of all such values $s_i$ is at most $\frac{9\ln \DDelta}{\DDelta^{1/\tau} \tau}$. Hence we only need to find such $l^*$ that the sum of the values $s_i$ for which the corresponding $l_i \in [l^*_1 - y', l^*_1 + x']$ is less than $\frac{(c-9)\ln \DDelta}{\DDelta^{1/\tau} \tau}$.

 We denote the smallest and the largest $l$-values: $l_{sm} = \min_{i\in \{1,2,\dots,\tau\}}\{l_i\}$ and $l_{lg} = \max_{i \in \{1,2,\dots,\tau\}} \{l_i\}$. We will prove two following claims about $l_{sm}$ and $l_{lg}$:

\begin{description}
	\item[\boldm{$l_{sm} \leq x'$}] Observe that otherwise we can choose $l^* = 0$ ($\Delta^*$ is then equal to $1$ which corresponds to exactly one active link between $\{v_1,v_2,\dots,v_{\DDelta}\}$ and $v$) and then by Equation~\eqref{eqn:xprime} under this choice of $l^*$ all values $s_i$ would satisfy $s_i\leq \frac{6 \DDelta^{-1/\tau} \ln \DDelta}{\tau^2}$ (because if $l_i \geq x'$ then $p_i < 2/3$). 
	\item[\boldm{$l_{lg} \geq \log \DDelta - y'$}] If it is not the case, we choose $l^* = \log\DDelta$ and by Equation~\eqref{eqn:yprime} we have that if $p_i < 2/3$ then $s_i \leq \frac{9 \ln \DDelta}{\DDelta^{1/\tau}\tau^2}$ and by Equation~\eqref{eqn:bigProb} that if $p_i\geq 2/3$ then $s_i \leq \frac{9\ln \DDelta}{\DDelta^{1/\tau}\tau^2}$. And the sum of all values of $s_i$ is at most $\frac{9 \ln \DDelta}{\DDelta^{1/\tau}\tau}$ which contradicts our assumption. 
\end{description}
Consider now interval $\Gamma_1 = [l_{sm},l_{lg}]$. Two previous claims showed that $|\Gamma_1| \geq \log\DDelta - x' - y'$.  We can now consider the placement of values $l_i$ on $\Gamma_1$ and analyze gaps between the adjacent values. Gap $g_i$ is the difference between the $(i+1)$-st smallest and $i$-th smallest value out of all values $l_j$ that belong to $\Gamma_1$. We want to show the following:
\begin{equation}
\label{eqn:maxgap}
 \max_{i} g_i \leq x' + y'
\end{equation}
 Assume on the contrary that such a gap between $l_i$ and $l_j$ exists. Then we pick $l^* = \lceil l_i + y' \rceil$ and observe that $l^*$ is an integer and is at least $y'$ larger than each smaller $l$-value and at least $x'-1$ smaller than each larger $l$-value. In such a case $l^*\geq y'$ hence for all $i$ such that $p_i \geq 2/3$ by Equation~\eqref{eqn:bigProb} we have $s_i \leq \frac{9\log \DDelta}{\DDelta^{1/\tau}\tau^2}$ and if $p_i < 2/3$ then (since $l^* \geq y'$) by Equations~\eqref{eqn:xprime}~\eqref{eqn:yprime} we have $s_i \leq \frac{9 \ln \DDelta}{\DDelta^{1/\tau}\tau^2}$. Thus if any gap has size at least $x'+y'$ then $\sum_{i=0}^{\tau}s_i \leq \frac{9 \ln \DDelta}{\DDelta^{1/\tau}\tau}$ which contradicts our assumption.

 We know that there are at most $\tau-1$ gaps and that they cover area of at least $\log \DDelta - x' - y'$. Hence we can lower bound the average length of a gap:
\begin{align*}
d_1 &= \frac{\log \DDelta - x' - y'}{\tau - 1} \geq \frac{\log \DDelta - \log \DDelta/\tau - 2\log \tau - \log 3 }{\tau -1}  \\&=  \frac{\log \DDelta (1 - 1/\tau)}{\tau (1-1/\tau)} - \frac{2\log \tau + \log 3}{\tau} \geq \frac{\log \DDelta}{\tau} - 2.
\end{align*}
Thus there exists a gap $G_1$ of length at least $d_1$. Knowing that $y\geq 3$ we have $d_1 \ge x+y -2\geq1$ and inside this gap we can find an integer value $l^*_1$ that is at least $y$ larger than the closest smaller $l$-value and at least $x-3$ smaller than the closest smaller $l$-value. Consider values of $s_i$ with this choice of $l^*$. By Equations~\eqref{eqn:x}~\eqref{eqn:y} if $l^* = l^*_1$, each $s_i$ is at most $\frac{24 \DDelta^{- 1/\tau} \ln \DDelta}{\tau}$. Consider now interval $I_1 = [l^*_1- y',l^*_1 + x']$. By Equations~\eqref{eqn:xprime} and~\eqref{eqn:yprime} for all $i$ such that $l_i \notin I_1$ and $p_i < 2/3$ we have $s_i \leq \frac{9\DDelta^{-1/\tau} \ln \DDelta}{\tau^2}$. If $p_i \geq 2/3$ then also $s_i\leq \frac{9\DDelta^{-1/\tau} \ln \DDelta}{\tau^2}$ because $l^*_1 \geq y'$. Thus the sum of all $s_i$ for which $l_i \notin I_1$ or $p_i \geq 2/3$ is at most $\frac{9\DDelta^{-1/\tau} \ln \DDelta}{\tau}$.  Since by the assumption, the sum of all $s_i$ is at least $\frac{c\ln \DDelta}{\DDelta^{1/\tau} \tau}$ then the sum of all $s_i$ for which $l_i \in I_1$ and $p_i<2/3$ has to be at least $\frac{2400 \ln \DDelta}{\DDelta^{1/\tau}\tau}$. By the choice of $l_1^*$, each $s_i$ for which $l_i\in I_1$ and $p_i \geq 2/3$ is at most $\frac{24 \ln \DDelta}{\DDelta^{1/\tau}\tau}$ hence we must have at least $100$ such $l$-values. We have shown that there are at least $100$ $l$-values inside interval $I_1$. 

We find the smallest and the largest $l$-values inside $I_1$.
\[
l^{(1)}_{sm} = \min_{i\in\{1,2,\dots,\tau\}}\{l_i : l_i\in I_1\}
\]\[
l^{(1)}_{lg}  = \max_{i\in\{1,2,\dots,\tau\}}\{l_i : l_i\in I_1\}
\]
 We consider interval $\Gamma_2 = \Gamma_1 \setminus (l^{(1)}_{sm},l^{(1)}_{lg})$ (we remove the interior of the interval $[l^{(1)}_{sm},l^{(1)}_{lg}]$ keeping the endpoints). We know that we removed at least $98$ $l$-values. Since the $l$-values have to work for any $l^*$ we can now argue about the average length of a gap inside $\Gamma_2$ and locate a different value $l^*_2$ in the remaining interval and identify $100$ $l$-values close to $l^*_2$. But we need to make sure that $|l^*_1 - l^*_2| \geq x' + y'$ since otherwise we would count the same $l$-values twice.  
 
We extend the interval $I_1$ to $I'_1 = [l^*_1 - (x' + y'),l^*_1 + (x' + y')]$ and we find the smallest $l$-value larger than any $l$-value inside $I'_1$  (call it $l'^{(1)}_{sm}$) and the largest $l$-value smaller than any $l$-value inside $I'_1$ (call it $l'^{(1)}_{lg}$). If both these values exist, we consider interval $\Gamma^*_2 = \Gamma_1 \setminus (l'^{(1)}_{sm},l'^{(1)}_{lg})$ (we remove the interior of the interval $[l'_{sm},l'_{lg}]$ keeping the endpoints). If $l'_{sm}$ does not exist (there is no such $l$-value), we define $\Gamma^*_{2} =\Gamma_1 \setminus [l_{sm},l'^{(1)}_{lg})$ and symmetrically if $l'^{(1)}_{lg}$ does not exist we set $\Gamma^*_{2} = \Gamma_1 \setminus (l'^{(1)}_{sm},l_{lg}]$. 

Now we want to show that $\Gamma^*_{2} \geq |\Gamma_1| - 5(x' + y')$. It is because $|I'_{1}| = 2(x' + y')$ and by Equation~\eqref{eqn:maxgap}, length of any gap is at most $x' + y'$ hence distance between $l^*_1 - (x' + y')$ and $l'_{sm}$ is at most $x' + y'$ (similarly between $l^*_1 + (x' + y')$ and $l'_{lg}$). If $l'_{sm}$ or $l'_{lg}$ does not exist we remove additionally no more than $x' + y'$ because the smallest $l$-value that is most $x'$ and the largest is at least $\log \DDelta - y'$. This shows that we remove the total area of at most $5(x' + y')$. 

Now we consider the average length of a gap in $\Gamma^*_{2}$. We removed at least $98$ $l$-values (because $I^*_{1}$ contains $I'_1$) and area of at most $5(x' + y') \leq 5\log \DDelta/\tau + 10 \log\tau + 5\log 3$. Hence the average length of a gap in $\Gamma^*_2$ is:
\[
d_2 \geq \frac{\log \DDelta  - 6 (x' + y')}{\tau  -  98}\geq \frac{\log \DDelta}{\tau}.
\]
We pick a gap of length at least $\frac{\log \DDelta}{\tau}$ and find an integer $l^*_2$ at least $y$ larger than the closest smaller $l$-value and at least $x-1$ smaller than the closest larger $l$-value. Observe moreover that $| l^*_2 - l^*_1| \geq x' + y'$ because $l^*_2$ does not belong to the interior of interval $I'_1$. We define $I_2 = [l^*_2 - y',l^*_2 + x']$. Observe that $I_1$ and $I_2$ are disjoint (except possibly their endpoints). We can now argue that $I_2$ also contains $100$ $l$-values similarly as $I_1$. And moreover at most one $l$-value can be shared between $I_1$ and $I_2$ (because the interiors of the intervals are disjoint). And now we extend $I_2$ to $I^*_2$, construct $\Gamma_3$ and $\Gamma_3^*$ and repeat the whole procedure. This procedure identifies at least $98$ unique $l$-values in each step hence it can last for at most $\tau/98$ iterations. But we remove at most $5\log \DDelta/\tau + 10 \log\tau + 5\log 3$ area per iteration. Since we assumed $\tau\leq \ln \DDelta/\log \log \DDelta$, then $5 \log \DDelta/\tau + 10 \log\tau + 5\log 3 \leq 10\log \DDelta/\tau$. This leads to contradiction since there are only $\tau$ $l$-values. Hence for any choice of $l_1,\dots,l_{\tau}$ there exists $l^*$ such that $\sum_{i=1}^{\tau} s_i < c \frac{\ln \DDelta}{\DDelta^{1/\tau}\tau}$. Thus by the union bound the algorithm needs to run for at least $\DDelta^{1/\tau} \tau/(c\ln \DDelta)$ phases to accumulate the total probability of success of $1/2$. Knowing that each phase lasts for $\tau$ rounds the total number of steps needed is $\Omega(\Delta^{1/\tau} \tau^2/\ln \Delta)$
\end{proof}}

\InConference{\vspace*{-2mm}}
\section{Global Broadcast}
%\InConference{\vspace*{-1mm}}
We now turn our attention to the global broadcast problem.
Our upper bound will use the same broadcast probability sequence as our best local broadcast algorithm from before. 
As with local broadcast, for $\tau \geq \log{\Delta}$, our performance nearly matches the optimal performance in the standard radio network model,
and then degrades as $\tau$ shrinks toward $1$.
Our lower bound will establish that this degredation is near optimal for uniform algorithms in this setting.
 In this section we also use the notation $\bar{\tau} =  \min\{\tau,\lceil\log\Delta\rceil\}$.
\InConference{\vspace*{-2mm}}
\subsection{Upper Bound}
%\InConference{\vspace*{-1mm}}
A uniform {global} broadcast algorithm requires each node to cycle through a predetermined sequence of broadcast probabilities once it becomes {\em active} (i.e., has received the broadcast message). 
The only slight twist in our algorithm's presentation is that we assume that once a node becomes active,
it waits until the start of the next probability cycle to start broadcasting.
To implement this logic in pseudocode, we use the variable  $Time$ to indicate the current global round count.
We detail this algorithm below (notice, the $\algnametwo(2)$ is the local broadcast algorithm analyzed in Lemma~\ref{lem:upper_2}). 
\InConference{\vspace*{-1mm}}
\begin{algorithm}[h]
	\TitleOfAlgo{\globalalgname$(\epsilon)$} 
	Wait until receiving the message\\
	Wait until $(Time \text{ mod }2\bar{\tau}) = 0$\\
	\Repeat($\lceil \ln{(2n/\epsilon)}\rceil \cdot \lceil 4 \Delta^{1/\bar{\tau}}\bar{\tau}/\log\Delta \rceil$  times){
			\algnametwo$(2)$ }
\end{algorithm}
%\InConference{\vspace*{-1mm}}
\begin{theorem}
\label{thm:global_upper}	
		Fix an error bound $\epsilon > 0$. It follows that algorithm
	$ \globalalgname(\epsilon)$
	completes global broadcast in time $O\left((D+ \log (n/\epsilon))\cdot  \frac{\Delta^{1/\bar{\tau}} \bar{\tau}^2}{\log\Delta}\right)$, with probability at least $1-\epsilon$.
\end{theorem}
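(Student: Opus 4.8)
The plan is to reduce everything to the single per-cycle local-broadcast guarantee of Lemma~\ref{lem:upper_2}(\ref{statement2}) and then run a layered propagation (delay-sequence) argument on the reliable graph $G$. Write $p^* = \frac{\log_{2e}\Delta}{4\Delta^{1/\bar{\tau}}\bar{\tau}}$ for the success probability of a single invocation of $\algnametwo(2)$, and call each such invocation (which occupies $2\bar{\tau}$ rounds) a \emph{super-cycle}. The synchronization step ``wait until $(Time \bmod 2\bar{\tau})=0$'' guarantees that all active nodes begin each super-cycle in lockstep, so within any fixed super-cycle the set of broadcasting nodes is constant; this is exactly the setting of Lemma~\ref{lem:upper_2}(\ref{statement2}). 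The key structural observation is that once a node $u$ is active, every reliable neighbor $w$ of $u$ is a receiver with at least one active reliable neighbor, so in each super-cycle of $u$'s broadcasting window $w$ receives some message with probability at least $p^*$, and this lower bound holds conditioned on the entire past (it is worst-case over the adversary's edge distribution, while the transmit coins are fresh each super-cycle).

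First I would show that every node eventually receives the message. Fix a BFS tree $\mathcal{T}$ of $G$ rooted at the source; for each non-root $x$ with parent $\pi(x)$, let $F_x$ be the event that $x$ fails to receive during the entire $R$-super-cycle window in which $\pi(x)$ is active, where $R=\lceil \ln(2n/\epsilon)\rceil\cdot\lceil 4\Delta^{1/\bar{\tau}}\bar{\tau}/\log\Delta \rceil$ is the repetition count. By the observation above and conditional independence across super-cycles, $\Pro{F_x}\le (1-p^*)^R\le e^{-p^* R}\le \epsilon/(2n)$ (after fixing the constant in $R$ so that $p^* R\ge \ln(2n/\epsilon)$). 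A union bound over the at most $n$ non-root nodes gives $\Pro{\bigcup_x F_x}\le \epsilon/2$, and an induction on BFS depth shows that on the complementary event every node becomes active, since each $x$ receives inside its parent's window and can therefore relay within its own window.

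Next I would bound the propagation time. Fix a node $w$ at reliable distance $d\le D$ from the source along a shortest path $u_0,\dots,u_d=w$, and let $H_i$ be the number of super-cycles from the moment $u_i$ becomes active until $u_{i+1}$ first receives. Each $H_i$ is stochastically dominated by a geometric variable with parameter $p^*$, and the activation super-cycle of $w$ is at most $\sum_{i<d}(H_i+1)=d+\sum_{i<d}H_i$. Reformulating $\sum_{i<d}H_i>N$ as ``fewer than $d$ successes in $N$ Bernoulli$(p^*)$ trials'' and applying a Chernoff bound gives $\Pro{\sum_{i<d}H_i > \tfrac{c(d+\ln(n/\epsilon))}{p^*}}\le \epsilon/(2n)$ for a suitable constant $c$; a union bound over all $n$ nodes shows that with probability at least $1-\epsilon/2$ every node is active within $O\!\left((D+\log(n/\epsilon))/p^*\right)$ super-cycles. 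Combining the two parts via a final union bound yields overall success probability at least $1-\epsilon$, and since each super-cycle lasts $2\bar{\tau}$ rounds and $1/p^*=\Theta(\Delta^{1/\bar{\tau}}\bar{\tau}/\log\Delta)$, the total running time is $O\!\left((D+\log(n/\epsilon))\cdot \frac{\Delta^{1/\bar{\tau}}\bar{\tau}^2}{\log\Delta}\right)$, as claimed.

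The main obstacle is the probabilistic bookkeeping around the adaptive fading adversary: since the adversary may choose each round's distribution based on history, the per-super-cycle successes are not literally independent, so I must phrase everything through the ``success probability at least $p^*$ conditioned on the past'' property and invoke stochastic domination rather than exact independence. The second delicate point is organizing the union bounds so that the $D$ and $\log(n/\epsilon)$ terms \emph{add} rather than multiply: this is why correctness is argued edge-by-edge over a BFS tree (avoiding a spurious factor of $D$), while the timing bound comes from the concentration of a sum of $d$ geometric variables, whose tail contributes exactly the additive $\log(n/\epsilon)/p^*$ term.
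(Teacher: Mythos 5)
Your proposal is correct and follows essentially the same route as the paper's proof: both reduce to the per-phase success probability $p^*$ from Lemma~\ref{lem:upper_2}(\ref{statement2}), establish correctness by a union bound over the edges of a BFS tree of $G$ (giving the $\epsilon/2$ term), and bound the propagation time by stochastic domination of the per-hop delays by geometric random variables followed by a concentration bound and a union bound over nodes. The only cosmetic difference is that you derive the tail bound for the sum of geometrics via a Chernoff bound on the Bernoulli reformulation, whereas the paper cites a ready-made inequality for sums of geometric variables; your explicit handling of the adaptive adversary via conditioning on the past is, if anything, slightly more careful than the paper's.
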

\InConference{
	\begin{proof}[Proof Idea]
		Here we use the same idea as in the proof of~\cite[Theorem 4]{Bar-YehudaGI92}. There a local broadcast algorithm (\Decay) is used as a black box in a global broadcast algorithm. We use a different local broadcast algorithm (\algnametwo) but the same analysis applies.
	\end{proof}
}
\JournalProof{\ProofSeven}{
\begin{proof}
	Similarly to the analysis of the local broadcast algorithms, we consider only the case of $\tau \leq \log\Delta$ since for any larger $\tau$ we use the algorithm for $\tau = \log\Delta$.
	Take any station $u$ and assume that some positive number of neighbors of $u$ in $E$ execute in parallel procedure $\algnametwo(2)$. Then by Lemma~\ref{lem:upper_2} station $u$ receives a message from some neighbor with probability at least $\frac{\ln\Delta}{4 \Delta^{1/\tau}\tau}$. Note that the same number of neighbors of $u$ have to execute both procedures \procname of $\algnametwo(2)$ and at least one of these neighbors has to be connected to $u$ by a reliable link. This is true since after receiving the message, a station waits until a time slot that is a multiple of $2\tau$ (Line $3$ in the pseudocode). Hence we can treat each execution of $\algnametwo(2)$ as a single \emph{phase}. 
	
	Let $o$ denote the originator of the message. Fix any tree $\mathcal{T}$ of shortest paths on graph $G$ (e.g., BFS Tree) on edges from $E$ (reliable) rooted at $o$. We would like to bound the progress of the message on tree $\mathcal{T}$. For any station $u$ we can denote by $p(u)$ the parent of $u$ in tree $T$. For any station $u$ we can define the earliest time step $T(u)$ in which $p(u)$ receives the message. We set $T(u) = \infty$ if the message does not reach $p(u)$. If $T(u) < \infty$ we consider $\lceil \ln(2n/\epsilon) \rceil\cdot \lceil4 \Delta^{1/\tau}\tau/\log\Delta \rceil$ phases that follow step $T(u)$.  A phase is called \emph{successful} if it succeeds in delivering the message to $u$ and \emph{unsuccessful} otherwise. Note that (assuming that $T(u) < \infty$) the probability that all phases are unsuccessful for any fixed $u$ is at most
	\begin{equation}
	\label{eqn:success}
	\left(1- \frac{\ln\Delta}{4 \Delta^{1/\tau}\tau}\right)^{\lceil \ln(2n/\epsilon) \rceil\cdot \lceil 4 \Delta^{1/\tau}\tau/\log\Delta \rceil} \leq e^{-\lceil \ln(2n/\epsilon) \rceil} = \frac{\epsilon}{2n}.
	\end{equation}
	Let us denote by $S$ an event that $T(u)  < \infty$ for all stations $u$. And by $S_i$ the event that $T(u) < \infty$ for all stations at distance at most $i$ from the root in tree $\mathcal{T}$. If $d_i$ denotes the number of stations at distance $i$ from the root in tree $\mathcal{T}$ we get:
	\begin{alignat*}{3}
	\Pro{S}  &= \Pro{S_D} \geq \Pro{S_{D}|S_{D-1}} \Pro{S_{D-1}} \\
	             &\geq \Pro{S_1} \prod_{i=2}^{D}\Pro{S_{i}|S_{i-1}} \\
	             & \geq \prod_{i=1}^D\left(1 - \frac{\epsilon d_i}{2n} \right) &\text{by~\eqref{eqn:success} and union bound}\\
	             & \geq 1- \sum_{i=1}^D \frac{\epsilon d_i}{2n} &\text{by Lemma~\ref{lem:wpi}} \\
	             & = 1 - \frac{\epsilon}{2}.
	\end{alignat*}
	If event $S$ takes place, the message reaches all the nodes of the network. Clearly it can reach node $u$ not necessarily from its parent $p(u)$ in tree $\mathcal{T}$, but this would only help in our analysis (it will cause the message to arrive at $u$ faster). Now we want to bound the number of phases it takes for the message to traverse a path in the tree.  Fix any station $u$ and let $\mathcal{P} = (o, v_1,v_2,\dots,v_{D'-1}, u)$ denote the path from $o$ to $u$ in tree $T$ (note that $D' \leq D$). We denote by $R_i$ the round in which $v_i$ receives the message ($R_{D'}$ denotes the round in which $u$ receives the message) and introduce random variables $\Delta_i = \max\{0, R_{i} - R_{i-1}\}$. Conditioning on event $S$, variables $\Delta_i$ are stochastically dominated by independent geometric random variables with success probability $\frac{\ln\Delta}{4 \Delta^{1/\tau}\tau}$. We have $D'$ such variables and the probability that sum $T$ of them exceeds $L = 4(D' + \ln(2n/\epsilon))\cdot  \frac{7 \Delta^{1/\tau} \tau}{\log\Delta} = \Ex{T} \cdot  4(1 + \ln(2n/\epsilon)/D')$ can be bounded using inequalities from~\cite{janson2017tail}. Denote $\lambda = 4(1 + \ln(2n/\epsilon)/D')$ and observe that $(\lambda - 1)/2 \geq \ln \lambda$ is true since $\lambda > 4$. We get:
	\begin{align*}
	\Pro{T \geq L} &= \Pro{T \geq \Ex{T}\cdot \lambda} \\&\leq \frac{1}{\lambda} \cdot \left(1 - \frac{\ln\Delta}{4 \Delta^{1/\tau}\tau}\right)^{(\lambda - 1 -\ln{\lambda})\Ex{T}} \\&\leq \frac{1}{\lambda} \left(1- \frac{\ln\Delta}{4 \Delta^{1/\tau}\tau}\right)^{\Ex{S}(\lambda - 1)/ 2} \\&\leq \frac{1}{\lambda}e^{-\frac32D' - 2\ln(2n/\epsilon) } \leq \frac{\epsilon^2}{4n^2},
	\end{align*}
	and by taking the union bound over all stations $u$, we get that with probability at least $1-\epsilon^2/(4n)$ the message reaches all nodes within time $4(D + \ln(2n/\epsilon))\cdot  \frac{4 \Delta^{1/\tau} \tau}{\log\Delta}$, conditioned on $S$. Since $S$ takes place with probability at least $1-\epsilon/2$ and since each phase takes $2\tau$ time steps, this shows that the algorithm works within time $8(D + \ln(2n/\epsilon))\cdot  \frac{4 \Delta^{1/\tau} \tau^2}{\log\Delta}$ with probability at least $(1-\epsilon/2)(1-\epsilon^2/(4n)) \geq 1-\epsilon$.
\end{proof}}
\InConference{\vspace*{-2mm}}
\subsection{Lower Bound}
%\InConference{\vspace*{-1mm}}
The global broadcast lower bound of
$\Omega(D\log(n/D))$, proved by Kushilevitz and Mansour~\cite{km} for the standard radio network model,
clearly still holds in our setting, as the radio network model is a special case of the dual graph model where $E'=E$.
Similarly, the $\Omega(\log{n}\log{\Delta})$ lower bound proved by Alon~\etal~\cite{alon:1991} also applies.\footnote{This bound is actually
stated as $\Omega(\log^2{n})$, but $\Delta = \Theta(n)$ in the lower bound network, so it can be expressed in terms of $\Delta$ as well for our purposes here.}
It follows that for $\tau \geq \log{\Delta}$,
we almost match the optimal bound for the standard radio network model,
and do match the time of the seminal algorithm of Bar-Yehuda et al.~\cite{Bar-YehudaGI92}. 

For smaller $\tau$,
this performance degrades rapidly.
Here we prove this degradation is near optimal for uniform global broadcast algorithms in our model.
We apply the obvious approach of breaking the problem of global broadcast into multiple sequential instances of local broadcast
(though there are some non-obvious obstacles that arise in implementing this idea). 
As with our local broadcast lower bounds, we separate out the case where $\tau$ is at least a $1/\log\log{\Delta}$ factor smaller than our $\log{\Delta}$ threshold,
as we can obtain a slightly stronger bound under this assumption. 
\InConference{\vspace*{-1mm}}
 \begin{theorem}
 	\label{thm:global_lower}
 		Fix a maximum degree $\Delta \geq 10$, stability factor $\tau$, diameter $D \geq 24$ and uniform global broadcast algorithm $\mathcal{A}$. Assume that $\mathcal{A}$ solves global broadcast in expected time $f(\Delta,D,\tau)$ in all graphs with diameter $D$, maximum degree $\Delta$ and fading adversary with stability $\tau$. It follows that:
% 	\InConference{	\vspace*{-1mm}}
 		\begin{enumerate}
 			\item if $\tau < \ln(\Delta-1)/(12\log\log (\Delta-1))$ then $f(\Delta,D,\tau) \in\Omega(D\Delta^{1/\tau}\tau^2/\log\Delta)$,
 	%	\InConference{	\vspace*{-1mm}}
 			\item if $\tau < \ln(\Delta-1)/16$ then $f(\Delta,D,\tau) \in\Omega(D\Delta^{1/\tau}\tau/\log\Delta)$.
 		\end{enumerate}
 \end{theorem}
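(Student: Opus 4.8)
The plan is to reduce global broadcast to a sequence of the hard local-broadcast instances constructed in Theorems~\ref{thm:lower} and~\ref{thm:lower_2}. I would build a ``line'' of $m = \Theta(D)$ vertex-disjoint copies of the star gadget of Figure~\ref{fig:lower}, indexed $1,\dots,m$. Writing $u_j$ for the center of copy $j$, $v_1^{(j)},\dots,v_{\Delta-1}^{(j)}$ for its arms (the would-be local broadcasters), and $w_j$ for its receiver, I connect consecutive copies by a short reliable tail from $w_j$ to $u_{j+1}$, choosing the tail lengths so the reliable diameter is exactly $D$ (there is room since $D \geq 24$ admits a constant number of gadgets per unit of diameter) and so that every node keeps degree at most $\Delta$. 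The source sits at $u_1$. Since the only route into copy $j+1$ runs through $u_{j+1}$, the message must cross the copies in order, and hence the running time is at least $\sum_{j=1}^{m} X_j$, where $X_j$ is the number of rounds between $u_j$'s activation and the first moment $w_j$ receives the message.

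First I would verify that each copy reproduces the synchronized, fully-active setting assumed by the local lower bounds. The decisive observation is that $u_j$ is the unique neighbor of every arm $v_i^{(j)}$ that can hold the message before $w_j$ does (an arm's only other neighbor is $w_j$, which lies downstream). Therefore the first round in which $u_j$ transmits is collision-free at every arm, so that single transmission delivers the message to all $\Delta-1$ arms simultaneously; after waiting for the next cycle boundary they begin broadcasting in lockstep. This guarantees that throughout the contention at $w_j$ all $\Delta-1$ arms are active, which is exactly what lets the adversary set $w_j$'s effective degree freely over the full range $[1,\Delta-1]$, as in the proofs of Theorems~\ref{thm:lower} and~\ref{thm:lower_2}. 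Moreover, every active node across all copies is synchronized to the same global probability cycle, so in any round all active arms use the same probability $p_t$; the adversary may thus partition time into global phases of length $\tau$ and, within each phase, apply the single hard degree dictated by that phase's probabilities to every currently active copy at once, all realized by one distribution over $E'\setminus E$.

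Next I would import the quantitative core of the local bounds. Their proofs give, for the hard distribution, a per-step bound $\Pro{w_j \text{ receives in step } t} \le \frac{32\ln\Delta}{\Delta^{1/\tau}\tau}$ (Theorem~\ref{thm:lower}) and a per-phase bound $\frac{c\ln\Delta}{\Delta^{1/\tau}\tau}$ over each $\tau$-round phase (Theorem~\ref{thm:lower_2}); crucially these hold conditioned on the history up to step $t$, because within an active copy the hard degree is fixed by the deterministic probability schedule together with the adaptive adversary's knowledge of which copies are active, while the step-$t$ edge draws and coin flips are independent of the past. A standard geometric-domination argument (over steps for part~(2), over $\tau$-round phases for part~(1)) then yields $\Ex{X_j \mid \text{history up to } u_j\text{'s activation}} = \Omega(\Delta^{1/\tau}\tau/\log\Delta)$ under the hypothesis of part~(2) and $\Omega(\Delta^{1/\tau}\tau^2/\log\Delta)$ under the stronger hypothesis of part~(1). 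By the tower property and linearity, $\Ex{\sum_j X_j} = \sum_j \Ex{X_j}$, and with $m=\Theta(D)$ this gives the two claimed bounds; the time $u_j$ needs to reach its arms and the message needs to traverse the tails only adds to the total and can be dropped for a lower bound.

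The main obstacle I expect is precisely this synchronized-activation issue: a naive concatenation would let the arms of a copy go active at different, uncontrolled times, so that only a subset contends at $w_j$ and the adversary loses the full $[1,\Delta-1]$ range of effective degrees on which the gap arguments of Theorems~\ref{thm:lower} and~\ref{thm:lower_2} depend. The funnel structure---one entrance node per copy whose single collision-free transmission activates all arms at once---is what resolves it, and the remaining care lies in the bookkeeping: keeping every degree at most $\Delta$, ensuring the tails neither introduce shortcuts nor shrink the diameter below $D$, and checking that one global phase structure simultaneously defeats every active copy.
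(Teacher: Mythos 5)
Your proposal matches the paper's proof in all essentials: the same chain of $\Theta(D)$ star gadgets joined receiver-to-center, the same observation that the entrance node's collision-free transmission activates all arms of a gadget in lockstep, and the same fix for the adversary's $\tau$-round change restriction (the paper parks each unreached gadget on the first hard distribution $\mathcal{D}_1$, which is your globally aligned phase structure in different clothing). The only cosmetic difference is that you bound the expectation by linearity and geometric domination where the paper applies a Chernoff bound to the independent per-gadget delays to get a probability-$1/2$ statement; both suffice for the stated expected-time bound.
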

\InConference{
%\vspace*{-4mm}
\begin{proof}[Proof Idea]
	In this proof we connect together $\Omega(D)$ gadgets used in the proof of Theorem~\ref{thm:lower} (and~\ref{thm:lower_2}) and lower bound the time the message spends in each of the gadgets. The only problem in this approach is that after the message enters to the next gadget, the adversary might not be allowed to change the distribution for some number of steps. We solve this by keeping a distribution that is ``hard'' for the first $\tau$ probabilities of the algorithm in each of the gadgets that has not been reached by the message yet.
\end{proof}
}
\JournalProof{\ProofEight}{
 \begin{proof}
	\begin{figure}
	\centering
	\includegraphics[width=\linewidth]{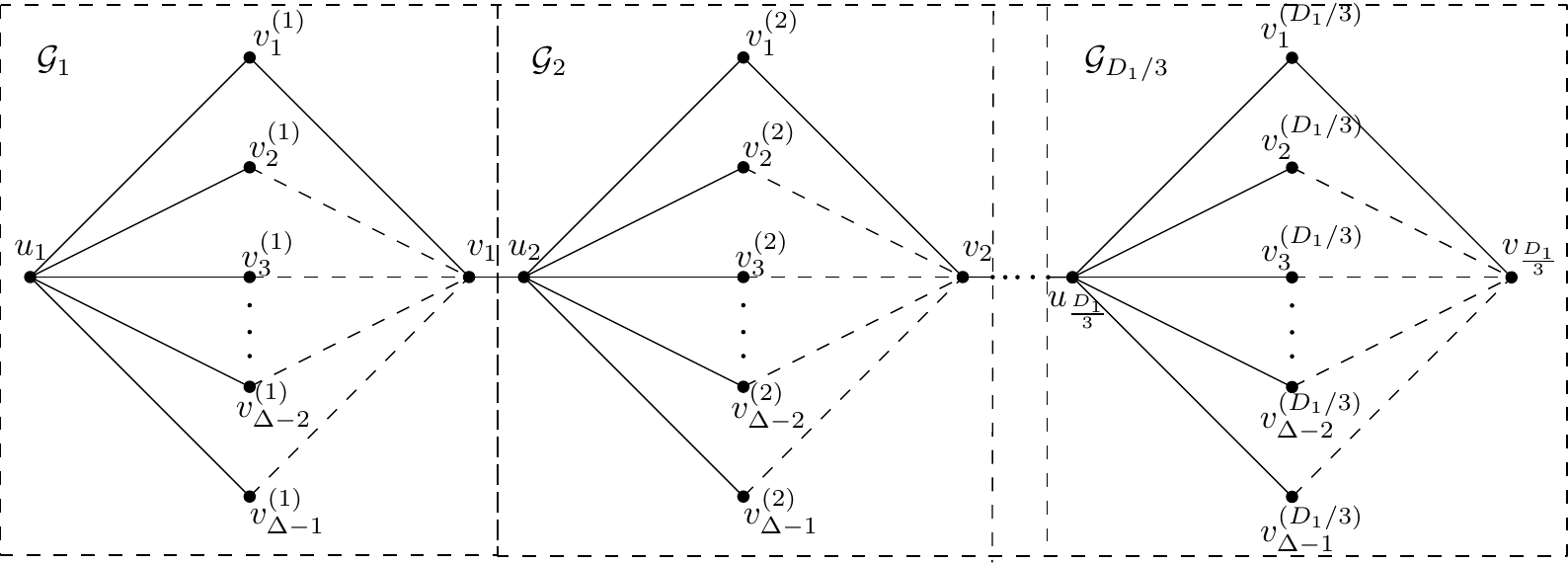}
	\caption{A graph used in proof of Theorem~\ref{thm:global_lower}.}
	\label{fig:lowerGlobal}
\end{figure}
We assume first that $D$ is divisible by $3$ (if it is not we can decrease $D$ by one or two nodes to make it divisible by $3$, without impacting the asymptotic bounds). We construct the dual graph $G, G'$ by connecting together $D/3$ gadgets, $\mathcal{G}_1, \mathcal{G}_2,\dots, \mathcal{G}_{D/3}$, as shown in Figure~\ref{fig:lowerGlobal}. In particular, each gadget $\mathcal{G}_i$ is the same graph structure used to prove our local broadcast lower bound. Formally, for each $i=1,2,…,D/3$, gadget $\mathcal{G}_i$ is a dual graph $G_i = (V_i, E_i)$, $G'_i = (V_i,E'_i)$ where $E_i = \{(u_i,v_{j}^{(i)}) : j =1,2,\dots,\Delta - 1\} \cup \{(v_1^{(i)}, v_i)\}$, $E'_i = E_i \cup \{(v_j^{(i)},v_i),j=2,3,\dots,\Delta  - 1\}$. We denote the set of edges connecting the gadgets by $E_c = \{(v_i,u_{i+1}) : i = 1,2,\dots,D/3 - 1\}$. Finally we can define the total set of nodes and edges in the complete dual graph $G=(V,E)$ and $G'=(V,E')$ as follows: $V = \bigcup_{i=1}^{D/3}V_i$, $E = E_c  \cup \bigcup_{i=1}^{D/3} E_i$, $E' = \bigcup_{i=1}^{D/3} E'_i$. 
We will show statement $1$ by applying Theorem~\ref{thm:lower_2} to each gadget, statement $2$ can be shown using the same proof by applying Theorem~\ref{thm:lower}.

We bound the dissemination of a broadcast message in this graph originating at node $u_1$. We can view the progression of the message through the chain of gadgets $\mathcal{G}_1,\mathcal{G}_2,\dots,\mathcal{G}_{D/3}$ as a sequence of local broadcasts. When the message arrives at a node $u_i$, it is propagated to nodes $v_1^{(i)},v_2^{(i)},\dots,v_{\Delta-1}^{(i)}$  and at this step delivering the message to $v_{i}$ is exactly the local broadcast problem considered in Theorem~\ref{thm:lower_2}. In this theorem we constructed a sequence of distributions that yields a high running time. The distribution changes every exactly $\tau$ steps i.e., we have a distribution $\mathcal{D}_k$ for steps $1 + (k-1)\tau,2+(k-1)\tau,\dots,k\tau$.
 We cannot immediately apply the result for local broadcast because the adversary might not be allowed to change the distribution immediately when the message arrives in a gadget. Moreover in the global broadcast problem, stations are allowed to delay the transmissions for some number of steps. We can easily solve this problem by keeping the ``first" distribution $\mathcal{D}_1$ in each gadget until the message reaches the gadget, at which point the adversary can start the sequence of changes specified by the local broadcast lower bound. 
 
 More precisely, we denote sequence $p_1,p_2,\dots$ of probabilities used by algorithm $\mathcal{A}$ and we denote subsequences $P_k= (p_{1 + (k-1)\tau},p_{2 + (k-1)\tau},\dots,p_{k\tau})$. We want to use distributions $\mathcal{D}_1,\mathcal{D}_2,\dots$ from Theorem~\ref{thm:lower_2} in such a way that if $i$ is the furthest gadget reached by the message and its nodes are in phase $k$ (i.e., are using probabilities from sequence $P_k$) then distribution in gadget $\mathcal{G}_i$ is $\mathcal{D}_k$. If the message has not reach the gadget yet, distribution in the gadget is $\mathcal{D}_1$. Finally if the message already reached node $v_i$ in gadget $\mathcal{G}_i$ for any $i$ we do not change the distribution in this gadget any more.  We need to show that with this construction we do not need to change the distribution more frequently than once per $\tau$ steps. This is true because we only change the distribution in the furthest gadget (call it $\mathcal{G}_i$) reached by the message and moreover we change it from $\mathcal{D}_k$ to $\mathcal{D}_{k+1}$ only after the stations $v_{1}^{(i)},v_{2}^{(i)},\dots,v_{\Delta - 1}^{(i)}$ have finished transmitting with probabilities $P_k$, which takes at least $\tau$ steps (it might take more because stations might delay transmitting with probabilities $P_1$). 

Let us define random variables $X_i$ for $i=1,2,\dots,D/3$ as the number of time steps it takes for nodes $v_1^{(i)},v_2^{(i)},\dots,v_{\Delta - 1}^{(i)}$ to deliver the message to $v_i$. More precisely it is the number of steps between the first step when the stations $v_1^{(i)},v_2^{(i)},\dots,v_{\Delta - 1}^{(i)}$ transmit with probability $p_1$ and the step in which the first successful transmission delivers the message to $v_i$ (including the step of the successful transmission). Note that the steps during which the stations $v_1^{(i)},v_2^{(i)},\dots,v_{\Delta - 1}^{(i)}$  delay transmitting until the beginning of the next probability cycle are not counted in variable $X_i$. The steps counted by variable $X_i$ can be seen as local broadcast. By Theorem~\ref{thm:lower_2} we have that  $\Pro{X_i \leq \DDelta^{1/\tau} \tau^2/(c\ln \DDelta)} \leq 1/2$ for some constant $c > 1$. Moreover variables $X_i$ are independent because choices of the stations in each gadget are independent hence we can use Chernoff bound to lower bound $X = \sum_{i=1}^{D/3}X_i$.  By the assumptions on $D$ we have that $\Ex{X} \geq 8$ (because $\Ex{X_i} \geq 1$), hence
\[
\Pro{X \leq \Ex{X}/2} \leq e^{-\Ex{X}/8}\leq 1/2
\]
Observe that $X$ lower bounds the time of the global broadcast. This shows that the global broadcast needs $\Omega(D\Delta^{1/\tau} \tau^2/\ln \Delta)$ steps with probability at least $1/2$.

If $D$ is not divisible by $3$ we construct our graph with diameter $3\lfloor D/3 \rfloor$ and attach a path of $ D - 3\lfloor D/3 \rfloor$ (one or two) vertices to node $v_{D_1/3}$. This cannot decrease the time of broadcast hence we get the bound $\Omega((D-2) \Delta^{1/\tau} \tau^2 /\log{\Delta}) = \Omega(D \Delta^{1/\tau} \tau^2 /\log{\Delta})$.

 \end{proof}}
 %For each phase $k$ and for each gadget $\mathcal{G}_i$ there exists a value $a_k$ such that if in each slot of phase $k$ exactly $2^{a_k} + 1$ edges between nodes $v_j^{(i)}$ and $v_i$, for $j = 1,2,\dots,\Delta -1$ then the probability that this phase is successful is at most $$\Omega(\DDelta^{1/\tau} \tau^2/\ln \DDelta)$$ for some constant $c$. 
\InConference{\vspace*{-2mm}}
%\section{Future Work}
\section{Correlations}
%\InConference{\vspace*{-1mm}}
Here we explore a promising direction for the study of broadcast in realistic radio network models.
In particular, the fading adversary studied above assumes that the distribution draws are independent.
As we will show,
interesting results are still possible when considering the even more general case where the marginal distributions in each step are not necessarily independent in each round. 
More precisely, in this case, the adversary chooses a distribution over sequences of length at least $\tau$ of the sets of unreliable edges.  A sequence from this distribution is used to determine which unreliable edges are active in successive steps. The adversary after a least $\tau$ steps can decide to change the distribution. In this model, we first show a simple lower bound that any uniform algorithm using a short list of probabilities of length $l$ (our algorithms in previous sections always used list of length $\min\{\tau, \log\Delta \}$) needs time $\Omega(\sqrt{n}/l)$ for some graphs. Our lower bound uses distributions over sequences of graphs in which the degrees of nodes change by a large number in successive steps. Such large changes in degree turn out to be crucial as we show that if in the sequence taken from the distribution chosen by the adversary, in every step in expectancy only $O(\Delta^{1/(\tau - o(\tau))})$ edges adjacent to each node can be changed then we can get an algorithm working in time $O(\Delta^{1/\tau} \tau \log(1/\epsilon))$ with probability at least $1 - \epsilon$ and using list of probabilities of length $O(\min\{\tau,\log\Delta\})$. % even if the adversary is allowed to use correlations. 

%In more detail,
%we will present some preliminary results and intuitions using procedure $\mathcal{P}$ defined as $\procname(\lceil \log\Delta\rceil,p_1,p_2,\dots,p_{\lceil\log\Delta\rceil})$, where $p_i = 2^{-i}$.
% This procedure is used by our algorithms from Section~\ref{sec:local} whenever $\tau \geq \lceil\log\Delta\rceil$. %It can be also compared with classical \Decay strategy. 
\InConference{\vspace*{-2mm}}
\subsection{A Lower Bound for Correlated Distributions}
\InConference{\vspace*{-1mm}}
%In the following example we show that in the case of arbitrary correlations a local broadcast algorithm that repeats procedure $\mathcal{P}$ consisting of $l$ probabilities requires expected time $\Omega(\sqrt{n}/l)$ in some graph with $\Delta = n - 2$ even if $\tau = \infty$. 

The following lower bound shows that any simple back-off algorithm, similar to the ones presented in Section~\ref{sec:local}, that uses at most $\log\Delta$ probabilities requires time $\Omega(\sqrt{\Delta}/\log\Delta)$ if arbitrary correlations are permitted.
%\InConference{\vspace*{-1mm}}
\begin{proposition}
	Any uniform local broadcast algorithm that repeats a procedure consisting of $l$ probabilities requires expected time $\Omega(\sqrt{\Delta}/l)$ in some graph with $\Delta = n - 2$ even if $\tau = \infty$. 
\end{proposition}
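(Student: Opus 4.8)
The plan is to reuse the single‑receiver star gadget from the proof of Theorem~\ref{thm:lower}: a center $v\in R$ joined to $\Delta$ broadcasters in $B$, exactly one of them by a reliable edge (so $d_t(v)\ge 1$ always) and the remaining $\Delta-1$ by unreliable edges whose presence the adversary controls, with a single auxiliary vertex padding the graph so that the maximum degree is $\Delta=n-2$. Because $\tau=\infty$, the adversary must commit at the outset to one distribution over the whole edge sequence; but since we now permit arbitrary correlations, I take this distribution to be a point mass on a single, carefully chosen deterministic sequence of effective degrees $d_1,d_2,\dots$ for $v$. The only freedom I need is that $d_t(v)$ may be set to any value in $\{1,\dots,\Delta\}$ in each round — turn off all unreliable edges for $1$, turn them all on for $\Delta$ — and that this choice may differ from round to round, which is exactly the feature the independent model forbids.

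The construction exploits the fact that a uniform algorithm's probability in each round is fixed and known in advance. Writing $p$ for the probability used in round $t$, I set $d_t(v)=1$ when $p\le \Delta^{-1/2}$ and $d_t(v)=\Delta$ when $p>\Delta^{-1/2}$, and then bound the one‑round reception probability $\Pro{R_t^{(v)}}=d_t(v)\,p\,(1-p)^{d_t(v)-1}$ in the two regimes. For $p\le\Delta^{-1/2}$ the degree‑$1$ choice gives exactly $p\le\Delta^{-1/2}$; for $p>\Delta^{-1/2}$ the degree‑$\Delta$ choice gives $\Delta p(1-p)^{\Delta-1}\le \Delta p\,e^{-p(\Delta-1)}\le \Delta\,e^{-\sqrt{\Delta}/2}$, which lies below $\Delta^{-1/2}$ once $\Delta$ exceeds an absolute constant (since then $\tfrac32\ln\Delta\le \sqrt\Delta/2$). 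Hence in every round, whatever probability the algorithm picks, the adversary forces $\Pro{R_t^{(v)}}\le \Delta^{-1/2}$. This is precisely the point that between consecutive rounds the degree of $v$ swings between $1$ and $\Delta$: the large degree changes that correlations make possible and that a single fixed marginal cannot reproduce.

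To finish, since the algorithm's coins are independent across rounds and each round delivers the message to $v$ with probability at most $\Delta^{-1/2}$, the probability that $v$ has not received after $T$ rounds is at least $(1-\Delta^{-1/2})^{T}$, so the expected number of rounds until $v$ receives is at least $\sum_{T\ge 0}(1-\Delta^{-1/2})^{T}=\sqrt{\Delta}$. As $v\in R$ must receive for local broadcast to be solved, this already yields expected time $\Omega(\sqrt{\Delta})$, which is at least the claimed $\Omega(\sqrt{\Delta}/l)$ for every cycle length $l$; if one prefers to track the cycle explicitly, bounding each of the $l$ rounds of a cycle by $\Delta^{-1/2}$ and summing over cycles gives the stated form directly.

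The main obstacle is not the arithmetic but making precise why correlations are essential and where the independent model differs. Under independence with $\tau=\infty$ every round draws $d_t(v)$ from one and the same marginal, and Lemma~\ref{lem:upper_2} shows that against any single fixed marginal some probability in a short cycle already succeeds with probability $\Omega(\log\Delta/(\Delta^{1/\bar{\tau}}\bar{\tau}))$; so no single marginal can keep every probability bad, and it is exactly the ability to assign a different degree to each round — legal only because the correlated adversary fixes a joint law over the whole sequence — that pushes the per‑round success down to $\Delta^{-1/2}$. The remaining technical points to verify are that $d_t(v)\in\{1,\Delta\}$ is realizable by activating the appropriate unreliable edges, that $v$ only listens so the degree‑$1$ round contributes exactly $p$, that probabilities $p>1/2$ are covered by the same degree‑$\Delta$ bound (they only increase collisions), and the threshold estimate $(1-p)^{\Delta-1}\le e^{-\sqrt\Delta/2}$ at $p=\Delta^{-1/2}$.
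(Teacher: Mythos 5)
Your star gadget and the per-round case split (degree $1$ against $p\le\Delta^{-1/2}$, degree $\Delta$ against $p>\Delta^{-1/2}$) are exactly the right ingredients, but the adversary you build is not a legal fading adversary at $\tau=\infty$, and this is precisely the point the proposition turns on. In this model the object the adversary commits to for a stable period is the per-round distribution over unreliable edge sets; the generalization in this section is only that the draws in successive rounds may be \emph{correlated}, not that the round-$t$ marginal may depend on $t$. A point mass on a deterministic sequence whose degrees swing between $1$ and $\Delta$ gives round $t$ the marginal ``degree $=d_t$ with probability $1$,'' and since $d_t$ varies with $t$ this adversary changes its distribution every round --- it is the unrestricted $\tau=1$ adversary from the earlier dual-graph lower bounds, exactly what $\tau=\infty$ is supposed to rule out. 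Two symptoms of the overreach: your bound comes out as $\Omega(\sqrt{\Delta})$ with no $1/l$, strictly stronger than the statement, and your construction never uses $l$ or the periodicity of the algorithm at all. The paper's own proof signals the intended reading by explicitly verifying that under its construction ``the distribution of the degree of node $v$ in each step is simply a uniform number from multiset $\{\bar e_1,\dots,\bar e_l\}$'' --- a check that would be pointless if arbitrary time-varying point masses were allowed.

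The missing idea, which is also the source of the $1/l$ factor, is a random phase shift. For each of the $l$ probabilities $p_i$ in the cycle define a target degree $\bar e_i$ equal to $1$ if $p_i\le\Delta^{-1/2}$ and (roughly) $\Delta$ otherwise --- your case split --- and let the adversary play the periodic degree sequence $\bar e_1,\dots,\bar e_l$ shifted by an offset $s$ drawn uniformly from $\{1,\dots,l\}$ once and for all. Every round's marginal is then the uniform distribution over the multiset $\{\bar e_1,\dots,\bar e_l\}$, identical in all rounds, so this is a valid $\tau=\infty$ adversary; all of its power resides in the correlations. With probability $1/l$ the shift aligns the degrees with the algorithm's probabilities exactly as in your argument, so each round succeeds with probability at most $O(\Delta^{-1/2})$ and $\Omega(\sqrt{\Delta}/(2l))$ repetitions of the cycle, i.e.\ $\Omega(\sqrt{\Delta})$ rounds, are needed; the remaining shifts may well be favorable to the algorithm, which is why one only obtains the expected-time bound $(1/l)\cdot\Omega(\sqrt{\Delta})=\Omega(\sqrt{\Delta}/l)$.
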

\InConference{\vspace*{-1mm}}
\begin{proof}
	Denote the procedure that is being used by the algorithm by $\mathcal{P}$. Assume for simplicity that $\sqrt{\Delta}$ is a natural number. We take as a graph a connected pair of stars (a similar graph was used in Theorem~\ref{thm:lower}). 
	
	The fist star has arms $v_1,v_2,\dots,v_{\Delta}$ and center at $u$. In the fist star, arms $v_1,v_2,\dots,v_{\Delta}$ are connected to center $u$ by reliable edges. The second star has arms $v_1,v_2,\dots,v_{\Delta}$ and center at $v$. In the second star, connection from $v_1$ to $v$ is reliable and all other connections are unreliable. Note that by such construction, graph $G$ is connected. All nodes, except $v$, are initially holding a message.
	
		The single distribution is defined in the following way. Let $e_i = \min\{1/p_i, \Delta\}$ for $i=1,2,\dots, l$ be the estimates used by procedure $\mathcal{P}$. Let 
		$$\bar{e}_i = \begin{cases}
		1 \text{ if } e_i \geq \sqrt{\Delta},\\
		n  \text{ if } e_i  < \sqrt{\Delta}.
		\end{cases}$$
		 Let $s$ be a number chosen uniformly at random from $\{1,2,\dots,l\}$. In our distribution, the degree of $v$ in step $t$ is $d_t = \bar{e}_{1 + r_t}$, where $r_t$ is the remainder of $t+s$ modulo $l$. More precisely, in step $t$ in the distribution exactly $d_t - 1$ edges chosen at random among edges between $v$ and $v_2,v_3,\dots,v_{\Delta}$ are activated. Observe that before the algorithm starts, the distribution of the degree of node $v$ in each step is simply a uniform number from multiset $\{\bar{e}_1,\bar{e}_2,\dots,\bar{e}_{l}\}$. But after step $1$ the sequence of degrees of $v$ becomes deterministic and depends only on the value $s$ of the shift. The dependencies are designed in such a way that if $s = l$ (which happens with probability $1/l$) then in any step $t$ of the algorithm, the probability $p_t$ used by the algorithm satisfies either $p_t \cdot d_t \geq \sqrt{\Delta}$ or $p_t \cdot d_t < 1/\sqrt{\Delta}$. This means by Lemma~\ref{lem:prosing} that the success probability is at most $1/\sqrt{\Delta}$ in each step and hence by the union bound the success probability in the whole procedure is at most $l/\sqrt{\Delta}$. Thus with probability at least $1/l$ the algorithm has to repeat procedure $\mathcal{P}$ at least $\sqrt{\Delta}/(2l)$ times to get a constant probability of success. Hence the expected time is $\Omega(\sqrt{\Delta}/l)$.
\end{proof}
\InConference{\vspace*{-2mm}}
\subsection{Locally Limited Changes}
%\InConference{\vspace*{-1mm}}
The previous section shows that under an adversary that is allowed to use arbitrary correlations then any simple procedure need polynomial time in the worst case. 

In this section we want to consider the adversary that can use correlations but cannot change the degree too much in successive steps. Of course once every at most $\tau$ steps the adversary is allowed to define a completely new distribution over the unreliable edges. We want to argue that it is possible to build a simple algorithm resistant to such an adversary. Intuitively the changes of the degree are problematic only if the changes are by a large (non-constant) factor. Note by Lemma~\ref{lem:prosing} that if we perturb the effective degree by only a constant factor then the bound also changes only by a constant factor.  Hence in order to design an algorithm that is immune to such changes we should add more ``coverage'' to the small-degree nodes. We do this by enhancing each phase of algorithm $\algnameone$ with additional steps in which we assume that the effective degree of a node is small. The adversary may try to avoid the successful transmission in these steps by changing the degree (the adversary knows the probabilities used by the algorithm). But having the restriction on the distance the adversary can move the degree allows us to define overlapping ``zones'' such that in two consecutive steps we are sure to find the degree in one of the zones. We also have to make sure that the whole phase of the new algorithm fits into $\tau$ steps.

Now we present algorithm $\algnamethree$ (Robust Local Broadcast with Correlations). We first show that the algorithm works under $(l,\tau)$--deterministic adversary that can change at most $l$ edges adjacent to each node per round and all the edges from $E' \setminus E$ once every at most $\tau$ rounds. Our algorithm will be resistant to deterministic adversary that can change at most $\tau\Delta^{1/(\tau- o(\tau))}$ edges adjacent to each node in every step. 

Then we show that it also works under restricted fading adversary with parameters $\tau$ and $l$. Restricted fading adversary can change the distribution arbitrarily once every at most $\tau$ steps, if the distribution is not changed then the expected change of the degree of any node can be at most $l$. Under these restrictions, the adversary can design arbitrary correlations between successive steps. We show that $\algnamethree$ works with restricted fading adversary with $l$ of at most $\Delta^{1/(\tau- o(\tau))}$. 
%\paragraph{Algorithm}
\InConference{\vspace*{-1mm}}
\begin{algorithm}[h]
	\TitleOfAlgo{\algnamethree$(r,\tau)$}
	$\bar{\tau} = \min\{ \lceil \log_{2e}\Delta / 2 \rceil, \tau \}$ \\
	$a \gets \lceil\bar{\tau}/\log_{2e}{\bar{\tau}}\rceil$ \\
	$k \gets \lceil \Delta^{1/(\tau -2a)} \rceil$ \\ 
    $e_1 \gets k \cdot  a$ \\
    $e_2 \gets k^2 \cdot \tau \cdot a $ \\
	\Repeat($2r$ times){
		\algnameone$(1,\bar{\tau} -2 a )$\\
		\Repeat($a$ times){ 
	     \procname$(1,1/e_1)$ \\
		\procname$(1,1/e_2)$ }
	}
\end{algorithm}
\InConference{\vspace*{-1mm}}
\begin{theorem}
	\label{thm:corr_deter}
	If $\tau \geq 1000$ Algorithm \algnamethree$( 8 e \lceil \ln(1/\epsilon) \Delta^{1/\tau} \rceil ,\tau)$ solves local broadcast in the presence of $\left( \left\lfloor\Delta^{\frac{1}{\tau- 2\lceil \tau/\log_{2e}\tau\rceil }} \right\rfloor \tau / 2,\tau\right)$-deterministic adversary in time $O(\Delta^{1/\tau} \tau \log(1/\epsilon))$ with probability at least $1-\epsilon$.
\end{theorem}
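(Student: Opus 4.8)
The plan is to reduce the theorem to a per-phase statement and then amplify over phases. As in Lemmas~\ref{lem:upper_1} and~\ref{lem:upper_2}, I would first note that it suffices to treat $\tau\le\lceil\log_{2e}\Delta/2\rceil$, so that $\bar\tau=\tau$: for larger $\tau$ the algorithm runs the $\tau=\lceil\log_{2e}\Delta/2\rceil$ schedule, and any schedule correct for a given stability is correct for every larger stability. Each outer iteration of \algnamethree consists of one pass of \algnameone$(1,\bar\tau-2a)$ followed by the $2a$ extra steps, for a total of exactly $\bar\tau\le\tau$ rounds; this is precisely what lets a whole \emph{phase} fit inside a single stability window, so that within a phase the $(l,\tau)$-deterministic adversary may shift each effective degree by at most $l$ per round but may not perform an arbitrary reset. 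I would align phases with the adversary's windows, using the slack $\bar\tau\le\tau$ and treating the at most one reset that can land inside a phase as the start of a fresh phase. The target of the per-phase analysis is to show that each phase delivers the message to a fixed receiver with probability at least $p^\star=\tfrac1{8e}\Delta^{-1/\tau}$.

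Given this per-phase bound the theorem follows by standard amplification. Since the adversary is deterministic, the effective-degree sequence is fixed, so the per-phase bound holds for every admissible sequence and the reception events across phases are dominated by independent trials each succeeding with probability $p^\star$. Running $2r=16e\lceil\ln(1/\epsilon)\Delta^{1/\tau}\rceil$ phases then gives failure probability at most $(1-p^\star)^{2r}\le e^{-2\ln(1/\epsilon)}=\epsilon^2<\epsilon$ (a union bound over the receivers in $R$, with $\epsilon$ rescaled, handles the multi-receiver case), while the running time is $2r\cdot\bar\tau=O(\Delta^{1/\tau}\tau\log(1/\epsilon))$, as claimed. Here I would use that $\tau-2a=\tau-o(\tau)$, since $a=\lceil\bar\tau/\log_{2e}\bar\tau\rceil=o(\tau)$, so that $k=\Delta^{1/(\tau-2a)}=\Delta^{(1+o(1))/\tau}$ is $\Delta^{1/\tau}$ up to the $o(1)$ in the exponent that the statement absorbs.

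The core of the argument, and the step I expect to be hardest, is the per-phase bound. I would control a single step through Lemma~\ref{lem:prosing}: writing $\alpha_t=p_t\,d_t(v)$, the success probability in step $t$ is at least $\alpha_t/(2e)^{\alpha_t}$. The crucial qualitative feature, already noted before Lemma~\ref{lem:upper_2}, is that this quantity decays only \emph{linearly} as $\alpha_t$ falls below $1$ and only \emph{exponentially} as $\alpha_t$ rises above $1$; hence a probe with estimate $E=1/p_t$ achieves success at least $p^\star$ for \emph{every} effective degree in a whole good zone $Z(E)=[\tfrac14\Delta^{-1/\tau}E,\;CE]$ with $C=\Theta(\log\Delta/\tau)$. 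The geometric schedule of \algnameone$(1,\bar\tau-2a)$ places estimates $k,k^2,\dots,\Delta$ whose zones cover the large effective degrees up to a controlled multiplicative gap, while the extra steps probe the additional estimates $e_1=ka$ and $e_2=k^2\tau a$ and repeat $a$ times to add redundant coverage in the remaining band. The movement bound $l=\lfloor\Delta^{1/(\tau-2a)}\rfloor\tau/2$ is then used exactly as the surrounding text suggests: because the effective degree moves by at most $l$ per round while the good zones of the probed estimates overlap, a moving degree cannot slip past all of them within a phase. I would make this precise by arguing that if the degree avoided every probed zone throughout a phase it would have to traverse a zone of absolute width exceeding $2l$ between two consecutive probes of the same estimate, which the movement bound forbids, yielding a contradiction and hence a good step.

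The delicate points, which I would isolate as separate claims, are: (i) the parameter identities making a phase exactly $\bar\tau$ rounds and keeping $p^\star=\Theta(\Delta^{-1/\tau})$ rather than $\Theta(\Delta^{-1/(\tau-2a)})$, which rest on $a=\lceil\bar\tau/\log_{2e}\bar\tau\rceil=o(\tau)$ and on the requirement $\tau\ge 1000$ ensuring $\tau-2a>0$; (ii) the overlapping-zones computation, namely checking that each probed good zone has absolute width at least $2l$ and that consecutive zones (or a zone and its repetition) overlap once the at-most-$l$ shifts are accounted for, which is where the choices $e_1=ka$, $e_2=k^2\tau a$ and the repetition count $a$ must be matched against $l$; and (iii) the handling of phase/window misalignment and of receivers whose effective degree is so small that only the extra small-degree probes are relevant. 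Step (ii) is the heart of the matter; everything else is bookkeeping layered on the amplification scheme above.
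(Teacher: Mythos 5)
Your overall architecture --- reduce to a per-phase success probability of order $\Delta^{-1/\tau}$ (really $1/(8ek)$ with $k=\lceil\Delta^{1/(\tau-2a)}\rceil$), amplify over the $2r$ independent phases, and split the work between the geometric probes of \algnameone and the repeated $e_1,e_2$ probes --- is the same as the paper's. The gap is in the step you yourself flag as the heart of the matter, item (ii): the way you propose to make the trapping argument precise does not work. First, it is false that ``each probed good zone has absolute width at least $2l$'': the good zone of the geometric estimate $k^i$ has absolute width $\Theta(k^i\log_{2e}\Delta/\tau)$, which for small $i$ is far below $2l=k\tau$ whenever $\tau^2\gg\log\Delta$ (and $\tau$ may be as large as $\lceil\log_{2e}\Delta/2\rceil$ here). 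Second, each geometric estimate is probed exactly once per phase, so ``two consecutive probes of the same estimate'' gives you nothing for those steps; only $e_1$ and $e_2$ recur, and consecutive probes of the \emph{same} one of them are two rounds apart, during which the degree may move by $2l$ --- enough to cross the good zone $[1,2l]$ of $e_1$ in exactly the way your argument needs to exclude.

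What actually closes the argument in the paper is a dichotomy on the effective degree at the start of the $\tau$-step window, not a uniform width-versus-drift comparison. If $d_{t_1}(v)\ge 2l^2$, the \emph{total} drift over the window is at most $l\tau\le l^2\le d_{t_1}(v)/2$, so the degree is pinned within a factor $2$ of its initial value for the entire window, and the single geometric probe with $k^i\in[2d_{t_1}(v),\,2k\,d_{t_1}(v))$ already yields $\alpha=p\,d\in[1/(4k),1]$, hence success probability at least $1/(8ek)$ by Lemmas~\ref{lem:prosing} and~\ref{lem:interval}; no zone-width consideration is needed. If instead $d_{t_1}(v)<2l^2$, the degree stays below $4l^2$ throughout, and the trap is the \emph{adjacent-step} pair $e_1$-then-$e_2$: the good zone of $e_1$ reaches up to $2l$ while that of $e_2$ reaches down to $l$, so a one-step move of at most $l$ cannot take the degree from above $2l$ (escaping the $e_1$ probe) to below $l$ (escaping the $e_2$ probe); each of the $a$ pairs therefore succeeds with probability at least $1/(eka)$, and the $a$ repetitions restore the $\Omega(1/k)$ per-phase bound. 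Without this case split and the adjacent-step pairing, your covering argument has no way to handle either the narrow zones of the small geometric estimates or an adversary that simply parks the degree just above $2l$ at every $e_1$ probe.
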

\InConference{
	\begin{proof}[Proof Idea]
		For a fixed receiver $v$ we want to show that the probability that $v$ receives the message in one of the $r$ cycles (each $2$ iterations of loop in Lines $7-11$ is one cycle) is at least $p_s = \frac{1}{8e k}$. We do it by separately considering two cases depending on degree $d_t(v)$, where $t$ is the first step of the considered cycle. If $d_t(v) \geq 2 l^2$ we can show that the degree cannot change in total in this cycle by more than a factor of $2$ (here we use the restriction on the adversary) in which case we can show that in one of the steps of procedure $\algnameone$ the probability of success is at least $p_s$. For smaller degrees $d_t(v) < 2l^2$ we pick $a$ pairs of steps such that in the first step of the pair the algorithm uses probability $1/e_1$ and in the second it uses $1/e_2$. Then we observe that either in the first step of the pair the degree is at most $2l$ in which case broadcasting with probability $1/e_1$ gives probability $p_s / a$ of success. In the opposite case the degree is at least $l$ (here we use the restriction on the adversary) in the second step and broadcasting with probability $1/e_2$ gives probability $p_s / a$ of success. Since we have $a$ such pairs the claim follows.
	\end{proof}
}
\JournalProof{\ProofTen}{
\begin{proof}
	Assume that $\tau \leq \lceil \log_{2e}\Delta / 2 \rceil$ and note that in this case $\bar{\tau}  = \tau$. In the opposite case we use the algorithm for $\tau = \lceil \log_{2e}\Delta / 2 \rceil $ which works also for any larger $\tau$. Denote $k = \lfloor \Delta^{1/(\tau - 2a)} \rfloor$, $l = k\tau /2$ and observe that for $\tau \geq 1000$ we have $a > 200$ and $\tau - 2a \geq \tau /2$ and $k \geq 2$. We divide the time into intervals of length $\tau$, called \textit{cycles}. In each interval algorithm $\algnamethree$ repeats the same probabilities. In the first $\tau -2a$ steps of the cycle it uses probabilities $p_i = k^{-i}$ for $i = 1,2,\dots,\tau -2a$, in the next $2a$ steps it uses probabilities $1/e_1$ and $1/e_2$.  We take two consecutive cycles and note that in each such a pair of cycles we can find $\tau$ consecutive steps in which the distribution over the unreliable edges is the same (since global changes can happen at most once every $\tau$ steps) and moreover the algorithm uses all the probabilities from a cycle. Let us call a sequence of these steps $T = [t_1,\dots,t_{\tau}]$. Note that in this sequence we have either one full procedure \algnameone$(1,\tau - 2a)$ or parts of two procedures \algnameone$(1,\tau - 2a)$  (call them $R_1$ and $R_2$). In the second case sequence $T$ contains some suffix of $R_1$ and some prefix of $R_2$. Connect these steps together into a procedure $R$, which contains all steps of procedure \algnameone$(1,\tau-2a)$ executed in a possibly different order. 
	Fix a receiver $v$ and assume that at least one reliable neighbor of $v$ tries to transmit a message to $v$. We want to show that in each such a pair of cycles $v$ receives the message independently with probability at least $p_s = \frac{1}{8e k}$.
	
	%We are interested in the \emph{effective degree} $d_i(v)$ in steps $t_i$ for $i=1,2,\dots,\tau$ which is defined as the number of nodes $w$ in step $t_i$ such that $w$ tries to transmit a message and $\{v,w\} \in E_{t_i}$ (i.e., $w$ is connected to $v$ via, a possibly unreliable link that is active in step $t_i$). 
	We know by the definition of the adversary that the effective degree cannot change by too much between steps in the same cycle: $|d_{t_i}(v) - d_{t_{i} + 1}(v)| \leq l$. We can consider two cases depending on the effective degree in the first considered step $t_1$:
	\subparagraph{Case 1:} $d_{t_1}(v) \geq 2 l^2$ \\
	Here we want to show that procedure $R$ is successful with probability at least $p_s$.
	Observe that here since $l \geq \tau$ we have $d_{t_i}(v) \geq d_{t_i}(v)/2 +   l^2  \geq d_{t_i}(v) /2 + l\tau $ for each $i = 1,2,\dots,\tau$. Thus $d_{t_i}(v) - l\tau \geq d_{t_i}(v)/2$  and $d_{t_i}(v) +l\tau \leq  2 d_{t_i}(v)$ thus the effective degree in the whole considered sequence of steps can change by a factor of at most $2$. Recall from the definition of $\algnameone$ that It uses probabilities $p_i = k^{-i}$. Consider the smallest $i$ such that $1/p_i \geq 2 d_{t_1}(v)$ by the minimality of $i$ we have that $1/(kp_i) \leq 2 d_{t_1}(v)$. Probability $p_i$ is used in some step of sequence $T$. Call this step $t_j$. We have:
	\[
	1/p_i \geq 2d_{t_1}(v) \geq d_j(v) \geq d_{t_1}(v)/2 \geq 1/(4kp_i).
	\]  
	Thus by Lemmas~\ref{lem:prosing} and~\ref{lem:interval}:
	\[
	\Pro{R_{t_j}^{(v)}} = \min\left\{ (2e)^{-1}, \frac{(2e)^{-1/(4k)}}{4k} \right\} \geq \frac{1}{8ek} = p_s.
	\]
%	the probability that $v$ receives a successful transmission in step $t_j$ is at least:
%	\[
%	p_i d_j(v) (1-p_i)^{d_j(v) - 1} \geq (1-p_i)^{d_j(v) }/(8k) \geq e^{-d_j(v) p_i}/(8k) \geq 1/(8ek) = p_s
%	\]
	\subparagraph{Case 2:}  $d_{t_1}(v) <2 l^2 $ \\
	Here we want to show that a successful transmission occurs with probability at least $p_s$ in one of the $2a$ additional steps (see lines $7-11$ of the pseudocode). 
	
	Note that since $d_1(v) < 2l^2$ then $d_{t_i}(v) \leq d_{t_1}(v) + l\tau \leq 4 l^2$
	Pick two consecutive steps $t_i,t_i+1$ such that in step $t_i$ the algorithm uses probability $1/e_1$ and in $t_i+1$ it uses $1/e_2$. Note that in the considered sequence we have at least $a-1$ such pairs. 
	\subparagraph{Case 2.1: $d_{t_i}(v) \leq 2l$}
	Here the probability is $1/e_1$ and the degree is within interval $[1,2l]$ hence we have that:
	\[
   \frac{\tau}{a} = \frac{2l}{e_1} \geq \frac{d_{t_i}}{e_1} \geq \frac{1}{e_1}.
	\]
	By Lemma~\ref{lem:prosing}~\ref{lem:interval}:
	\begin{align*}
	\Pro{R_{t_i}^{(v)}} & \geq \min\left\{\frac{2l}{e_1}e^{-2l/e_1},  1/e_1 e^{-1/e_1}\right\}\geq \min\left\{\frac{\ln\tau}{\tau}, \frac{1}{eka}\right\} \geq \frac{1}{eka}.
	\end{align*}
	\subparagraph{Case 2.2: $4l^2 \geq d_{t_i}(v) > 2l$}
	Note that in this case $d_{t_{i}+1} (v)  \in [l,4l^2]$ and the probability used in this step is $1/e_2$ hence:
		\[
	\frac{\tau}{a} =\frac{4l^2}{e_2} \geq \frac{d_{i+1}(v)}{e_2} \geq \frac{2l}{e_2} = \frac{1}{e_1},
	\]
	and we can use Lemmas~\ref{lem:prosing} and~\ref{lem:interval} exactly the same as in the previous case and obtain $\Pro{R_{t_i +1}^{(v)}} \geq \frac{1}{eka}$.
	
	In each pair the stations are making independent choices hence the probability of failure in all the pairs is by Lemma~\ref{lem:wpi} at most:
	
	\[
	\left(1-\frac{1}{eka}\right)^{a-1} \leq 1 - \frac{a-1}{eka} + {a-1 \choose 2}\frac{1}{e^2 k^2 a^2} \leq 1-\frac{1}{2ek},
	\]
	where in the last inequality we used the fact that $a > 20$. Thus also in this case with probability at least $1/(2ek) \geq p_s$ node $v$ receives a message during this cycle. 
	
	The two considered cases showed that any full two cycles deliver the message with probability at least $p_s$. If we perform at least $2r = 2\lceil \ln(1/\epsilon)/p_s \rceil = O(\Delta^{1/\tau})$ cycles then the probability that $v$ does not receive a message is at most 
	$
	(1-p_s)^{\ln(1/\epsilon)/p_s} \leq \epsilon.
	$
\end{proof}}
\InConference{\vspace*{-1mm}}
The case with deterministic adversary can be generalized to stochastic restricted adversary.
%\InConference{\vspace*{-1mm}}
\begin{theorem}
	\label{thm:corr_random}
		If $\tau \geq 1000$ Algorithm \algnamethree$(  16 e \lceil \ln(1/\epsilon) \Delta^{1/\tau} \rceil ,\tau)$ solves local broadcast in the presence of 
		$l$-restricted fading adversary using correlations with $l = \left\lfloor \Delta^{\frac{1}{\tau (1-1/\log_{2e}\tau)}} \right\rfloor / 4$ in time $O(\Delta^{1/\tau} \tau \log(1/\epsilon))$ with probability at least $1-\epsilon$.
\end{theorem}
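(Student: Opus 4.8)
The plan is to reduce the restricted (stochastic) fading case to the deterministic case already settled in Theorem~\ref{thm:corr_deter}, at the cost of only a constant factor in the per-block success probability; this constant is exactly what forces the repetition count to double from $8e\lceil\cdots\rceil$ (deterministic) to $16e\lceil\cdots\rceil$ here. The algorithm \algnamethree{} is unchanged, so I retain all of its internal quantities, $a=\lceil\bar\tau/\log_{2e}\bar\tau\rceil$, $k=\lceil\Delta^{1/(\tau-2a)}\rceil$, $e_1=ka$, $e_2=k^2\tau a$, and I write $L=k\tau/2$ for the worst-case per-step change bound that the \emph{deterministic} adversary of Theorem~\ref{thm:corr_deter} is forced to respect. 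As in that proof I assume $\tau\le\lceil\log_{2e}\Delta/2\rceil$ (so $\bar\tau=\tau$), fix a receiver $v$ with at least one reliable transmitting neighbor, and isolate inside any two consecutive cycles a block $T=[t_1,\dots,t_\tau]$ of $\tau$ consecutive steps that all share a single distribution and over which \algnamethree{} issues every probability of a cycle (the same block extracted in the proof of Theorem~\ref{thm:corr_deter}).

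The one genuinely new ingredient—and the step I expect to be the main obstacle—is to show that with constant probability over the adversary's randomness the realized degree sequence on $T$ obeys the \emph{same} per-step bound $L$ that the deterministic adversary must obey, even though the restricted fading adversary is only controlled in expectation. Since the distribution is fixed throughout $T$, the restriction gives $\Pro{|d_{t_i+1}(v)-d_{t_i}(v)|}\le l$ for each of the at most $\tau-1$ consecutive transitions in $T$, where $l=\lfloor\Delta^{1/(\tau(1-1/\log_{2e}\tau))}\rfloor/4$. The whole point of this particular value of $l$ is the chain of inequalities $4l\le\Delta^{1/(\tau(1-1/\log_{2e}\tau))}\le\Delta^{1/(\tau-2a)}\le k$, where the middle step uses $2a\ge\tau/\log_{2e}\tau$; hence $l\le k/4=L/(2\tau)$. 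Markov's inequality applied to each transition then yields $\Pro{|d_{t_i+1}(v)-d_{t_i}(v)|>L}\le l/L\le 1/(2\tau)$, and a union bound over the transitions shows that the event $G=\{\,|d_{t_i+1}(v)-d_{t_i}(v)|\le L\text{ for every transition in }T\,\}$ satisfies $\Pro{G}\ge 1/2$. The delicate point is that a naive union bound would be vacuous unless $l$ is polynomially smaller than $L$, which is precisely the slack built into the exponent $1/(\tau(1-1/\log_{2e}\tau))$.

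Conditioned on $G$, every realization of the degree sequence on $T$ meets exactly the hypothesis driving the proof of Theorem~\ref{thm:corr_deter}, namely single-step changes bounded by $L=k\tau/2$, so that proof's Case~1 / Case~2 dichotomy applies verbatim, realization by realization. Because the nodes' transmission coins are independent of the adversary's edge draws, conditioning on the adversary-measurable event $G$ leaves the algorithm's randomness untouched, and for each such realization the deterministic argument (via Lemmas~\ref{lem:prosing} and~\ref{lem:interval}) delivers a message to $v$ during the two cycles with probability at least $p_s=1/(8ek)$ over the algorithm's coins; averaging over the realizations in $G$ gives a conditional success probability at least $p_s$, whence the unconditional per-two-cycle success probability is at least $\Pro{G}\cdot p_s\ge 1/(16ek)$. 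This lower bound holds for whatever restriction-obeying distributions the adversary commits to at that moment, so by the standard conditional (martingale-style) argument the probability that $v$ is never served across the $r$ disjoint two-cycle blocks run by $\algnamethree(16e\lceil\ln(1/\epsilon)\Delta^{1/\tau}\rceil,\tau)$ is at most $(1-1/(16ek))^{r}<\epsilon$, using $k=\Delta^{1/(\tau(1-o(1)))}=O(\Delta^{1/\tau})$ in the stated regime and the $2\tau$ steps per two-cycle to recover the $O(\Delta^{1/\tau}\tau\log(1/\epsilon))$ running time. Finally, running with error parameter $\epsilon/n$ and taking a union bound over the at most $n$ receivers in $R$ completes local broadcast with probability at least $1-\epsilon$.
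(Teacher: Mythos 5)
Your proposal is correct and follows essentially the same route as the paper's own proof: apply Markov's inequality to each of the at most $\tau$ degree transitions within the stable block, union-bound to get the bounded-change event with probability at least $1/2$, and then invoke the deterministic analysis of Theorem~\ref{thm:corr_deter} conditioned on that event, absorbing the factor $2$ into the doubled repetition count. Your version is somewhat more explicit than the paper's (spelling out the chain $4l\le\Delta^{1/(\tau-2a)}\le k$ and the independence of the nodes' coins from the adversary-measurable conditioning event), but the argument is the same.
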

\InConference{\vspace*{-1mm}}
\InConference{
	\begin{proof}[Proof Idea]
		We show that if an algorithm works with $2l\tau$-deterministic adversary then it also works with $l$-stochastic adversary with correlations. We note that by Markov's inequality with probability at least $1/(2\tau)$ the degree of the receiver changes by at most $2l\tau$. By the union bound with probability at least $1/2$, the degree does not change by more then $2l\tau$ throughout the whole cycle of length $\tau$. For such cycles, the analysis of the deterministic case gives us probability $p_s$ of success. Thus in the stochastic case the probability of success in each cycle is at least $p_s/2$.
	\end{proof}
}
\JournalProof{\ProofEleven}{
\begin{proof}
	Fix any receiver $v$. We know that \algnamethree$(  8 e \lceil \ln(1/\epsilon) \Delta^{1/\tau} \rceil ,\tau)$ solves local broadcast in the presence of $(l\tau,\tau)$ - deterministic adversary. But in the case with arbitrary correlations we can still bound the probability that the degree of $v$ does not change too much. Take any two consecutive steps $t,t+1$. We have by Markov Inequality:
	\[
	\Pro{|d_{t}(v) - d_{t+1}(v)| > 2 \tau l} \leq 1/(2 \tau) 
	\]
	If we pick $\tau$ steps like in the proof of Theorem~\ref{thm:corr_deter} then by the union bound with probability at least $1/2$ in each of these steps the degree changes by at most $2l\tau$. From now on we can use the same analysis as in Theorem~\ref{thm:corr_deter} and we obtain only a constant slowdown compared to the case with deterministic adversary. Hence \algnamethree$(  16 e \lceil \ln(1/\epsilon) \Delta^{1/\tau} \rceil ,\tau)$ solves local broadcast with restricted fading adversary with probability at least $1-\epsilon$.
\end{proof}
}

\paragraph{Acknowledgements}
Many thanks to William Kuszmaul and Zachary Newman for helpful comments on an earlier version of this manuscript.

\InConference{\vspace*{-1mm}}
\bibliographystyle{plainurl}
\bibliography{Local}

%\subsection{Assumptions}
%We assume that the adversary is restricted in the following way. For some parameter $\tau$:
%\begin{enumerate}
%	\item $\mathcal{E}_i$ and $\mathcal{E}_{i+\tau}$ are independent
%	\item if for some $t$, $\mathcal{E}_{t-1} \neq \mathcal{E}_{t}$, then $\mathcal{E}_{t} = \mathcal{E}_{t+1} = \dots = \mathcal{E}_{t+\tau-1}$
%\end{enumerate}
%This means that the distribution has to be the same for at $\tau$ consecutive rounds and the next distribution has to be independent. However we do not assume anything about the correlations between $E_t,\dots,E_{t+\tau-1}$. In particular adversary may choose to set $E_t = \dots = E_{t+\tau-1}$. 
%\ifConferenceVersion
%\newpage
%\section*{Appendix}
%\renewcommand{\InConference}[1]{}
%\renewcommand{\InJournal}[1]{#1}
%\vspace{4mm}
%
%\subsection*{Proof of Lemma~\ref{lem:upper_2}}
%\ProofThree
%
%\subsection*{Proof of Theorem~\ref{thm:lower}}
%\ProofFive
%
%\subsection*{Proof of Theorem~\ref{thm:lower_2}}
%\ProofSix
%
%\subsection*{Proof of Theorem~\ref{thm:global_upper}}
%\ProofSeven
%
%\subsection*{Proof of Theorem~\ref{thm:global_lower}}
%\ProofEight
%
%\subsection*{Proof of Theorem~\ref{thm:corr_deter}}
%\ProofTen
%
%\subsection*{Proof of Theorem~\ref{thm:corr_random}}
%\ProofEleven
%\fi 

\end{document}